\newcommand*\N{\mathbb{N}}
\newcommand*\R{\mathbb{R}}
\newcommand*\cupdot{\mathbin{\mathaccent\cdot\cup}}
\newcommand*\inproduct[2]{\left\langle#1,#2\right\rangle}
\newcommand*\norm[1]{\left\lVert #1 \right\rVert}
\newcommand*\one{\mathbf 1}
\newcommand*\rfrac[2]{{}^{#1\!\!}/_{\!#2}}
\newcommand*\vol{\mbox{vol}}
\numberwithin{equation}{section}
\newtheorem{thm}{Theorem}[section]
\newtheorem{pro}[thm]{Proposition}
\newtheorem{lem}[thm]{Lemma}
\theoremstyle{definition}
\newtheorem{dfn}[thm]{Definition}
\newtheorem{rem}[thm]{Remark}
\newtheorem{exm}[thm]{Example}
\begin{document}

\title{High Dimensional Random Walks and Colorful Expansion}
\author{
	Tali Kaufman\thanks{Bar-Ilan University, ISRAEL. Email: \texttt{kaufmant@mit.edu}. Research supported in part by ERC and BSF.} \and
	David Mass\thanks{Bar-Ilan University, ISRAEL. Email: \texttt{dudimass@gmail.com}.}}
\maketitle

\begin{abstract}
	Random walks on bounded degree expander graphs have numerous applications, both in theoretical and practical computational problems. A key property of these walks is that they converge rapidly to their stationary distribution.
	
	In this work we {\em define high order random walks}: These are generalizations of random walks on graphs to high dimensional simplicial complexes, which are the high dimensional analogues of graphs. A simplicial complex of dimension $d$ has vertices, edges, triangles, pyramids, up to $d$-dimensional cells. For any $0 \leq i < d$, a high order random walk on dimension $i$ moves between neighboring $i$-faces (e.g., edges) of the complex, where two $i$-faces are considered neighbors if they share a common $(i+1)$-face (e.g., a triangle). The case of $i=0$ recovers the well studied random walk on graphs.
	
	We provide a {\em local-to-global criterion} on a complex which implies {\em rapid convergence of all high order random walks} on it. Specifically, we prove that if the $1$-dimensional skeletons of all the links of a complex are spectral expanders, then for {\em all} $0 \le i < d$ the high order random walk on dimension $i$ converges rapidly to its stationary distribution.
	
	We derive our result through a new notion of high dimensional combinatorial expansion of complexes which we term {\em colorful expansion}. This notion is a natural generalization of combinatorial expansion of graphs and is strongly related to the convergence rate of the high order random walks.
	
	We further show an explicit family of {\em bounded degree} complexes which satisfy this criterion. Specifically, we show that Ramanujan complexes meet this criterion, and thus form an explicit family of bounded degree high dimensional simplicial complexes in which all of the high order random walks converge rapidly to their stationary distribution.
\end{abstract}

\section{Introduction}
Expander graphs play a fundamental role in computer science, mathematics, physics and more (see e.g.~\cite{HLW06} and~\cite{Lub12}). In particular, random walks on bounded degree expander graphs have been proven very useful for many applications. A key property of these walks is that they converge rapidly to their stationary distribution. This property allows one to efficiently sample points from some huge space while investing only a few random bits, which is necessary in many theoretical and practical computational problems.

In recent years a generalization of expander graphs to higher dimensions has emerged. The high dimensional analogue of a graph is called a {\em simplicial complex}. It can be viewed as a hypergraph with a closure property, namely, for any hyperedge in the hypergraph all of its subsets are also in the hypergraph. An hyperedge of size $i$ is called an {\em $(i-1)$-face} of the complex. Thus, the vertices of the complex are the $0$-faces, the edges are the $1$-faces, the triangles are the $2$-faces, etc. A complex is termed {\em bounded degree} if the number of faces of any dimension which are incident to any vertex is bounded by a constant, independent of the number of vertices in the complex.

\paragraph{On high dimensional expanders.} The study of high dimensional expanders gains a lot of interest recently. They were proven to be a bridge between deep questions in topology (such as topological overlapping, see~\cite{Gro10}) and important questions in theoretical computer science (such as property testing and error-correcting codes, see~\cite{KL14}). The interested reader is referred to~\cite{Lub14} for a recent survey on high dimensional expanders. The general hope is that high dimensional expanders could be used in places where expander graphs were not good enough.

Expander graphs have had an enormous effect on computer science. However, there are some striking applications in which expanders yield good results but not good enough. For example, by the celebrated result of Sipser and Spielman~\cite{SS96}, expanders imply good error-correcting codes. However, the codes obtained by~\cite{SS96} are not locally testable, namely, the very desired property of local testability (which in fact was the motivating goal in the construction of error-correcting codes from expanders, as discussed in Spielman's thesis~\cite{Spi95}) is not known to be achieved by codes based on expanders. 

Another canonical example where we have seen the limit of expanders is the new proof of the PCP theorem obtained by Dinur~\cite{Din07}. Dinur's proof builds on properties of expanders to yield a simpler proof of the celebrated PCP theorem. However, the PCP obtained by expanders is not strong enough to yield good inapproximation results.

The thesis we try to pursue here is that high dimensional expanders could potentially be used in applications as above, where (one-dimensional) expanders were not strong enough. We expect high dimensional expanders to yield better results than the well studied (one-dimensional) expanders since they are stronger objects: They expand at {\em all} dimensions, i.e., with respect to vertices, edges, triangles, etc., and not just with respect to vertices as expander graphs.

Dinur's proof of the PCP theorem builds heavily on random walks on expander graphs. In the following we study high dimensional analogues of random walks in high dimensional simplicial complexes.

\paragraph{On high dimensional random walks.} In this paper we define high order random walks on simplicial complexes. Our definition generalizes the notion of random walks on graphs to high dimensional simplicial complexes.

For a $d$-dimensional simplicial complex, we define the high order random walk on it for any dimension $0 \le i < d$. This walk moves at random between neighboring $i$-faces (e.g., edges) of the complex, where two $i$-faces are considered neighbors if they share a common $(i+1)$-face (e.g., a triangle).

The question we address in this paper is the following.

\paragraph{Question.}{\em Are there bounded degree high dimensional simplicial complexes in which all of the high order random walks converge rapidly to their stationary distribution}.
\bigskip

In a nutshell, our answer is the following. We provide a local criterion on a complex which implies the rapid convergence of all high order random walks on it. Moreover, we show an explicit family of bounded degree complexes which satisfy that criterion, and hence are bounded degree high dimensional simplicial complexes in which all of the high order random walks converge rapidly to their stationary distribution.

The convergence rate of random walks on graphs is known to be deduced from the spectrum of the graph's adjacency matrix (for more about random walks on graphs see~\cite{Lov93}). Similarly, for a $d$-dimensional simplicial complex and any $0 \le i < d$ we define the matrix $A_i$ to be the adjacency matrix of dimension $i$ of the complex. This matrix is indexed by the $i$-faces of the complex, where $A_i(\sigma,\tau)$ indicate if $\sigma$ and $\tau$ are neighbors. The question we address here is how one could construct a bounded degree $d$-dimensional simplicial complex (for any $d \in \N$) whose {\em all} adjacency matrices $A_i$, $0 \le i < d$, have a concentrated spectrum, and thus its associated high order random walks (on any of its dimensions) converge rapidly to their stationary distribution.

\paragraph{Bounded versus unbounded.} It is easy to obtain an unbounded degree graph in which the random walk will converge rapidly to its stationary distribution. This case is less useful as in most applications it is crucial to have a constant bound on the number of neighbors of any vertex in the graph. In a same way, the case of having an unbounded degree complex so the high order random walks on it will converge rapidly to their stationary distribution is less interesting. The complete high dimensional analogue of the most useful random walks on expander graphs are {\em bounded degree} high dimensional simplicial complexes in which all of the high order random walks converge rapidly to their stationary distribution.

\paragraph{Better than random.} In the case of graphs, a random walk on a random bounded degree graph is known to converge rapidly to its stationary distribution. The constructions of explicit expander graphs seem to try and simulate the behavior of these random objects as much as possible, i.e., the philosophy is that a random graph is the best one can get and with explicit constructions we try to get as close as we can to random graphs. In the case of high dimensional simplicial complexes, a high order random walk on a random bounded degree complex {\em does not} converge rapidly to its stationary distribution. Thus, this phenomenon we study here is very special as it is something we {\em cannot achieve with a random object}.

\paragraph{Colorful expansion.} Our method in asserting the rapid convergence of the high order random walks is as follows. We introduce a new notion of high order combinatorial expansion of complexes which we term colorful expansion. A $d$-dimensional simplicial complex is said to be a colorful expander if for any $0 \le i < d$ and any subset $S$ of $i$-faces there is a large set of {\em expanding} $(i+1)$-faces, i.e., faces which are hit by $S$ but are not fully covered by $S$. The term colorful expansion comes from the fact that if we color the faces in $S$ with one color and the faces not in $S$ with another color, then we get many colorful $(i+1)$-faces. This notion is of an independent interest as it is a natural generalization of combinatorial expansion of graphs to higher dimensions, which is not implied by previously studied notions such as coboundary or cosystolic expansion (see ~\cite{EK16}). We show that colorful expansion of a complex implies that {\em all} of its adjacency matrices $A_i$, $0 \le i < d$, have a concentrated spectrum, and hence all of the high order random walks on it converge rapidly to their stationary distribution.

We provide a local-to-global criterion on a complex which implies colorful expansion. For any face $\sigma$ in the complex, its {\em link} is the local view of $\sigma$, which is obtained by taking all the faces containing $\sigma$ and removing $\sigma$ from all of them. We prove that if the $1$-dimensional skeleton (i.e., the underlying graph) of every link in the complex 
is a spectral expander graph, then the entire complex is a colorful expander.

We conclude by showing that Ramanujan complexes satisfy this criterion, and hence are colorful expanders. We then use the explicit construction of Ramanujan complexes from~\cite{LSV05.2} in order to achieve an explicit family of bounded degree complexes in which all of the high order random walks converge rapidly to their stationary distribution.

\subsection{Expander graphs and random walks}\label{expanders-and-random-walks-section}
Let $G=(V,E)$ be an undirected graph. Denote by $A$ the adjacency matrix of $G$, where $A(u,v)$ equals the number of edges between $u$ and $v$ (we allow multiple edges). The {\em degree} of a vertex is the number of edges containing it. Throughout this section assume for simplicity that $G$ is $k$-regular, i.e., the degree of any vertex is $k$ (we deal with the non-regular case at Section~\ref{non-regular-section}). The {\em Cheeger constant} of $G$ is defined as
$$h(G) = \min_{\substack{\emptyset \ne S \subset V \\ |S| \le |V|/2}}\frac{|E(S,\bar S)|}{k|S|},$$
where $E(S,\bar S)$ is the set of edges with one endpoint in $S$ and one endpoint in $\bar S$.

If $h(G) \ge \epsilon$ for some constant $\epsilon > 0$, then $G$ is said to be an {\em $\epsilon$-combinatorial expander}. A family of graphs $\{G_i\}_{i \in \N}$ is called a family of $\epsilon$-combinatorial expanders if there exists a constant $\epsilon > 0$ such that $h(G_i) \ge \epsilon$ for every $i \in \N$. Intuitively it means that the graph is very well connected, and hence the random walk on it is not likely to ``get stuck'' in a small subset of vertices.

A {\em random walk} on $G$ can be described by the following process. We start at a random vertex $v_0 \in V$. Then, if after $t$ steps we are at vertex $v_t$, we choose one of the edges containing $v_t$ uniformly at random and move to its other endpoint. Thus, for any $u,v \in V$
$$\Pr[v_{t+1} = v \;|\; v_t = u] = \frac{A(u,v)}{k}.$$

Denote by $\pi_0 \in \R^V$ the probability distribution from which $v_0$ is chosen. Then the random walk on $G$ yields a sequence of probability distributions $\pi_0, \pi_1, \dotsc \in \R^V$. The transition from $\pi_t$ to $\pi_{t+1}$ is given by $\pi_{t+1} = \pi_t \widetilde A$, where $\widetilde A = (\rfrac{1}{k})A$ is the {\em normalized adjacency matrix} of $G$. Note that $\widetilde A(u,v)$ is the probability to move from $u$ to $v$ in a single step, and $\widetilde A^t(u,v)$ is the probability that a walk starting at $u$ will be at $v$ after $t$ steps. It follows that for any $t \in \N$, $\pi_t = \pi_0\widetilde A^t$.

Denote by $\mathbf u = (\rfrac{1}{|V|}, \dotsc, \rfrac{1}{|V|})$ the uniform distribution on the vertices. If $G$ is connected and non-bipartite, then $\pi_0\widetilde A^t \rightarrow \mathbf u$ as $t \rightarrow \infty$ for any initial probability distribution $\pi_0 \in \R^V$. The random walk on $G$ is said to be {\em $\mu$-rapidly mixing}, $0 < \mu < 1$, if for any initial probability distribution $\pi_0 \in \R^V$ and any $t \in \N$
$$\norm{\pi_t - \mathbf u}_2 \le \mu^t.$$

The common way to deduce the mixing rate of a random walk is by the spectrum of the normalized adjacency matrix of $G$. Denote by $1 = \lambda_1 \ge \lambda_2 \ge \dotsb \ge \lambda_{|V|} \ge -1$ the eigenvalues of $\widetilde A$. Then the random walk on $G$ is $\mu$-rapidly mixing for $\mu \le \max\{|\lambda_2|, |\lambda_{|V|}|\}$. (A proof for this assertion can be found in many places, see~\cite{HLW06} for instance.)

Here comes the relation between expanders and rapid mixing of random walks. Since combinatorial expanders are very strongly connected, we expect that random walks on them would mix rapidly. This is indeed true and it is given formally by Cheeger's inequality~\cite{Alo86} which states that
$$\lambda_2 \le 1 - \frac{h(G)^2}{2}.$$
Thus, it suffices to show that a graph is an $\epsilon$-combinatorial expander for some constant $\epsilon > 0$ in order to deduce that the random walk on it converges rapidly to the uniform distribution.

\begin{rem}\label{bound-lambda-n-rem}
	The cautious reader would note that combinatorial expansion gives us a bound only on $\lambda_2$, but for rapid mixing we need to bound $\lambda_{|V|}$ as well. This is not a problem since $\lambda_{|V|} = -1$ if and only if the graph is bipartite. Thus, in most applications it is possible just to add self-loops at each vertex which ``destroys the bipartiteness'' of the graph and does not slow down the convergence rate too much. Nevertheless, we show at section~\ref{trevisan-section} that when dealing with high order random walks the smallest eigenvalue of the normalized adjacency matrix is bounded away from $-1$. In any case, combinatorial expansion is the crucial condition for rapid mixing of random walks.
\end{rem}

\subsection{High dimensional simplicial complexes}
A {\em simplicial complex} $X$ with a set of vertices $V$ is a collection of subsets of $V$ with a closure property, namely, for any subset $\sigma \in X$, all of its subsets $\tau \subset \sigma$ are also in $X$. Each such subset is called a {\em face} of the complex. The {\em dimension} of a face $\sigma \in X$ is one less than the number of vertices in it, i.e., $\dim(\sigma)=|\sigma|-1$.
We also define the empty set as a face of the complex, which is the unique face of dimension $-1$. The dimension of the entire complex is defined as the maximal dimension of a face in it. The complex is said to be {\em pure} if for any face $\sigma \in X$ with $\dim(\sigma) < \dim(X)$ there is a face of maximal dimension $\tau \in X$, $\dim(\tau)=\dim(X)$, such that $\sigma \subset \tau$. We only deal with pure simplicial complexes.

We denote by $X(i)$ the faces of the complex of dimension $i$, which are called in short the $i$-faces of the complex. In the terminology of simplicial complexes, an {\em $i$-cochain} is a function $W:X(i) \rightarrow \{0,1\}$. For us it is convenient to view an $i$-cochain as defining a subset of $i$-faces, where $\sigma \in W \Leftrightarrow W(\sigma) = 1$. The space of all the $i$-cochains is denoted by $C^i(X)$.

The {\em degree} of a face $\sigma \in X$ is the number of faces of maximal dimension which contain $\sigma$. Again, we assume for simplicity that the complex is regular, i.e., for any $0 \le i < d$ and any two $i$-faces $\sigma, \tau \in X(i)$ it holds that $\deg(\sigma)=\deg(\tau)$ (we deal with non-regularity at Section~\ref{non-regular-section}). For any $i$-cochain $W \in C^i(X)$ we define its norm to be the fraction of $i$-faces which are in $W$ (this definition will be refined at Section~\ref{non-regular-section}), i.e.,
$$\norm{W} = \frac{|W|}{|X(i)|}.$$

The {\em link} of a face $\sigma \in X$, denoted by $X_\sigma$, is the complex obtained by taking all the faces which contain $\sigma$, and removing $\sigma$ from all of them. Intuitively, it is the local view of $X$ ``from the eyes of $\sigma$''. Formally it is given by $X_\sigma = \{\tau \setminus \sigma \;|\; \tau \in X,\, \tau \supseteq \sigma\}$. If $X$ is of dimension $d$, then $X_\sigma$ is a complex of dimension $d-|\sigma|$. We denote by $\norm{\,\cdot\,}_\sigma$ the norm associated with the link of $\sigma$.

Let us demonstrate the above definitions with a simple example.
\begin{exm}
	Let $X$ be a $2$-dimensional simplicial complex. The complex contains faces of dimensions $0,1$ and $2$, which can be viewed as vertices, edges and triangles. $X(0)$ denotes the vertices, $X(1)$ the edges, and $X(2)$ the triangles. The degree of any vertex $u \in X(0)$ is the number of triangles containing it, i.e., $\deg(u) = |\{t \in X(2) \;|\; u \in t\}|$. The link of a vertex $u \in X(0)$ is a $1$-dimensional simplicial complex defined as follows. Any edge of the form $\{u, v\} \in X(1)$ becomes a vertex in the link, i.e., $v \in X_u(0)$, and any triangle of the form $\{u, v, w\} \in X(2)$ becomes an edge in the link, i.e., $\{v, w\} \in X_u(1)$. Thus, the link of $u$ contains vertices and edges, which correspond to edges and triangles of the complex, respectively.
\end{exm}

\subsection{Our contribution}
In this work we introduce a new notion of random walks on high dimensional simplicial complexes. Our main result is a local criterion on a complex which implies the rapid mixing of all of its associated high order random walks. In order to assert their rapid mixing we introduce a new definition for combinatorial expansion of high dimensional simplicial complexes, which we term colorful expansion. We prove that this local criterion on a complex implies colorful expansion, which in turn implies the rapid convergence of all of the high order random walks on it. We then show an explicit construction of bounded degree complexes which satisfy this criterion, and hence are bounded degree complexes in which the high order random walks converge rapidly to their stationary distribution.

\subsubsection{High order random walks}
Let $X$ be a $d$-dimensional simplicial complex. We generalize the notion of random walks on graphs to higher dimensions by letting the walk to move on any dimension of the complex. For any $0 \le i < d$ we define the walk on dimension $i$ of the complex as follows. We start from an initial $i$-face $\sigma_0 \in X(i)$. Then, if after $t$ steps we are at $\sigma_t$, the process of choosing $\sigma_{t+1}$ can be described by the following two steps:
\begin{enumerate}
  \item Choose an $(i+1)$-face $\tau \supset \sigma_t$ uniformly at random.
  \item Choose an $i$-face $\sigma_{t+1} \subset \tau$, $\sigma_{t+1} \ne \sigma_t$, uniformly at random and move to it.
\end{enumerate}

We say that $\sigma, \sigma' \in X(i)$ are neighbors, denoted by $\sigma \sim \sigma'$, if they share a common $(i+1)$-face, i.e., $\sigma \cup \sigma' \in X(i+1)$. In order to analyze this walk, we define the following auxiliary graphs.
\begin{dfn}
	Let $X$ be a $d$-dimensional simplicial complex. For any $0 \le i < d$ the {\em $i$-graph} $G_i = (V_i,E_i)$ (with adjacency matrix $A_i$) is defined as follows.
	\begin{itemize}
		\item The vertices of $G_i$ are the $i$-faces of $X$, i.e., $V_i = X(i)$.
		\item For any two $i$-faces $\sigma,\sigma' \in X(i)$ such that $\sigma \sim \sigma'$, there is an edge between the corresponding vertices in $G_i$, i.e., $E_i=\left\{\!\{\sigma, \sigma'\} \in \binom{X(i)}{2} \mid \sigma \sim \sigma'\right\}$.
	\end{itemize}
\end{dfn}

Denote by $\pi_0 \in \R^{X(i)}$ the probability distribution from which $\sigma_0$ is chosen. As in graphs, the random walk on dimension $i$ of $X$ yields a sequence of probability distributions $\pi_0, \pi_1, \dotsc, \in \R^{X(i)}$. Note that the transition from $\pi_t$ to $\pi_{t+1}$ is given by $\pi_{t+1} = \pi_t \widetilde A_i$, where $\widetilde A_i$ is the normalized adjacency matrix of $G_i$. In other words, the random walk on dimension $i$ of the complex is equivalent to a random walk on the $i$-graph as above. Thus, our goal is to construct a complex $X$ such that the random walks on {\em all} of its induced $i$-graphs, $0 \le i < d$, will converge rapidly to the uniform distribution.

\subsubsection{Rapid mixing criterion}
Recall that the link of any face $\sigma \in X$ is a smaller complex which corresponds to the local view of $\sigma$. We essentially show that if every link is expanding in some sense, then all of the high order random walks on $X$ converge rapidly to their stationary distribution.

For any face $\sigma \in X$ the $1$-dimensional skeleton of $X_\sigma$ is defined as $X_\sigma(0) \cup X_\sigma(1)$, i.e., the underlying graph of $X_\sigma$. We say that $X$ is a {\em skeleton expander} if the $1$-dimensional skeleton of every link behaves pseudorandomly.
\begin{dfn}[Skeleton expansion]
	Let $X$ be a $d$-dimensional simplicial complex. $X$ is called an {\em $\alpha$-skeleton expander}, $\alpha > 0$, if for any face $\sigma \in X$ (including $\sigma = \emptyset)$ and any subset of vertices $S \subseteq X_\sigma(0)$ it holds that
	$$\norm{E(S)}_\sigma \le \norm{S}_\sigma(\norm{S}_\sigma + \alpha),$$
	where $E(S) \subseteq X_\sigma(1)$ denotes the set of edges with both endpoints in $S$.
\end{dfn}

We note that in the above definition we can replace the requirement of $\alpha$-skeleton expansion for $\sigma = \emptyset$ with the weaker requirement that the complex is connected. This could be done since by~\cite{Opp15} if the complex is connected, then the $\alpha$-skeleton expansion of  $\sigma =\emptyset$ follows from the $\alpha$-skeleton expansion of all $\sigma \in X$, $\sigma \neq \emptyset$.

For some intuition regrading the above definition, consider a random graph $G = (V,E)$ whose edges are distributed uniformly. Let $S \subseteq V$ be any subset of vertices of size $\gamma|V|$. Then the probability for any edge to have both endpoints in $S$ is $\gamma^2$, and hence the expected size of $E(S)$ is $\gamma^2|E|$. From this point of view, the $\alpha$-skeleton expansion property means that the underlying graph of every link ``looks like'' a random graph, up to an error of $\alpha$.

The main result of this paper is the following theorem.
\begin{thm}[Rapid mixing criterion, informal, for formal see Theorem \ref{main-thm-3}]\label{main-thm-3-informal}
	If a complex is an $\alpha$-skeleton expander for small enough $\alpha$, then all of the high order random walks on it converge rapidly to their stationary distribution.
\end{thm}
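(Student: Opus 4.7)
The plan is to follow the three-step route outlined in the introduction: pass from $\alpha$-skeleton expansion to a combinatorial notion of colorful expansion, from colorful expansion to a spectral gap for each $\widetilde A_i$, and finally from the spectral gap to rapid mixing in the usual way.

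I would first formalize colorful expansion: call an $(i+1)$-face $\tau$ \emph{colorful} with respect to $S\subseteq X(i)$ if its $i+2$ many $i$-sub-faces are neither all in $S$ nor all outside $S$, and declare $X$ an $\varepsilon$-colorful expander if for every $0\le i<d$ and every $S\subseteq X(i)$ the fraction of colorful $(i+1)$-faces is at least $\varepsilon\norm{S}(1-\norm{S})$. Each colorful $(i+1)$-face forces at least one edge of the $i$-graph $G_i$ across the cut $(S,\bar S)$, so an $\varepsilon$-colorful expander automatically has $h(G_i)\gtrsim\varepsilon$ simultaneously for every $i$. Cheeger's inequality (Section~\ref{expanders-and-random-walks-section}) then gives $\lambda_2(\widetilde A_i)\le 1-\Omega(\varepsilon^2)$ for every $i$, and together with the lower bound on the smallest eigenvalue promised in Section~\ref{trevisan-section} this yields the $\mu$-rapid mixing of every high order random walk.

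The technical core is showing that $\alpha$-skeleton expansion implies $\varepsilon(\alpha)$-colorful expansion. For fixed $S$ I bound the fraction of $S$-monochromatic $(i+1)$-faces (the $\bar S$ case is symmetric, and the colorful fraction is what remains). The key identity is the double count
\begin{equation*}
	\sum_{\sigma\in X(i-1)} |E(S_\sigma)| \;=\; \sum_{\tau\in X(i+1)} \binom{k(\tau)}{2},
\end{equation*}
where $S_\sigma\subseteq X_\sigma(0)$ collects the vertices of the link corresponding to $i$-faces of $S$ through $\sigma$, and $k(\tau)$ is the number of $i$-sub-faces of $\tau$ in $S$; fully monochromatic faces contribute $\binom{i+2}{2}$ each. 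Applying the skeleton-expansion bound $\norm{E(S_\sigma)}_\sigma\le\norm{S_\sigma}_\sigma(\norm{S_\sigma}_\sigma+\alpha)$ in every link, summing with the natural weights $|X_\sigma(1)|$, and using the telescoping identity $\sum_\sigma |S_\sigma|=(i+1)|S|$, I would extract a bound of the form $\norm{S}^2+O(\alpha)\norm{S}$ on the normalized right-hand side, and hence on $|\{\tau:k(\tau)=i+2\}|/|X(i+1)|$ via Markov applied to $\binom{k(\tau)}{2}\ge \binom{i+2}{2}\cdot\mathbbm{1}[k(\tau)=i+2]$.

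The main obstacle is the quadratic term $\sum_\sigma |X_\sigma(1)|\norm{S_\sigma}_\sigma^2$: Cauchy--Schwarz only lower-bounds it by $\norm{S}^2$, whereas I need an upper bound of the same order. One natural route is to use the $L^2$ form of skeleton expansion, $\norm{(A_\sigma-\norm{S_\sigma}_\sigma J)\mathbbm{1}_{S_\sigma}}_2\le\alpha\norm{\mathbbm{1}_{S_\sigma}}_2$, which controls the second-moment contribution directly; alternatively one can iterate the local-to-global estimate through a short induction on the dimension, using that links of links are themselves skeleton expanders by assumption. Once this obstacle is cleared, converting colorful expansion into the Cheeger bound on $G_i$ and then into the spectral gap is routine, and non-regularity is absorbed by the weighted norms of Section~\ref{non-regular-section}.
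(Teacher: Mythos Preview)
Your three–step outline (skeleton expansion $\Rightarrow$ colorful expansion $\Rightarrow$ Cheeger/Trevisan $\Rightarrow$ rapid mixing) is exactly the route the paper takes, and your treatment of the last two steps is fine. The gap is in the first step, and it is precisely the obstacle you flag yourself.

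Your double-counting identity $\sum_{\sigma\in X(i-1)}|E(S_\sigma)|=\sum_{\tau\in X(i+1)}\binom{k(\tau)}{2}$ is correct, and applying the skeleton bound in each link indeed gives
\[
\sum_{\tau}\binom{k(\tau)}{2}\ \le\ \sum_{\sigma}|X_\sigma(1)|\,\norm{S_\sigma}_\sigma^2\ +\ \alpha\sum_{\sigma}|X_\sigma(1)|\,\norm{S_\sigma}_\sigma .
\]
But neither of your two proposed fixes controls the first sum. The ``$L^2$ form'' $\norm{(A_\sigma-\norm{S_\sigma}_\sigma J)\one_{S_\sigma}}_2\le\alpha\norm{\one_{S_\sigma}}_2$ is a \emph{within-link} statement: it tells you that in each link $|E(S_\sigma)|$ is close to $\norm{S_\sigma}_\sigma^2\,|X_\sigma(1)|$, which is exactly the inequality you already used; it says nothing about the \emph{across-link} variance of $\sigma\mapsto\norm{S_\sigma}_\sigma$. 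And that variance can be huge: nothing prevents $S$ from being concentrated so that $\norm{S_\sigma}_\sigma\in\{0,1\}$ for most $\sigma$, in which case the quadratic sum equals the linear one and your Markov step only yields $\norm{F^{i+1}}\lesssim\norm{S}$, giving no gap against $\norm{\Gamma(S)}\ge\norm{S}$.

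Your second suggestion, ``iterate through the dimensions using that links of links are skeleton expanders,'' is the right instinct, but it is not a short induction; it is the entire technical content of the paper. What the paper does is introduce a hierarchy of \emph{fat} faces $S^i\supseteq S^{i-1}\supseteq\cdots$ (a $j$-face is fat if a large fraction of its $(j{+}1)$-cofaces are fat, with geometrically decreasing thresholds $\eta^{2^{i-j-1}}$), prove that this hierarchy must ``break'' at some level $j$ in the sense that many fat $j$-faces sit over non-fat $(j{-}1)$-faces, and then apply skeleton expansion \emph{only} in links of those non-fat $(j{-}1)$-faces, where by construction $\norm{S^{j}_\sigma}_\sigma$ is small. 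The $\norm{S_\sigma}^2$ term is thus tamed not by a second-moment bound but by restricting attention to links where the density is provably below the threshold. Getting the thresholds, the existence of the good level $j$, and the ``lifting'' from level $j$ back up to level $i{+}1$ to all fit together is delicate (Propositions~\ref{good-dimension-pro}--\ref{full-faces-upper-bound-pro}); a one-line appeal to induction does not cover it.
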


\subsubsection{Colorful expansion}
Our new definition for combinatorial expansion of high dimensional simplicial complexes is the main building block in the rapid mixing proof. For an $i$-cochain $W \in C^i(X)$, we say that an $(i+1)$-face is {\em expanding} if it hits $W$ but not fully covered by $W$.
\begin{dfn}[Expanding faces]
	Let $X$ be a $d$-dimensional simplicial complex. For any $i$-cochain $W \in C^i(X)$, $0 \le i < d$, the {\em expanding faces} of $W$ are defined as
	$$\psi(W) = \{\tau \in X(i+1) \;|\; \exists\, \sigma, \sigma' \subset \tau: \sigma \in W,\, \sigma' \in X(i) \setminus W\}.$$
\end{dfn}

Then we define colorful expansion as follows.
\begin{dfn}[Colorful expander]
	Let $X$ be a $d$-dimensional simplicial complex. $X$ is called an {\em $\epsilon$-colorful expander}, $\epsilon > 0$, if for any $i$-cochain $W \in C^i(X)$, $0 \le i < d$, $0 < \norm{W} \le 1/2$,
	$$\frac{\norm{\psi(W)}}{\norm{W}} \ge \epsilon.$$
\end{dfn}

Our main result is then a corollary of the following two theorems.
\begin{thm}[Colorful expansion criterion, informal, for formal see Theorem \ref{main-thm-1}]\label{main-thm-1-informal}
	If a complex is an $\alpha$-skeleton expander for small enough $\alpha$, then it is a colorful expander.
\end{thm}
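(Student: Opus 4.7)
My plan is to proceed by induction on the cochain dimension $i$, with the base case handled by $\alpha$-skeleton expansion at the empty face and the inductive step reducing colorful expansion at level $i$ to edge expansion in the $1$-skeletons of links of $(i-1)$-faces.

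For the base case $i=0$, $W\subseteq X(0)$ and $\psi(W)\subseteq X(1)$ is exactly the set of edges crossing the partition $(W,\bar W)$ in the $1$-skeleton of $X$ (the link of $\emptyset$). Applying $\alpha$-skeleton expansion to both $W$ and $\bar W$,
\[
\norm{\psi(W)} \;\ge\; 1 - \norm{W}(\norm{W}+\alpha) - \norm{\bar W}(\norm{\bar W}+\alpha) \;=\; 2\norm{W}\norm{\bar W} - \alpha \;\ge\; \norm{W}-\alpha,
\]
using $\norm{W}\le 1/2$. This already yields $\norm{\psi(W)}/\norm{W}\ge 1/2$ whenever $\norm{W}\ge 2\alpha$, and the remaining tiny-$W$ regime can be dispatched by a direct degree count using pureness and regularity of $X$.

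For the inductive step ($i\ge 1$), the key combinatorial observation is that an $(i+1)$-face $\tau$ is \emph{non}-expanding if and only if for every one of its $\binom{i+2}{2}$ $(i-1)$-subfaces $\rho$, the two $i$-subfaces of $\tau$ containing $\rho$ lie on the same side of $(W,\bar W)$. Equivalently, in the $1$-skeleton of the link $X_\rho$, the edge corresponding to $\tau$ is monochromatic under $W_\rho:=\{v\in X_\rho(0)\mid \rho\cup\{v\}\in W\}$. Since each non-expanding $\tau$ is monochromatic at all $\binom{i+2}{2}$ of its $(i-1)$-subfaces, a double-counting identity yields $\norm{\psi(W)}\ge\mathbb{E}_{\rho\in X(i-1)}\norm{T_\rho}_\rho$, where $T_\rho$ is the set of cross-edges of $X_\rho$'s $1$-skeleton and the expectation is uniform by regularity. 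Applying $\alpha$-skeleton expansion at each link $X_\rho$ gives $\norm{T_\rho}_\rho\ge 2\norm{W_\rho}_\rho(1-\norm{W_\rho}_\rho)-\alpha$.

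The main obstacle is converting this into the desired form $\norm{\psi(W)}\ge\epsilon\norm{W}$. The Garland identity $\mathbb{E}_\rho\norm{W_\rho}_\rho=\norm{W}$ handles the first-moment part, but a priori the second-moment $\mathbb{E}_\rho\norm{W_\rho}_\rho^2$ could equal $\norm{W}$ (e.g.\ if every $\norm{W_\rho}_\rho\in\{0,1\}$), which would make the bound vacuous. I would resolve this by splitting $\rho$ into two regimes: for $\rho$ whose $\norm{W_\rho}_\rho$ is bounded away from $0$ and $1$, the skeleton-expansion bound already yields $\norm{T_\rho}_\rho\ge\Omega(\norm{W_\rho}_\rho)$; for $\rho$ where $W_\rho$ is extremely concentrated, a direct degree count inside $X_\rho$ shows $\norm{T_\rho}_\rho\ge\Omega(\min(\norm{W_\rho}_\rho,\norm{\bar W_\rho}_\rho))$. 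Combining these two contributions via Garland's identity then produces a lower bound proportional to $\norm{W}$; because each step accumulates an additive $O(\alpha)$ loss, the final constant $\epsilon$ degrades with $d$, which is precisely why the hypothesis demands ``small enough $\alpha$''.
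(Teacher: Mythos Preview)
Your argument has a genuine gap in the inductive step, and the ``induction'' is not actually inductive: you never invoke the hypothesis that colorful expansion holds at level $i-1$. What you really have is a one-level reduction to edge expansion in links of $(i-1)$-faces, and that reduction breaks precisely in the regime you flag but do not resolve.

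Consider the scenario where $\norm{W_\rho}_\rho$ is close to $1$ on a set $A\subseteq X(i-1)$ of measure roughly $\norm{W}$ and close to $0$ off $A$ (this is consistent with $\mathbb E_\rho\norm{W_\rho}_\rho=\norm{W}$). For $\rho\in A$ your ``direct degree count'' only yields $\norm{T_\rho}_\rho\ge\Omega(\norm{\bar W_\rho}_\rho)$, which is arbitrarily small; for $\rho\notin A$ the contribution $\Omega(\norm{W_\rho}_\rho)$ is also arbitrarily small. Summing, you get $\mathbb E_\rho\norm{T_\rho}_\rho=o(\norm{W})$, not $\Omega(\norm{W})$. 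No combination via the first-moment identity rescues this, because both regimes are simultaneously bad. (Incidentally, even in the base case your splitting is unnecessary: from $\norm{\Gamma(W)}\ge\norm{W}$ and $\norm{E(W)}\le\norm{W}(\norm{W}+\alpha)$ one gets $\norm{\psi(W)}\ge\norm{W}(1-\norm{W}-\alpha)\ge(1/2-\alpha)\norm{W}$ uniformly.)

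The paper's proof is built exactly to handle this obstruction. It does \emph{not} proceed by induction on $i$; instead, for a fixed $W\in C^i(X)$ it defines \emph{fat $j$-faces} $S^j(W)$ for every $j\le i$ (recursively: $\sigma\in X(j)$ is fat if $\norm{S^{j+1}_\sigma}_\sigma$ exceeds a threshold), and then locates the \emph{maximal} dimension $j\le i$ at which a constant fraction of fat $j$-faces sit over non-fat $(j-1)$-faces. Skeleton expansion is applied only in the links of those non-fat $(j-1)$-faces, where by construction the fat vertices are sparse, so the bound on $\norm{E(S)}$ is actually effective. The remaining work is to lift this back up from level $j$ to level $i+1$, controlling the ``full'' $(i+1)$-faces via the maximality of $j$. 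In your bad scenario, the set $A$ of $(i-1)$-faces with $\norm{W_\rho}_\rho$ near $1$ is exactly the fat set $S^{i-1}$; the paper would then descend further, asking whether $S^{i-1}$ itself is concentrated in links of $(i-2)$-faces, and so on until it finds a level where it is not. Your single-level localisation to $X(i-1)$ has no mechanism for this descent, and without it the argument cannot close.
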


\begin{thm}[Colorful expansion implies rapid mixing, informal, for formal see Theorem \ref{main-thm-2}]\label{main-thm-2-informal}
	If a complex is a colorful expander, then all of the high order random walks on it converge rapidly to their stationary distribution.
\end{thm}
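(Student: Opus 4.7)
\medskip

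\noindent\textbf{Proof proposal.} The plan is to reduce rapid mixing of the walk on dimension $i$ to a Cheeger-type inequality on the $i$-graph $G_i$, and to show that colorful expansion directly lower-bounds the Cheeger constant of $G_i$. The key combinatorial observation is that every expanding $(i{+}1)$-face contributes at least one cut edge to $G_i$: if $\tau \in \psi(W)$, then $\tau$ contains some $\sigma \in W$ and some $\sigma' \in X(i)\setminus W$ with $\sigma\sim\sigma'$, so $\{\sigma,\sigma'\} \in E_i\left(W,\overline W\right)$. Hence
\[
\left\lvert E_i\left(W,\overline W\right)\right\rvert \ \ge\ |\psi(W)|.
\]

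\medskip

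\noindent Next, I would convert the colorful expansion bound on $\|\psi(W)\|/\|W\|$ into a bound on the usual (graph) Cheeger ratio $|E_i(W,\overline W)|/(k_i |W|)$ by the standard double counting. Assuming regularity, if each $i$-face is contained in $d_i$ faces of dimension $i{+}1$, then $|X(i)|\cdot d_i = (i{+}2)\cdot|X(i{+}1)|$ and the degree in $G_i$ is $k_i = (i{+}1)\,d_i$. Substituting $|\psi(W)| \ge \epsilon \, \|W\| \, |X(i{+}1)|$ into the cut bound and simplifying gives
\[
\frac{\left\lvert E_i\left(W,\overline W\right)\right\rvert}{k_i\,|W|} \ \ge\ \frac{\epsilon}{(i{+}1)(i{+}2)},
\]
so $h(G_i) \ge \epsilon/((i{+}1)(i{+}2))$. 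Cheeger's inequality (cited in Section~\ref{expanders-and-random-walks-section}) then yields $\lambda_2(\widetilde A_i) \le 1 - \tfrac{1}{2}\left(\epsilon/((i{+}1)(i{+}2))\right)^2$, bounded strictly below $1$ uniformly for all $0 \le i < d$ provided $d$ is fixed.

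\medskip

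\noindent To conclude rapid mixing I still need to bound the most negative eigenvalue $\lambda_{|V_i|}(\widetilde A_i)$ away from $-1$, since colorful expansion alone only controls $\lambda_2$. The natural route, which the paper itself flags via the forward reference to Section~\ref{trevisan-section}, is to exploit the fact that each $(i{+}1)$-face $\tau \in X(i{+}1)$ induces a clique of size $i{+}2$ on its $i$-subfaces inside $G_i$, hence for $i \ge 1$ the graph $G_i$ contains many triangles and is manifestly far from bipartite. A Trevisan-style argument (or simply a lazy-walk reduction) then converts the triangle density coming from the simplicial structure into an explicit gap $\lambda_{|V_i|} \ge -1 + \delta$ for some $\delta$ depending on the complex. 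Combining both bounds, $\max\{|\lambda_2|,|\lambda_{|V_i|}|\} \le 1 - \mu_i$ for a universal $\mu_i>0$, which is precisely the rapid mixing conclusion.

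\medskip

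\noindent The main obstacle I anticipate is not the Cheeger step, which is essentially bookkeeping, but controlling $\lambda_{|V_i|}$: the standard trick of adding self-loops is not available on the abstract $i$-graph of a simplicial complex without changing the walk, so one has to genuinely use the simplicial structure. A secondary subtlety is the non-regular case (indicated by the reference to Section~\ref{non-regular-section}): there $\|\cdot\|$ is a weighted norm and the Cheeger inequality must be applied to a suitably reweighted version of $G_i$, but the reduction is formally the same once one replaces vertex counts by total weights.
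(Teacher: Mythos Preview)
Your approach is essentially the paper's: reduce to Cheeger on $G_i$, show colorful expansion lower-bounds the conductance, then handle $\lambda_{|V_i|}$ separately via the clique structure of $(i{+}1)$-faces. Two points of comparison are worth noting.

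First, your cut-edge count is looser than the paper's by a factor of $i{+}1$. You use that each $\tau\in\psi(W)$ contributes at least one cut edge; the paper observes that if $\tau$ has $j$ of its $i{+}2$ sub-$i$-faces in $W$ (so $1\le j\le i{+}1$), then it contributes $j(i{+}2-j)\ge i{+}1$ cut pairs, and these pairs are distinct across different $\tau$ since an edge of $G_i$ determines its $(i{+}1)$-face. This yields $\Phi(G_i)\ge \epsilon/(i{+}2)$ rather than your $\epsilon/((i{+}1)(i{+}2))$. Your bound is still sufficient for the informal theorem, just with a worse dependence on $d$.

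Second, on the bottom eigenvalue: you are right that for $i\ge 1$ one exploits the $(i{+}2)$-cliques to bound Trevisan's bipartiteness ratio, and the paper makes this quantitative as $\beta(G_i)\ge 1/(i{+}2)$. But your remark that ``adding self-loops is not available without changing the walk'' is slightly off for $i=0$: the paper \emph{does} pass to the lazy walk there (Proposition~\ref{self-loops-prop}), accepting the modified walk, since for $i=0$ the clique size is only $2$ and no anti-bipartiteness is forced by the simplicial structure. So the split is: lazy walk for $i=0$, Trevisan for $i\ge 1$.
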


Since Theorem~\ref{main-thm-1-informal} is the main technical part of this paper, we provide here a short overview of the proof.

\paragraph{Informal overview of the proof of Theorem~\ref{main-thm-1-informal}.} We need to show that for any dimension $i$ and any non-empty $i$-cochain with norm up to $1/2$ there are many expanding faces.

Consider an arbitrary $i$-cochain $W$. From a very high level point of view, the proof can be divided into three steps:
\begin{enumerate}
	\item Mark ``selected'' faces in dimensions $i,i-1,\dotsc,-1$.
	\item Find a ``good'' dimension $j \le i$.
	\item ``Climb up'' from dimension $j$ to deduce many expanding faces in dimension $i+1$.
\end{enumerate}

In more details, we start by marking all of the $i$-faces which are in $W$. Then we mark $(i-1)$-faces, where the rule for marking a face is if many (above some threshold) of the $i$-faces which contain it are marked. We continue this way down to the lowest dimension of the complex, where each dimension is marked with regard to one dimension above. We term these marked faces as {\em fat faces} since each marked face contains in its link many $i$-faces which are in $W$.

The next step is to look for the highest dimension $j \le i$ with the property that many fat $j$-faces contain many non-fat $(j-1)$-faces. By setting a large enough threshold for marking faces we can make sure that the empty set is never fat, which implies that such dimension must exist (Proposition~\ref{good-dimension-pro}).

The last step is to deduce from the links of the non-fat $(j-1)$-faces that there exist many expanding $(i+1)$-faces. We already know, by the existence of many fat $j$-faces in these links, that a large part of $W$ is seen in them (Proposition~\ref{container-bound-pro}). It is left to show that many of the $(i+1)$-faces in these links contain at least one $i$-face which is not in $W$ (Proposition~\ref{full-faces-upper-bound-pro}).

For that we define {\em full faces} for any dimension of the complex, where a $k$-face is full if it contains only fat $(k-1)$-faces. By this definition it follows that a full $k$-face is seen in the link of any $(k-2)$-face as an edge between two fat vertices. So when considering the link of a non-fat $(k-2)$-face, which by definition contains only a few fat vertices, by the skeleton expansion  it contains only a few full $k$-faces. Thus, the number of the full $(j+1)$-faces in the links of the non-fat $(j-1)$-faces is very small. Then again, by the skeleton expansion and the maximality of $j$ we can deduce that for any $k>j+1$ most of the full $k$-faces contain only full $(k-1)$-faces. This implies that the fraction of full $(i+1)$-faces is about the same as the fraction of the full $(j+1)$-faces. This actually finishes the proof, since we get many $(i+1)$-faces that are not full, i.e., contain at least one $i$-face which is not in $W$.

\subsubsection{Explicit construction}
As proven in~\cite{EK16}, Ramanujan complexes are excellent skeleton expanders. Since our definition of skeleton expansion is slightly different than the one appearing in~\cite{EK16}, we modify their proof so the requirement of our definition follows. We show that Ramanujan complexes with thickness large enough are $\alpha$-skeleton expanders with $\alpha$ as small as we want. Then by the explicit construction of Ramanujan complexes from~\cite{LSV05.2} we achieve the following theorem.
\begin{thm}[Explicit construction, informal, for formal see Theorem \ref{main-thm-4}]\label{main-thm-4-informal}
	There exists an explicit family of bounded degree high dimensional simplicial complexes in which all of the high order random walks converge rapidly to their stationary distribution.
\end{thm}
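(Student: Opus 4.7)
The plan is to combine the rapid mixing criterion of Theorem~\ref{main-thm-3-informal} with an explicit family of complexes satisfying its hypothesis. Concretely, I would instantiate the family using the explicit Ramanujan complexes of~\cite{LSV05.2}, which for each fixed dimension $d$ and each prime power $q$ (the thickness) yield an infinite sequence of $d$-dimensional pure simplicial complexes of bounded degree, whose non-empty links are finite spherical buildings. Since the degree bound depends only on $d$ and $q$, once $d$ and $q$ are fixed the family is automatically of bounded degree, and the only remaining point is to verify that, by choosing $q$ large enough, these complexes become $\alpha$-skeleton expanders with $\alpha > 0$ as small as demanded by Theorem~\ref{main-thm-3-informal}.

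The verification of skeleton expansion is the main technical step. I would follow the route of~\cite{EK16}, who showed, via the Deligne-type bounds available for the Hecke operators associated to Ramanujan complexes, that the $1$-dimensional skeleton of every non-empty link of such a complex of thickness $q$ has normalized second eigenvalue $\lambda_2 = O(1/\sqrt{q})$. Applying the one-sided form of the Expander Mixing Lemma inside each link then gives, for every face $\sigma$ and every $S \subseteq X_\sigma(0)$,
\begin{equation*}
\norm{E(S)}_\sigma \le \norm{S}_\sigma^2 + \lambda_2\,\norm{S}_\sigma\bigl(1-\norm{S}_\sigma\bigr) \le \norm{S}_\sigma\bigl(\norm{S}_\sigma + \lambda_2\bigr),
\end{equation*}
which is precisely the definition of $\alpha$-skeleton expansion with $\alpha := \lambda_2$. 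The case $\sigma = \emptyset$ is handled by the remark following the definition of skeleton expansion: since the LSV complexes are connected, it propagates automatically from the non-empty links via the result of~\cite{Opp15}.

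With these two ingredients in hand, the proof of Theorem~\ref{main-thm-4-informal} assembles cleanly. Fix the target dimension $d$ and the required level of rapid mixing; extract from Theorem~\ref{main-thm-3-informal} the threshold $\alpha_0$ below which $\alpha$-skeleton expansion implies the desired mixing rate; then choose $q$ large enough that the bound $\lambda_2 = O(1/\sqrt{q})$ falls under $\alpha_0$. The corresponding infinite family of Ramanujan complexes from~\cite{LSV05.2} is explicit, $d$-dimensional, of bounded degree, and an $\alpha$-skeleton expander for $\alpha < \alpha_0$; Theorem~\ref{main-thm-3-informal} then yields rapid mixing of the high order random walks on every dimension $0 \le i < d$.

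The main obstacle I expect lies in the translation step at the end of paragraph two. The formulation of skeleton expansion used here differs in its normalization from the spectral bound proved in~\cite{EK16}, so one must carefully (i)~respect the weighted norms $\norm{\,\cdot\,}_\sigma$ of Section~\ref{non-regular-section} when converting a two-sided expander mixing estimate into the required one-sided inequality, and (ii)~be explicit about the dependence of the implied constant in $\lambda_2 = O(1/\sqrt{q})$ on the dimension $d$, so that a single choice of $q = q(d,\alpha_0)$ works uniformly across all links. Once this bookkeeping is in place, no further ideas are needed, and the theorem follows by feeding the resulting family into Theorem~\ref{main-thm-3-informal}.
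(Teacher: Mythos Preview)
Your proposal is correct and follows essentially the same route as the paper: instantiate with the explicit Ramanujan complexes of~\cite{LSV05.2}, invoke the $O(1/\sqrt{q})$ spectral bound from~\cite{EK16} on the links, convert this to $\alpha$-skeleton expansion via a mixing-lemma argument, and feed the result into Theorem~\ref{main-thm-3-informal}. The only point where the paper is more specific than your sketch is the ``bookkeeping'' you anticipate in your last paragraph: because the links are $(d+1)$-partite rather than generic expanders, the paper does not apply a one-sided Expander Mixing Lemma directly but instead decomposes $S$ along the color classes, applies the bipartite biregular mixing lemma of~\cite{EGL15} to each pair $(V_i,V_j)$, and sums (Proposition~\ref{skeleton-mixing-lemma}); this also covers $\sigma=\emptyset$ directly, so the appeal to~\cite{Opp15} is not needed.
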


\subsection{Related work}
In a recent work Oppenheim~\cite{Opp15} has shown that if all the links of a $d$-dimensional simplicial complex are good spectral expanders and the complex is connected, then the $1$-dimensional skeleton of the complex is an expander graph, or in other words, that the graph $G_0$ is a spectral expander. In this paper we show that the spectral expansion of the links, including the spectral expansion of the $1$-dimensional skeleton of the complex, imply the spectral expansion of $G_i$ for {\em all} $0 < i < d$.

Parzanchevski and Rosenthal in~\cite{PR12} study a different notion of high order random walk. The high order random walk of~\cite{PR12} is designed to expose the topological properties of the complex. Parzanchevski and Rosenthal define a variant of the stationary distribution of the random walk and show its relation to the spectrum of the high order laplacian on the space that is {\em orthogonal to the coboundaries}. In our work, the stationary distribution of the random walks is already known. Moreover, it is already known that the convergence rate is controlled by the spectrum of the high order normalized adjacency matrices on the space that is {\em orthogonal to the constant functions}. We show that the expansion of the links implies the concentration of the spectrum of the high order normalized adjacency matrices, a thing that can not be deduced in any way from~\cite{PR12}.

We study the spectrum of the high order normalized adjacency matrices of the complex and derive some good bounds on it from the spectrum of the links. Garland in a seminal work~\cite{Gar73} has studied the spectrum of high order laplacians associated with a $d$-dimensional simplicial complex. Garland has shown that if all the links of a complex are good spectral expanders, then the eigenspace of the weighted oriented laplacian that is orthogonal to the coboundaries has a concentrated spectrum. This is somewhat in the spirit of what we get here. However, Garland could only obtain bounds on the eigenspace orthogonal to the coboundaries, while here we want to get a bound on all the eigenvalues besides the first trivial one corresponding to the constant functions. There is no known way to obtain our result from Garland's argument. See also~\cite{GW12} for more discussion on Garland's work and the fact that it does not imply the required spectrum bound for the eigenvalues on the space that is orthogonal to the constant functions.

In a recent work of the first coauthor and Evra~\cite{EK16} a different notion of high order expansion has been studied, which is called cosystolic expansion. It was shown in~\cite{EK16} that if all the links of a $d$-dimensional simplicial complex are good spectral expanders and good coboundary expanders, then the $(d-1)$-skeleton of the complex is a cosystolic expander. Though the spirit of the proof there might resemble at first glance the method of the proof that we use here, the obstacles and the solutions are different. They could only show cosystolic expansion of {\em small} sets. Then they use a reduction of~\cite{KKL14} showing that cosystolic expansion of small sets implies cosystolic expansion of the $(d-1)$-skeleton of a given $d$-dimensional simplicial complex. In our work it is crucial for us to obtain expansion of large sets, whose norm is up to $1/2$, and the reduction of~\cite{KKL14} could not work here since it does not imply colorful expansion. Thus, the expansion here is achieved in a method which is different than the one used in~\cite{EK16}.

\subsection{Dealing with non-regularity}\label{non-regular-section}
Up to now we assumed that the complexes are regular. In this section we describe the necessary modifications for the general case, where we are not guaranteed to have regularity. We start with non-regular graphs (for more about random walks on non-regular graphs see~\cite{SJ89}).

Let $G=(V,E)$ be an undirected graph. Instead of using the Cheeger constant we use the {\em conductance}, which is its generalized version. For any subset of vertices $S \subseteq V$, its {\em volume} is defined as $\vol(S) = \sum_{v \in S}\deg(v)$. Then the conductance of any subset of vertices $S \subseteq V$ is defined as
$$\Phi(S) = \frac{|E(S,\bar S)|}{\vol(S)},$$
and the conductance of the graph is defined as
$$\Phi(G) = \min_{\substack{\emptyset \ne S \subseteq V \\ \vol(S) \le \vol(V)/2}}\Phi(S).$$

When considering a random walk on $G$, then for any $u,v \in V$ and any $t \in \N$
$$\Pr[v_{t+1} = v \;|\; v_t = u] = \frac{A(u,v)}{\deg(u)}.$$
So the transition from $\pi_t$ to $\pi_{t+1}$ is given by $\pi_{t+1} = \pi_t(DA)$, where $D$ is the diagonal matrix defined by
\[
D(u,v) =
\begin{cases}
\displaystyle\frac{1}{\deg(u)} & \text{if } u=v, \\
0 & \text{otherwise}.
\end{cases}
\]

The {\em stationary distribution} of the walk is the probability distribution $\pi \in \R^V$ for which $\pi (DA) = \pi$. When $G$ is connected and non-bipartite, then for any $v \in V$
$$\pi(v) = \frac{\deg(v)}{\sum_{u \in V}\deg(u)},$$
and for any initial probability distribution $\pi_0 \in \R^V$ it holds that $\pi_0(DA)^t \rightarrow \pi$ as $t \rightarrow \infty$. The random walk is said to be {\em $\mu$-rapidly mixing}, $0 < \mu < 1$, if for any initial probability distribution $\pi_0 \in \R^V$ and any $t \in \N$
$$\norm{\pi_t - \pi}_2 \le \sqrt{\frac{d_{max}}{d_{min}}}\mu^t,$$
where $d_{max}=\max_{v \in V}\{\deg(v)\}$ and $d_{min}=\min_{v \in V}\{\deg(v)\}$.

The mixing rate of the random walk on $G$ can be deduced from the spectrum of its normalized adjacency matrix, which is defined in the general case as $\widetilde A=D^{\rfrac{1}{2}}AD^{\rfrac{1}{2}}$. The largest eigenvalue of $\widetilde A$ satisfies $\lambda_1 = 1$ (with corresponding eigenvector $\big(\!\sqrt{\pi(v)}\,\big)_{v \in V}$) and the smallest eigenvalue satisfies $\lambda_{|V|} \ge -1$. Similar to the regular case, the random walk on $G$ is $\mu$-rapidly mixing for $\mu \le \max\{|\lambda_2|, |\lambda_{|V|}|\}$. (We prove this assertion in the appendix.)

The following lemma of Sinclair and Jerrum~\cite{SJ89} generalizes Cheeger's inequality to non-regular graphs.
\begin{lem}\label{cheeger-lambda-2}
	Let $G=(V,E)$ be an undirected graph where multiple edges are allowed, $\widetilde A$ its normalized adjacency matrix and $\lambda_1 \ge \lambda_2 \ge \dotsb \ge \lambda_{|V|}$ the eigenvalues of $\widetilde A$. Then
	$$\lambda_2 \le 1 - \frac{\Phi(G)^2}{2}.$$
\end{lem}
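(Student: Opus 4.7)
The plan is to adapt the classical Cheeger-inequality proof (Alon--Milman, Sinclair--Jerrum) to the non-regular, weighted setting. Writing $\mu_2 := 1 - \lambda_2$, the claim is equivalent to $\mu_2 \ge \Phi(G)^2/2$.

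First I would recast $\mu_2$ in combinatorial form. Since $\widetilde A = D^{1/2} A D^{1/2}$ is symmetric with top eigenvector $\bigl(\sqrt{\pi(v)}\bigr)_{v \in V}$ and eigenvalue $1$, the variational principle gives
\[
\mu_2 \;=\; \min_{\substack{f \in \R^V,\ f \ne 0 \\ f \perp \sqrt{\pi}}} \frac{\langle f, (I - \widetilde A)f\rangle}{\langle f, f\rangle}.
\]
Substituting $g(v) := f(v)/\sqrt{\deg(v)}$ and using $\deg(v) = \sum_u A(u,v)$ to complete squares turns this into the weighted Dirichlet quotient
\[
\mu_2 \;=\; \min_{\substack{g \ne 0 \\ \sum_v g(v)\deg(v) = 0}} \frac{\sum_{\{u,v\} \in E} A(u,v)(g(u) - g(v))^2}{\sum_{v} g(v)^2\, \deg(v)}.
\]

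Second, I would reduce to a non-negative test function whose support has volume at most $\vol(V)/2$. Let $g$ achieve $\mu_2$, let $\theta$ be a weighted median of $g$ so that both $\{g > \theta\}$ and $\{g < \theta\}$ have volume at most $\vol(V)/2$, and set $g_+ := \max(g-\theta,0)$ and $g_- := \max(\theta-g,0)$. Since $g - \theta = g_+ - g_-$ and these have disjoint supports, the Dirichlet form and the weighted $\ell^2$-norm split favorably, and a short calculation shows that at least one of $g_+, g_-$ has Rayleigh quotient $\le \mu_2$; assume it is $g_+$.

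Third --- the heart of the argument --- I would run the classical Cheeger Cauchy--Schwarz trick. Using $(a-b)(a+b) = a^2 - b^2$,
\[
\biggl(\sum_{\{u,v\} \in E} A(u,v)\,|g_+(u)^2 - g_+(v)^2|\biggr)^2 \le \biggl(\sum_{\{u,v\}} A(u,v)(g_+(u)-g_+(v))^2\biggr) \biggl(\sum_{\{u,v\}} A(u,v)(g_+(u)+g_+(v))^2\biggr).
\]
The second factor on the right is bounded by $2 \sum_v g_+(v)^2 \deg(v)$ via $(a+b)^2 \le 2(a^2+b^2)$ and $\deg(v) = \sum_u A(u,v)$. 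For the left-hand side, the layer-cake identity with $S_t := \{v : g_+(v)^2 > t\}$ gives
\[
\sum_{\{u,v\}} A(u,v)\,|g_+(u)^2 - g_+(v)^2| = \int_0^\infty |E(S_t, \bar S_t)|\, dt, \qquad \sum_v g_+(v)^2 \deg(v) = \int_0^\infty \vol(S_t)\, dt.
\]
Each $S_t$ is contained in $\{g > \theta\}$, so $\vol(S_t) \le \vol(V)/2$ and hence $|E(S_t, \bar S_t)| \ge \Phi(G)\,\vol(S_t)$ by definition of conductance. Integrating gives $\sum_{\{u,v\}} A(u,v)|g_+(u)^2 - g_+(v)^2| \ge \Phi(G) \sum_v g_+(v)^2 \deg(v)$. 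Squaring, plugging into Cauchy--Schwarz, and using that the Dirichlet quotient of $g_+$ is at most $\mu_2$, yields $\Phi(G)^2 \le 2\mu_2$, i.e.\ $\lambda_2 \le 1 - \Phi(G)^2/2$.

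The main delicate point in a formal write-up is the median step: one must verify carefully that shifting by the weighted median and passing to the positive part does not lose anything better than a constant in the Rayleigh quotient, while simultaneously guaranteeing the small-volume support needed to invoke $\Phi(G)$. The remainder is the standard Cheeger calculation, carried out with edge weights $A(u,v)$ and vertex weights $\deg(v)$ in place of the uniform weights of the regular case.
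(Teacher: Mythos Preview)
Your proposal is correct: it is the standard Sinclair--Jerrum / Alon--Milman argument carried over to the weighted setting, and every step checks out (in particular the median shift only increases the denominator because the minimizer of $\sum_v (g(v)-c)^2\deg(v)$ over $c$ is $c=0$ by the orthogonality constraint, and the split $\mathcal E(g_+)+\mathcal E(g_-)\le \mathcal E(g-\theta)$ holds edge-by-edge). Note, however, that the paper does not give its own proof of this lemma at all --- it simply quotes it as a known result of Sinclair and Jerrum~\cite{SJ89} --- so there is no in-paper argument to compare against; your write-up is precisely the proof one finds in that reference.
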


We describe now the modifications required for non-regular complexes. Let $X$ be a $d$-dimensional simplicial complex. For any $i$-cochain $W \in C^i(X)$ we refine the definition of the norm to the general case as
$$\norm{W} = \frac{\sum_{\sigma \in W}deg(\sigma)}{\sum_{\tau \in X(i)}deg(\tau)}.$$

An alternative view of this norm, which is very useful for further calculations, is the following. Let $P_d \in X(d)$ be a uniformly random $d$-face of the complex. For any $i = d-1,\dotsc, -1$, let $P_i \in X(i)$ be a random $i$-face which is obtained by removing a uniformly random vertex from $P_{i+1}$. Then we get a sequence of random variables $\{P_i\}_{i=-1}^d$ such that for any $i$-cochain $W \in C^i(X)$
$$\norm{W} = \Pr[P_i \in W].$$

Now when considering a random walk on dimension $i$ of $X$, we take into consideration the degrees of the faces. Assuming the walk is now at $\sigma_t$, then the process of choosing $\sigma_{t+1}$ is given as follows:
\begin{enumerate}
	\item Choose an $(i+1)$-face $\tau \supset \sigma_t$ with probability {\em proportional to its degree}.
	\item Choose an $i$-face $\sigma_{t+1} \subset \tau$, $\sigma_{t+1} \ne \sigma_t$, uniformly at random and move to it.
\end{enumerate}
So the exact probability for moving from $\sigma$ to $\sigma'$, where $\sigma \sim \sigma'$, is given by
$$\Pr[\sigma_{t+1} = \sigma' \;|\; \sigma_t = \sigma] = \frac{\deg(\sigma \cup \sigma')}{\sum_{\sigma'' \sim \sigma}\deg(\sigma \cup \sigma'')}.$$

The last generalization required is for the auxiliary graphs $G_i$, $0 \le i < d$. For any two $i$-faces $\sigma,\sigma' \in X(i)$ such that $\sigma \sim \sigma'$, instead of having one edge between the corresponding vertices in $G_i$, we put $\deg(\sigma \cup \sigma')$ edges between them. Now the random walk on $G_i$ is equivalent to the random walk on dimension $i$ of $X$, i.e., for any $t \in \N$ it holds that $\pi_{t+1} = \pi_t(D_iA_i)$, where $\pi_t$ is the probability distribution after $t$ steps of the random walk on dimension $i$ of $X$, and $D_i,A_i$ are the diagonal and adjacency matrices of $G_i$, respectively.

\subsection{Bounding the smallest eigenvalue}\label{trevisan-section}
The last technical issue, mentioned in remark~\ref{bound-lambda-n-rem}, is to make sure that $\lambda_{|V|}$ is bounded away from $-1$. For random walks on dimension $i \ge 1$ we can do that by another combinatorial measure of the graph, which we describe in this section.

Let $G=(V,E)$ be an undirected graph. A {\em bipartite component} in $G$ is a non-empty subset of vertices $\emptyset \ne S \subseteq V$ and a partition of $S$ into two disjoint subsets $S_1 \cupdot S_2 = S$ such that $|E(S, \bar S)| = |E(S_1)| = |E(S_2)| = 0$. ($E(S_i)$ denotes the set of edges with both endpoints in $S_i$.) It is known that $\lambda_{|V|} = -1$ if and only if the graph has a bipartite component. Trevisan~\cite{Tre16} has proven that when the graph is ``far'' from having a bipartite component, then $\lambda_{|V|}$ is bounded away from $-1$.

For any tuple $(S, S_1, S_2)$, $\emptyset \ne S \subseteq V$, $S_1 \cupdot S_2 = S$, the {\em bipartiteness ratio} of $(S, S_1, S_2)$ is defined as
$$\beta(S, S_1, S_2) = \frac{|E(S,\bar S)| + 2(|E(S_1)| + |E(S_2)|)}{\vol(S)},$$
and the bipartiteness ratio of the graph is defined as
$$\beta(G) = \min_{\substack{\emptyset \ne S \subseteq V \\ S_1 \cupdot S_2 = S}}\beta(S, S_1, S_2).$$

It is easy to see that $\beta(G) = 0$ if and only if $G$ has a bipartite component. Moreover, when $\beta(G)$ is close to $0$, then $G$ has a component which is ``almost'' bipartite. As $\beta(G)$ is bounded away from $0$, $G$ is further away from having a bipartite component. The following lemma of Trevisan~\cite{Tre16} formalizes it.
\begin{lem}\label{cheeger-lambda-n}
	Let $G=(V,E)$ be an undirected graph where multiple edges are allowed, $\widetilde A$ its normalized adjacency matrix and $\lambda_1 \ge \lambda_2 \ge \dotsb \ge \lambda_{|V|}$ the eigenvalues of $\widetilde A$. Then
	$$\lambda_{|V|} \ge -1 + \frac{\beta(G)^2}{2}.$$
\end{lem}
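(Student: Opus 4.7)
The plan is to follow Trevisan's argument and mirror Cheeger's inequality at the bottom of the spectrum. First I would invoke a variational characterisation of $\lambda_{|V|}$: if $f \in \R^V$ is a unit eigenvector of $\widetilde A$ with eigenvalue $\lambda_{|V|}$, then setting $g(v) = f(v)/\sqrt{\deg(v)}$ and expanding $\langle f, (I+\widetilde A) f \rangle$ via $\widetilde A(u,v) = A(u,v)/\sqrt{\deg(u)\deg(v)}$ yields
$$1 + \lambda_{|V|} \;=\; \frac{\sum_{\{u,v\}\in E} A(u,v)\,(g(u)+g(v))^2}{\sum_{v\in V} \deg(v)\,g(v)^2},$$
and by Courant--Fischer this equals $\min_{g\neq 0} R(g)$, where $R$ denotes the right-hand ratio. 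Hence it suffices, starting from an arbitrary $g$ with $R(g) = \varepsilon$, to build a tuple $(S,S_1,S_2)$ satisfying $\beta(S,S_1,S_2) \le \sqrt{2\varepsilon}$.

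The next step is a sign-based sweep rounding. Partition the vertices by the sign of $g$ into $V_+ = \{v : g(v) > 0\}$ and $V_- = \{v : g(v) < 0\}$, and for each threshold $t \ge 0$ set $S_1(t) = \{v \in V_+ : g(v)^2 \ge t\}$, $S_2(t) = \{v \in V_- : g(v)^2 \ge t\}$, and $S(t) = S_1(t) \cup S_2(t)$. I would then compare the integrals $\int_0^\infty \mathrm{num}(t)\,dt$ and $\int_0^\infty \vol(S(t))\,dt$, where $\mathrm{num}(t) = |E(S(t), \overline{S(t)})| + 2(|E(S_1(t))| + |E(S_2(t))|)$ is the numerator of $\beta$. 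An edge-by-edge analysis with (WLOG) $|g(u)| \le |g(v)|$ shows: same-sign edges contribute to $|E(S_i(t))|$ for $t \in [0, g(u)^2]$ (doubled by the definition of $\beta$) and to the cut for $t \in (g(u)^2, g(v)^2]$, while opposite-sign edges contribute only to the cut for $t \in (g(u)^2, g(v)^2]$. Collapsing both cases (using that for same-sign $|g(u)+g(v)| = |g(u)|+|g(v)|$ and for opposite signs $|g(u)+g(v)|\cdot(|g(u)|+|g(v)|) = |g(v)^2 - g(u)^2|$) gives
$$\int_0^\infty \mathrm{num}(t)\,dt \;\le\; \sum_{\{u,v\}\in E} A(u,v)\,|g(u)+g(v)|\,(|g(u)|+|g(v)|),\qquad \int_0^\infty \vol(S(t))\,dt \;=\; \sum_{v\in V} \deg(v)\,g(v)^2.$$

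To close, I would apply Cauchy--Schwarz to the right-hand sum, pairing $|g(u)+g(v)|$ with $(|g(u)|+|g(v)|)$, and then use the pointwise inequality $(|g(u)|+|g(v)|)^2 \le 2(g(u)^2+g(v)^2)$ to bound the second factor by $2 \sum_v \deg(v) g(v)^2$. This yields
$$\sum_{\{u,v\}\in E} A(u,v)\,|g(u)+g(v)|(|g(u)|+|g(v)|) \;\le\; \sqrt{R(g)\cdot\textstyle\sum_v \deg(v) g(v)^2}\cdot\sqrt{2\textstyle\sum_v \deg(v) g(v)^2} \;=\; \sqrt{2R(g)}\cdot\textstyle\sum_v \deg(v) g(v)^2.$$
Dividing the integrated numerator by the integrated denominator, the ratio is at most $\sqrt{2R(g)} = \sqrt{2(1+\lambda_{|V|})}$. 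Since the pointwise infimum ratio is bounded above by this integrated ratio, some threshold $t^*$ witnesses $\beta(S(t^*), S_1(t^*), S_2(t^*)) \le \sqrt{2(1+\lambda_{|V|})}$; minimising over tuples gives $\beta(G)^2 \le 2(1+\lambda_{|V|})$, which rearranges to the stated bound.

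The main obstacle will be the edge-by-edge accounting in the sweep step. The factor $2$ in front of $|E(S_1)| + |E(S_2)|$ in the definition of $\beta$ has to match exactly the ``doubled'' sub-interval produced by same-sign edges at low thresholds, so that the sign cases merge into the single symmetric quantity $|g(u)+g(v)|\cdot(|g(u)|+|g(v)|)$; it is precisely this symmetric form that lets Cauchy--Schwarz produce the $(1+\lambda_{|V|})$ factor paired against the squared denominator with constant $2$. Once the accounting is set up correctly the rest is routine; the variational first step and the final Cauchy--Schwarz step are standard.
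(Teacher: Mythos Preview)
Your argument is correct and is precisely Trevisan's proof of this Cheeger-type inequality for $\lambda_{|V|}$. The paper does not give its own proof of this lemma at all; it simply states the result and attributes it to Trevisan~\cite{Tre16}, so there is nothing further to compare.
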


When dealing with a random walk dimension $0$, i.e., the vertices of the complex, we cannot guarantee that the graph is far from being bipartite. In this case we add $\deg(v)$ self-loops to any vertex $v \in V_0$. Then we get a {\em lazy} random walk on the vertices of the complex, where at each step we stay put with probability $1/2$. It turns out that this technique cancels the negative eigenvalues of the complex. The following proposition follows from elementary linear algebra and can also be found at~\cite{SJ89}.
\begin{pro}\label{self-loops-prop}
	Let $G=(V,E)$ be an undirected graph where multiple edges are allowed, $\widetilde A$ its normalized adjacency matrix and $\lambda_1 \ge \lambda_2 \ge \dotsb \ge \lambda_{|V|}$ the eigenvalues of $\widetilde A$. Denote by $G'=(V,E')$ the modified graph after adding $\deg(v)$ self-loops to any vertex $v \in V$. Then the eigenvalues $\lambda'_2$ and $\lambda'_{|V|}$ of the modified normalized adjacency matrix $\widetilde A'$ satisfy $\lambda'_2 = (1+\lambda_2)/2$ and $\lambda'_{|V|} \ge 0$.
\end{pro}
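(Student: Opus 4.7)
The plan is to express $\widetilde A'$ as an affine function of $I$ and $\widetilde A$, from which both identities fall out immediately by reading off the spectrum. Let $D$ denote the diagonal degree matrix of $G$ with $D(v,v)=\deg(v)$, so that $\widetilde A = D^{-1/2}AD^{-1/2}$. Adding $\deg(v)$ self-loops at every vertex replaces the adjacency matrix by $A' = A + D$, because the $(v,v)$-entry increases by exactly $\deg(v)$. Under the convention used throughout Section~\ref{non-regular-section} (where $\deg(u)=\sum_v A(u,v)$, i.e., each self-loop contributes $1$ to the degree, which is precisely what makes the resulting walk lazy with stay-probability $1/2$), the new degree matrix becomes $D'=2D$.

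Substituting into the definition of the normalized adjacency matrix,
$$\widetilde A' \;=\; (D')^{-1/2}A'(D')^{-1/2} \;=\; (2D)^{-1/2}(A+D)(2D)^{-1/2} \;=\; \tfrac{1}{2}\bigl(D^{-1/2}AD^{-1/2}+I\bigr) \;=\; \tfrac{1}{2}\bigl(I+\widetilde A\bigr).$$
Since $\widetilde A'$ is a polynomial in $\widetilde A$, the two matrices share an orthonormal eigenbasis, and the spectrum of $\widetilde A'$ is obtained from that of $\widetilde A$ by the monotone affine map $\lambda \mapsto (1+\lambda)/2$. Because this map preserves order, the second-largest eigenvalue is $\lambda'_2 = (1+\lambda_2)/2$, which is the first claim. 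For the second, the standard fact that every eigenvalue of the normalized adjacency matrix of an undirected graph lies in $[-1,1]$ yields $\lambda_{|V|} \ge -1$, so $\lambda'_{|V|} = (1+\lambda_{|V|})/2 \ge 0$.

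There is no genuine obstacle here: the whole argument is a two-line spectral calculation. The only point that deserves care is the bookkeeping for self-loops---one must use the convention consistent with the paper's definition of $\deg$ and with the intended laziness of the walk, namely that the $\deg(v)$ added loops at $v$ inflate its degree from $\deg(v)$ to $2\deg(v)$ rather than to $3\deg(v)$ as in the ``self-loop counts twice'' convention. With that fixed, the identity $\widetilde A'=\tfrac{1}{2}(I+\widetilde A)$ does all the work, and the bound $\lambda_{|V|}\ge -1$ supplies the only nontrivial spectral input.
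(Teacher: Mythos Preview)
Your argument is correct and is exactly the elementary spectral computation the paper has in mind: the paper does not actually prove this proposition, stating only that it ``follows from elementary linear algebra and can also be found at~\cite{SJ89}.'' Your derivation of $\widetilde A' = \tfrac{1}{2}(I+\widetilde A)$ and the resulting affine spectral map is the standard way to fill in that gap, and your care about the self-loop convention is appropriate and consistent with the paper's stated laziness probability of $1/2$.
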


\subsection{Organization of this paper}
The paper is organized as follows. In Section $2$ we prove that the expansion of the links imply colorful expansion (Theorem~\ref{main-thm-1-informal}). In Section $3$ we prove the rapid mixing criterion for the high order random walks (Theorems~\ref{main-thm-2-informal} and~\ref{main-thm-3-informal}). In Section $4$ we show that Ramanujan complexes satisfy this criterion and hence form an explicit family of bounded degree high dimensional complexes in which all of the high order random walks mix rapidly (Theorem~\ref{main-thm-4-informal}).

\subsection{Acknowledgements}
We would like to thank Roei Tell for introducing us to the probabilistic view of the norm, it made the proofs much simpler.

\section{Colorful expansion}
In this section we prove that the expansion of all the links of a complex imply its colorful expansion. The following is the formal version of Theorem \ref{main-thm-1-informal} from the introduction.
\begin{thm}[Colorful expansion criterion]\label{main-thm-1}
	Let $X$ be a $d$-dimensional $\alpha$-skeleton expander. If $\alpha < (\sqrt[2^d]{2}-1)/\sqrt{2}$, then there exists $\epsilon = \epsilon(d,\alpha)$ such that $X$ is an $\epsilon$-colorful expander. Specifically, it holds for
	$$\epsilon = \left(\frac{\sqrt[2^d]{2}-1-\sqrt2\alpha}{2\sqrt2d}\right)^d.$$
\end{thm}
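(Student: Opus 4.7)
My approach follows the three-step outline the authors sketch in the informal overview. Fix $W \in C^i(X)$ with $0 < \norm{W} \le 1/2$, and pick a threshold $\gamma$ slightly above $1/2$ (the precise value is tuned at the end to balance the two sides of the final estimate). Define fat faces by downward induction: $F_i := W$ and, for $-1 \le k < i$,
\[
F_k := \bigl\{\sigma \in X(k) : \norm{\{v \in X_\sigma(0) : \sigma \cup \{v\} \in F_{k+1}\}}_\sigma \ge \gamma\bigr\}.
\]
Because $\norm{W} \le 1/2 < \gamma$, fatness cannot propagate all the way down: using the probabilistic face-sampling view described in Section~\ref{non-regular-section}, one has $\norm{F_{k+1}} \ge \gamma\norm{F_k}$ along the descent, so iterating gives $\norm{F_{-1}} \le \norm{W}/\gamma^{i+1}$, which is strictly below $1$ for an appropriate $\gamma$; in particular the empty face is not fat. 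Consequently there is a maximal dimension $j \le i$ at which propagation breaks, in the sense that a noticeable fraction of fat $j$-faces contain at least one non-fat $(j-1)$-subface; a double-counting argument against the bound $\norm{F_k} \ge \gamma\norm{F_{k-1}}$ pins down the quantitative ``noticeable'' in terms of $\gamma$ and $i-j$.

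Call $\tau \in X(k)$ \emph{full} when every $(k-1)$-subface of $\tau$ is fat. The theorem reduces to the two-sided estimate
\[
\norm{\psi(W)} \;\ge\; \norm{\{\tau \in X(i+1) : \tau \cap W \ne \emptyset\}} \;-\; \norm{\{\text{full } (i+1)\text{-faces}\}},
\]
since an $(i+1)$-face that hits $W$ but is not full is expanding by definition. For the first term (the \emph{container bound}) I would climb back upward from the many fat $j$-faces produced in Step~2: by the very definition of fatness, above every fat $k$-face at least a $\gamma$-fraction of the $(k+1)$-faces are fat, so iterating $i-j$ times delivers many fat $i$-faces, i.e.\ elements of $W$, each contained in some $(i+1)$-face of $X$. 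This yields a lower bound of the form $c(\gamma,d)\,\norm{W}$ on the fraction of $(i+1)$-faces hitting $W$.

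For the second term I first bound full $(j+1)$-faces directly: for any non-fat $(j-1)$-face $\rho$, the set $S \subseteq X_\rho(0)$ of globally fat vertices has $\norm{S}_\rho < \gamma$ by non-fatness of $\rho$, and a full $(j+1)$-face containing $\rho$ descends to an edge of $X_\rho$ whose two endpoints both lie in $S$. Skeleton expansion of $X_\rho$ then gives $\norm{E(S)}_\rho \le \norm{S}_\rho(\norm{S}_\rho + \alpha) < \gamma(\gamma + \alpha)$, and averaging over non-fat $\rho$ bounds the global density of full $(j+1)$-faces. To propagate this bound to level $i+1$ I exploit the maximality of $j$: for every $k > j+1$, almost every fat $(k-1)$-face is itself full, so almost every full $k$-face has only full $(k-1)$-subfaces, a compatibility I would certify via skeleton expansion applied in the links of appropriately chosen $(k-2)$-faces. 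Iterating the resulting quadratic-type inequality $\phi_k \le \phi_{k-1}(\phi_{k-1}+\alpha) + O(\alpha)$ on the full-face densities $\phi_k$ from $k=j+2$ up to $k=i+1$ keeps $\phi_{i+1}$ strictly below the container bound, and the difference gives the required $\epsilon\,\norm{W}$ expanding $(i+1)$-faces.

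The main obstacle is this upward propagation of the full-face bound. Skeleton expansion behaves like the map $x \mapsto x(x+\alpha)$, and applying it $d$ times squares the argument $d$ times, so the tolerance on $\alpha$ compresses like a $2^d$-th root; this is precisely what the hypothesis $\alpha < (\sqrt[2^d]{2}-1)/\sqrt{2}$ encodes, being the largest $\alpha$ for which $d$ iterations of the map, started just above $1/\sqrt{2}$, stay bounded below $1$. Simultaneously $\gamma$ must be chosen just above $1/2$ to keep the empty face non-fat in Step~1, yet not so large that the container bound in the first half of Step~3 decays too fast; balancing the two constraints and unwrapping the constants accumulated over the $d$ upward steps yields the explicit $\epsilon = \bigl((\sqrt[2^d]{2}-1-\sqrt{2}\alpha)/(2\sqrt{2}d)\bigr)^d$ in the statement.
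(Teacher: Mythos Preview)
Your proposal follows the paper's three-step outline closely, but it diverges in two places that prevent the argument from closing.

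First, you use a single fatness threshold $\gamma$ at every level, whereas the paper uses a threshold that \emph{squares at each step downward}: a $j$-face is declared fat when the local density of fat $(j{+}1)$-faces exceeds $\eta^{2^{i-j-1}}$. This is not cosmetic. With a fixed $\gamma$ your two constraints on it are incompatible: to keep the empty face non-fat you need $\gamma^{i+1}>\norm{W}$, forcing $\gamma>2^{-1/(i+1)}$ (so $\gamma\to 1$ as $i$ grows), while your skeleton-expansion step produces the bound $\gamma(\gamma+\alpha)$ on full $(j{+}1)$-faces, which is then essentially $1$ and cannot beat the container lower bound. Concretely, if one runs the paper's scheme with a fixed $\gamma$, the container factor at level $j$ is $\gamma^{\,i-j}$ while the leading full-face factor is $(\gamma+\alpha)$; for any $j<i$ one has $\gamma^{\,i-j}\le\gamma<\gamma+\alpha$, so the difference is negative and no expansion follows. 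The doubling-exponent schedule is precisely what makes the container factor $\eta^{2^{i-j}-1}$ and the full-face factor $\eta^{2^{i-j}}+\alpha$ commensurate, with gap $(\eta^{-1}-1)\eta^{2^{i-j}}-\alpha$; this is where the hypothesis $\alpha<(\sqrt[2^d]{2}-1)/\sqrt{2}$ actually enters, not via iterating the map $x\mapsto x(x+\alpha)$.

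Second, the upward propagation you describe as a recursion $\phi_k\le\phi_{k-1}(\phi_{k-1}+\alpha)+O(\alpha)$ is not justified. Skeleton expansion in the link of a $(k{-}2)$-face bounds full $k$-faces by edges between \emph{fat} $(k{-}1)$-vertices, not between \emph{full} $(k{-}1)$-vertices; ``full'' (all subfaces fat) and ``fat'' (many superfaces fat) are different predicates, so one cannot substitute the global full-face density $\phi_{k-1}$ for the local fat-vertex density. The paper instead telescopes additively,
\[
\Pr\!\big[F^{i+1}\wedge\overline{S^{j-1}}\,\big]\le \Pr\!\big[F^{j+1}\wedge\overline{S^{j-1}}\,\big]+\sum_{k=j+2}^{i+1}\Pr\!\big[F^k\wedge\overline{F^{k-1}}\,\big],
\]
bounds each summand by $k\,(\eta^{2^{i-k+1}}+\alpha)\Pr[S^{k-1}\wedge\overline{S^{k-2}}]$ via a \emph{single} application of skeleton expansion in links of non-fat $(k{-}2)$-faces, and then uses the maximality of $j$ to make those $\Pr[S^{k-1}\wedge\overline{S^{k-2}}]$ terms geometrically small. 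There is no iteration of $x\mapsto x(x+\alpha)$; the $2^d$-th-root dependence comes entirely from the threshold schedule.
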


This section is organized as follows. In Section~$2.1$ we define the required ingredients of the proof in a formal way. In Section~$2.2$ we just state the three propositions which are the main building blocks in the proof of the theorem. In Section~$2.3$ we prove the theorem assuming these propositions hold, and then in Section~$2.4$ we prove the main propositions which we stated earlier.

\subsection{Definitions}
Since we prove a global expansion property from the local expansion of the links, it is useful for us to relate cochains in the complex to their local views in the links.
\begin{dfn}[Localization]
	Let $X$ be a $d$-dimensional simplicial complex. For any $i$-cochain $W \in C^i(X)$, $0 \le i \le d$, and any $j$-face $\sigma \in X(j)$, $j < i$, the {\em localization} of $W$ to the link of $\sigma$ is defined as
	$$W_\sigma = \{\tau \in X_\sigma \;|\; \tau \cupdot \sigma \in W\}.$$
\end{dfn}

Recall that for any $i$-cochain $W \in C^i(X)$ it holds that $\norm{W} = \Pr[P_i \in W]$. When considering the localization of $W$ to a link of some $j$-face $\sigma \in X(j)$ ($j < i$) we have
\begin{equation}\nonumber
\norm{W_\sigma}_\sigma = \Pr[P_i \in W \;|\; P_j = \sigma].
\end{equation}
This holds because the condition $P_j = \sigma$ implies that $P_d$ is uniformly distributed on the $d$-faces containing $\sigma$, which are the maximal faces in the link of $\sigma$.

For any $i$-cochain $W \in C^i(X)$ and any dimension $j \le i$ we define the $j$-cochain of fat faces. These faces are defined recursively such that any $i$-face in $W$ is fat, and any $j$-face, $j<i$, is fat if many of the $(j+1)$-faces which contain it are fat.
\begin{dfn}[Fat faces]
	Let $X$ be a $d$-dimensional simplicial complex, $0 \le i \le d$, $W \in C^i(X)$ an arbitrary $i$-cochain and $0 < \eta < 1$ a fatness constant. The $i$-cochain of fat faces is defined as $S^i(W) = W$, and for any $-1 \le j < i$
	$$S^j(W) = \left\{\sigma \in X(j) \;|\; \norm{S^{j+1}(W)_\sigma}_\sigma \ge \eta^{2^{i-j-1}}\right\}.$$
\end{dfn}

Recall that the expanding faces of any $i$-cochain $W \in C^i(X)$ are the $(i+1)$-faces which contain at least one face in $W$ and one face not in $W$. For a simple representation of the expanding faces we define the following cochains.
\begin{dfn}[Container]
	Let $X$ be a $d$-dimensional simplicial complex.
	For any $i$-cochain $W \in C^i(X)$, $0 \le i \le d$, the {\em container} of $W$ is defined as
	$$\Gamma(W) = \{\tau \in X(i+1) \;|\; \exists \, \sigma \subset \tau: \sigma \in W\}.$$
\end{dfn}

\begin{dfn}[Full faces]
	Let $X$ be a $d$-dimensional simplicial complex.
	For any $i$-cochain $W \in C^i(X)$, $0 \le i < d$, the {\em full faces} of dimension $j$, $0 \le j \le i+1$, are defined as
	$$F^j(W) = \{\tau \in X(j) \;|\; \forall \, \sigma \subset \tau,\; \dim(\sigma)=j-1: \sigma \in S^{j-1}(W)\}.$$
\end{dfn}

In words, the container of $W$ is the $(i+1)$-cochain of faces which contain at least one $i$-face in $W$. The full faces are defined for any dimension $j \le i+1$ and they are the $j$-faces which all of their $(j-1)$-faces are fat. Then we get
$$\psi(W) = \Gamma(W) \setminus F^{i+1}(W).$$

\subsection{Main propositions}
The first proposition ensures that, with the right parameters, there exists some dimension $j \le i$ in which many fat $j$-faces contain many non-fat $(j-1)$-faces.
\begin{pro}[Existence of dimension in which many fat faces contain many non-fat faces]\label{good-dimension-pro}
	Let $X$ be a $d$-dimensional simplicial complex, $0 \le i < d$, $0 < \eta < 1$ a fatness constant.
	Then for any $i$-cochain $W \in C^i(X)$, $\norm{W} < \eta^{2^{i+1}-1} $, and any $c \le 1/2$ there exists $0 \le j \le i$ such that
	$$\Pr\!\big[P_j \in S^j(W) \wedge P_{j-1} \notin S^{j-1}(W)\big] \ge \frac{c^j}{i+1}\norm{W}.$$
\end{pro}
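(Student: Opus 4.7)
Let me write $p_j := \Pr[P_j \in S^j(W)]$ and $q_j := \Pr[P_j \in S^j(W) \wedge P_{j-1} \notin S^{j-1}(W)]$, so $p_i = \norm{W}$ and the goal is to exhibit some $j \in \{0,\dots,i\}$ with $q_j \ge \frac{c^j}{i+1}\norm{W}$. My plan is a simple pigeonhole argument over $j$: I will prove $\sum_{j=0}^{i} q_j \ge \norm{W}$, which by averaging yields some index with $q_j \ge \norm{W}/(i+1)$; since $c^j \le 1$ for every $c \le 1$, this already dominates the required bound. In particular, the factor $c^j$ does not actually enter my proof---presumably it is written into the statement to give later arguments flexibility in insisting on a specific (e.g.\ maximal) choice of $j$, at the cost of a weaker lower bound.

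The telescoping step is immediate: splitting $p_j$ according to whether $P_{j-1}$ is fat gives
$$p_j = \Pr[P_j \in S^j(W),\, P_{j-1} \in S^{j-1}(W)] + q_j \le p_{j-1} + q_j,$$
hence $q_j \ge p_j - p_{j-1}$. Summing from $j=0$ to $j=i$ telescopes to $\sum_{j=0}^{i} q_j \ge p_i - p_{-1} = \norm{W} - p_{-1}$, so it suffices to show that the hypothesis $\norm{W} < \eta^{2^{i+1}-1}$ forces $p_{-1}=0$, i.e.\ that the empty face is never fat.

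This last step is the only delicate computation and is where I expect the main obstacle to lie. I plan to establish the one-step recursion $p_{j+1} \ge \eta^{2^{i-j-1}} p_j$ by conditioning on $P_j$: by the definition of fatness, every $\sigma \in S^j(W)$ satisfies $\norm{S^{j+1}(W)_\sigma}_\sigma \ge \eta^{2^{i-j-1}}$, and summing this bound over $\sigma \in S^j(W)$ weighted by $\Pr[P_j = \sigma]$ yields the recursion. Iterating from $j=0$ up to $j=i-1$ and using $2^{i-1} + 2^{i-2} + \cdots + 2^0 = 2^i - 1$ gives $p_0 \le \norm{W}/\eta^{2^i-1}$. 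Since the link of $\emptyset$ is all of $X$, the condition ``$\emptyset \in S^{-1}(W)$'' unwinds to $p_0 \ge \eta^{2^i}$; the hypothesis then gives $p_0 < \eta^{2^i}$, so $\emptyset \notin S^{-1}(W)$ and $p_{-1}=0$. The exponent bookkeeping is tight: $(2^i - 1) + 2^i = 2^{i+1} - 1$ matches the hypothesis on $\norm{W}$ exactly, which is a good sign that this is the intended argument.
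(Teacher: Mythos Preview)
Your proof is correct and follows essentially the same approach as the paper: the telescoping inequality $q_j \ge p_j - p_{j-1}$ is exactly the content of the paper's Lemma~2.5 (applied with $j=-1$), and your recursion $p_{j+1} \ge \eta^{2^{i-j-1}} p_j$ is the paper's Lemma~2.4, used in the same way to force $p_{-1}=0$. The only cosmetic difference is that the paper phrases the final step as a contradiction (if every $q_j < \tfrac{c^j}{i+1}\norm{W}$ then $\norm{W} < \norm{W}$) whereas you phrase it as averaging; your observation that the factor $c^j$ is slack here is also correct and matches how the paper uses only $c^j \le 1$ at this point.
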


We relate to these non-fat $(j-1)$-faces as faces with a large ``potential'' since they have many fat $j$-faces in their links. We show that this potential can be ``lifted'' up to dimension $i+1$ in order to deduce that many faces of $W$ contain these non-fat $(j-1)$-faces.
\begin{pro}[Large norm of fat faces implies large norm of the container of $W$]\label{container-bound-pro}
	Let $X$ be a $d$-dimensional simplicial complex, $0 \le i < d$, $0 < \eta < 1$ a fatness constant. Then for any $i$-cochain $W \in C^i(X)$ and any $0 \le j \le i$
	$$\Pr[P_{i+1} \in \Gamma(W) \wedge P_{j-1} \notin S^{j-1}(W)] \ge \eta^{2^{i-j}-1}\Pr[P_j \in S^j(W) \wedge P_{j-1} \notin S^{j-1}(W)].$$
\end{pro}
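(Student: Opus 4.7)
The plan is to convert the local fatness definition into a chain of one-step conditional probability bounds along the top-down sequence $P_d \supset P_{d-1} \supset \cdots \supset P_{-1}$, compound these bounds from dimension $j$ up to $i$, and observe that the jump from $i$ to $i+1$ is automatic since $P_{i+1} \supset P_i$.

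First I would invoke the probabilistic reading of the norm given just after the Localization definition: for any $(k+1)$-cochain $U$ and any $k$-face $\sigma$, $\norm{U_\sigma}_\sigma = \Pr[P_{k+1}\in U \mid P_k = \sigma]$. Applied to $U = S^{k+1}(W)$, the defining condition $\sigma \in S^k(W)$ becomes the one-step lower bound $\Pr[P_{k+1}\in S^{k+1}(W) \mid P_k = \sigma] \ge \eta^{2^{i-k-1}}$.

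Next I would iterate these one-step bounds for $k=j,j+1,\ldots,i-1$ using the Markov property of the top-down chain. Writing $E_k = \{P_k \in S^k(W)\}$, one conditions step by step on the highest-index unpeeled variable, picking up the factor $\eta^{2^{i-k-1}}$ at level $k$ whenever $E_k$ holds; the exponents telescope as a geometric sum. Concretely, for every $\sigma \in S^j(W)$ this should give
$$\Pr[E_{j+1}\wedge E_{j+2}\wedge\cdots\wedge E_i \mid P_j = \sigma] \;\ge\; \prod_{k=j}^{i-1}\eta^{2^{i-k-1}} \;=\; \eta^{\sum_{m=0}^{i-j-1}2^m} \;=\; \eta^{2^{i-j}-1}.$$
Since $S^i(W)=W$, this is already a bound on $\Pr[P_i \in W \mid P_j=\sigma]$, and since $P_{i+1}\supset P_i$ by construction, the event $P_i\in W$ forces $P_{i+1}\in\Gamma(W)$ deterministically, yielding $\Pr[P_{i+1}\in\Gamma(W)\mid P_j=\sigma]\ge \eta^{2^{i-j}-1}$ for every $\sigma\in S^j(W)$.

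To finish, I would use once more that the chain is Markov, so conditional on $P_j$ the upstream face $P_{i+1}$ is independent of the downstream face $P_{j-1}$. Multiplying the previous bound against $\Pr[P_j=\sigma \wedge P_{j-1}\notin S^{j-1}(W)]$ and summing over $\sigma\in S^j(W)$ yields the asserted inequality. The one point requiring care is exactly this conditional-independence step: it is what allows the event $P_{j-1}\notin S^{j-1}(W)$ to be carried along through the iteration with no loss of constant. Once that is cleanly justified from the top-down construction of $(P_k)$, the remainder is a straightforward geometric-series bookkeeping of exponents.
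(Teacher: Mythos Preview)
Your proposal is correct and follows essentially the same approach as the paper: iterate the fatness lower bound from dimension $j$ up to $i$ (the paper phrases this as a verbatim rerun of Lemma~\ref{fat-faces-upper-bound-lemma} with the extra event $\overline{S^{j-1}}$ carried along), then use $\Pr[\Gamma\mid W]=1$ to pass from $W$ to $\Gamma(W)$. Your explicit appeal to the Markov property of the chain $(P_k)$ to justify carrying $\overline{S^{j-1}}$ through the iteration is exactly what the paper leaves implicit in the phrase ``just add the event $\overline{S^{j-1}}$ to each step.''
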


The last proposition shows that, in a skeleton expander complex, the non-fat $(j-1)$-faces can be ``lifted'' up to dimension $i+1$ in order to deduce that many of the $(i+1)$-faces, which contain these non-fat $(j-1)$-faces, are not full.
\begin{pro}[Large norm of non-fat faces implies large norm of non-full faces]\label{full-faces-upper-bound-pro}
	Let $X$ be an $\alpha$-skeleton expander, $0 \le i < d$, $0 < \eta < 1$ a fatness constant. Then for any $i$-cochain $W \in C^i(X)$ and any $0 \le j \le i$
	\begin{displaymath}
	\begin{split}
	\Pr[P_{i+1} \in\, &F^{i+1}(W) \wedge P_{j-1} \notin S^{j-1}(W)] \le
	\\[6pt]&\big(\eta^{2^{i-j}} + \alpha\big)\Pr[P_j \in S^j(W) \wedge P_{j-1} \notin S^{j-1}(W)]\; + \\&\sum_{k=j+1}^i(k+1)\big(\eta^{2^{i-k}} + \alpha\big)\Pr[P_k \in S^k(W) \wedge P_{k-1} \notin S^{k-1}(W)]
	\end{split}
	\end{displaymath}
\end{pro}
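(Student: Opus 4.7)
The plan is to express the left-hand side as a telescoping sum and bound each piece with a single application of $\alpha$-skeleton expansion in the link of a non-fat face. Define
$$B_k := \Pr\!\left[P_k \in F^k(W),\ P_{j-1} \notin S^{j-1}(W)\right] \qquad \text{for } k = j+1,\ldots,i+1,$$
so the quantity to bound is $B_{i+1} = B_{j+1} + \sum_{k=j+1}^{i}(B_{k+1}-B_k)$. Two ingredients drive the proof: a base lemma obtained from skeleton expansion in one link, and a union bound combined with the conditional independence of the chain above $P_k$ and the chain below $P_k$ given $P_k$.

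I would first establish the following \emph{base lemma}: for every $0 \le k \le i$,
$$\Pr\!\left[P_{k+1} \in F^{k+1}(W),\, P_{k-1} \notin S^{k-1}(W)\right] \le (\eta^{2^{i-k}} + \alpha)\,\Pr\!\left[P_k \in S^k(W),\, P_{k-1} \notin S^{k-1}(W)\right].$$
Condition on $P_{k-1}=\sigma$ non-fat and pass to the link $X_\sigma$. A full $(k+1)$-face $\tau \supset \sigma$ corresponds to an edge $\tau\setminus\sigma \in X_\sigma(1)$ whose two endpoints (the two $k$-subfaces of $\tau$ containing $\sigma$) both lie in $S^k(W)_\sigma \subseteq X_\sigma(0)$, so $\norm{F^{k+1}(W)_\sigma}_\sigma \le \norm{E(S^k(W)_\sigma)}_\sigma$. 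Skeleton expansion in $X_\sigma$ then gives $\norm{E(S^k(W)_\sigma)}_\sigma \le \norm{S^k(W)_\sigma}_\sigma(\norm{S^k(W)_\sigma}_\sigma + \alpha)$, and non-fatness of $\sigma$ forces $\norm{S^k(W)_\sigma}_\sigma < \eta^{2^{i-k}}$, giving the factor $\eta^{2^{i-k}} + \alpha$. Integrating over non-fat $\sigma$ using the probabilistic view of the norm finishes the proof. The case $k=j$ is exactly the leading term of the proposition, so $B_{j+1} \le (\eta^{2^{i-j}} + \alpha)\Pr[P_j \in S^j(W),\,P_{j-1} \notin S^{j-1}(W)]$.

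For the inductive step, split $\{P_{k+1}\in F^{k+1}\}$ by whether $P_k \in F^k$. The part where $P_k$ is full is absorbed into $B_k$ after dropping the condition $P_{k+1}\in F^{k+1}$, and the part where $P_k\notin F^k$ is bounded after dropping $P_{j-1}\notin S^{j-1}$, giving $B_{k+1} - B_k \le \Pr[P_{k+1}\in F^{k+1},\,P_k\notin F^k]$. Now $P_k\notin F^k$ says that some $P_k\setminus v$ is non-fat, so applying the union bound $\mathbf{1}[P_k \notin F^k] \le \sum_{v\in P_k}\mathbf{1}[P_k\setminus v \notin S^{k-1}]$, averaging over the $k+1$ vertices of $P_k$ (which turns the sum into $(k+1)\Pr[P_{k-1}\notin S^{k-1}\mid P_k]$), and using the conditional independence of $P_{k+1},\ldots,P_d$ and $P_{k-1},\ldots,P_{-1}$ given $P_k$ (which holds because, given $P_d$ and $P_k$, the orderings of $P_k$ and of $P_d\setminus P_k$ are independent uniform permutations), I obtain
$$\Pr[P_{k+1}\in F^{k+1},\,P_k\notin F^k] \le (k+1)\,\Pr[P_{k+1}\in F^{k+1},\,P_{k-1}\notin S^{k-1}].$$
A second application of the base lemma, this time with $k$ in the role of $j$, bounds the right-hand side by $(k+1)(\eta^{2^{i-k}}+\alpha)\Pr[P_k\in S^k(W),\,P_{k-1}\notin S^{k-1}(W)]$, which is exactly the $k$-th summand of the proposition. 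Telescoping over $k=j+1,\ldots,i$ then yields the stated inequality.

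The main obstacle is the factor-$(k+1)$ step: one must pass from a deterministic union bound over the $(k-1)$-subfaces of $P_k$ to a clean probability statement about the random $P_{k-1}$, all while keeping the ``above'' event $\{P_{k+1}\in F^{k+1}\}$ in play so that the base lemma can be invoked a second time. Exchangeability of the $(k-1)$-subfaces of $P_k$ together with the conditional independence of above and below given $P_k$ — both consequences of the probabilistic view of the norm — is precisely what makes the joint probability factor and closes the argument.
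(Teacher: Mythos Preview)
Your proposal is correct and follows essentially the same route as the paper: the telescoping decomposition is the paper's Lemma~2.8, the union-bound step producing the factor $k+1$ is the paper's Lemma~2.9, and your ``base lemma'' is the paper's Lemma~2.10, with the same use of skeleton expansion in the link of a non-fat $(k-1)$-face. Your explicit invocation of the Markov/conditional-independence structure of the chain $\{P_k\}$ is a slight sharpening of the paper's presentation (which uses it implicitly in the proof of Lemma~2.9), but the substance is identical.
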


\subsection{Proof of Theorem \ref{main-thm-1}}
Recall that for colorful expansion we need that any $i$-cochain with norm up to $1/2$ (of any dimension $i<d$) will have many expanding faces. Let $W \in C^i(X)$, $0 \le i < d$, be an arbitrary $i$-cochain with $\norm{W} \le 1/2$. Set $\eta = \sqrt[2^{i+1}]{1/2}$ and $$c = \frac{(\eta^{-1}-1)\eta^{2^i} - \alpha}{2(i+1)}.$$

By Proposition \ref{good-dimension-pro} there exists a dimension $0 \le j \le i$ in which many fat $j$-faces contain many non-fat $(j-1)$-faces, i.e.,
\begin{equation}\label{main-theorem-eq-1}
\Pr[P_j \in S^j(W) \wedge P_{j-1} \notin S^{j-1}(W)] \ge \frac{c^j}{i+1}\norm{W}.
\end{equation}
Let $j$ be the maximal which satisfies~\eqref{main-theorem-eq-1}, so for any $k \in \{j+1,\dotsc,i\}$
\begin{equation}\label{main-theorem-eq-2}
\Pr[P_k \in S^k(W) \wedge P_{k-1} \notin S^{k-1}(W)] < \frac{c^k}{i+1}\norm{W}.
\end{equation}

We are going to show that the non-fat $(j-1)$-faces imply many expanding $(i+1)$-faces. By definition we have
\begin{equation}\label{main-theorem-eq-3}
\begin{split}
\norm{\psi(W)} &=
\norm{\Gamma(W) \setminus F^{i+1}(W)} =
\Pr[P_{i+1} \in \Gamma(W) \setminus F^{i+1}(W)] \ge
\\[4pt]&\Pr[P_{i+1} \in \Gamma(W) \setminus F^{i+1}(W) \wedge P_{j-1} \notin S^{j-1}(W)] \ge
\\[4pt]&\Pr[P_{i+1} \in \Gamma(W) \wedge P_{j-1} \notin S^{j-1}(W)] - \Pr[P_{i+1} \in F^{i+1}(W) \wedge P_{j-1} \notin S^{j-1}(W)],
\end{split}
\end{equation}
where both of the inequalities follow from laws of probability. The meaning of this is that the expanding faces of $W$ are at least the difference between the container of $W$ and the full $(i+1)$-faces, when both are restricted to the non-fat $(j-1)$-faces.

By the existence of many fat $j$-faces in the links of the non-fat $(j-1)$-faces we know that a large part of $W$ is seen in these links, or in other words, we can ``lift'' the fat $j$-faces in order to deduce a lower bound on the container of $W$. This lower bound is given formally by Proposition~\ref{container-bound-pro}:
\begin{equation}\label{main-theorem-eq-4}
\Pr[P_{i+1} \in \Gamma(W) \wedge P_{j-1} \notin S^{j-1}(W)] \ge
\eta^{2^{i-j}-1}\Pr[P_j \in S^j(W) \wedge P_{j-1} \notin S^{j-1}(W)].
\end{equation}

By the skeleton expansion of $X$ we can ``lift'' the non-fat $(j-1)$-faces in order to deduce an upper bound on the full $(i+1)$-faces. So by Proposition~\ref{full-faces-upper-bound-pro} we have
\begin{equation}\label{main-theorem-eq-5}
\begin{split}
\Pr[P_{i+1} \in\, &F^{i+1}(W) \wedge P_{j-1} \notin S^{j-1}(W)] \le
\\[6pt]&(\eta^{2^{i-j}} + \alpha)\Pr[P_j \in S^j(W) \wedge P_{j-1} \notin S^{j-1}(W)]\; + \\&\sum_{k=j+1}^i(k+1)(\eta^{2^{i-k}} + \alpha)\Pr[P_k \in S^k(W) \wedge P_{k-1} \notin S^{k-1}(W)]
\end{split}
\end{equation}

Substituting~\eqref{main-theorem-eq-4} and~\eqref{main-theorem-eq-5} in~\eqref{main-theorem-eq-3} yields
\begin{equation}\label{main-theorem-eq-6}
\begin{split}
\norm{\psi(W)} \ge\:
&\big((\eta^{-1}-1)\eta^{2^{i-j}}-\alpha\big)\Pr[P_j \in S^j \wedge P_{j-1} \notin S^{j-1}(W)]\: - \\&\sum_{k=j+1}^i(k+1)(\eta^{2^{i-k}}+\alpha)\Pr[P_k \in S^k \wedge P_{k-1} \notin S^{k-1}(W)].
\end{split}
\end{equation}

Now by substituting~\eqref{main-theorem-eq-1} and~\eqref{main-theorem-eq-2} in~\eqref{main-theorem-eq-6} we get
\begin{equation}\nonumber
\begin{split}
\norm{\psi(W)} &\ge
\left(\frac{(\eta^{-1}-1)\eta^{2^{i-j}}-\alpha}{i+1}c^j - \sum_{k=j+1}^i\frac{k+1}{i+1}(\eta^{2^{i-k}}+\alpha)c^k\right)\norm{W} \\&\ge
c^j\left(\frac{(\eta^{-1}-1)\eta^{2^i}-\alpha}{i+1} - \sum_{k=1}^{i-j}\frac{k+j+1}{i+1}c^k\right)\norm{W} \ge
c^j(2c - c)\norm{W} =
c^{j+1}\norm{W},
\end{split}
\end{equation}
where the second inequality holds because
\begin{equation}\nonumber
\eta^{2^{i-k}} + \alpha \le
\eta + \alpha <
\sqrt[2^{i+1}]{1/2} + \frac{\sqrt[2^{i+1}]{2}-1}{\sqrt2} =
2^{-x} + \frac{2^x-1}{\sqrt2}
\end{equation}
for $x = (1/2)^{i+1}$. Since $k \ge 1$ then $0 < x \le 1/4$ and hence the function is decreasing and gets its maximum at $x=0$, which yields $\eta^{2^{i-k}}+\alpha < 1$ for any $k \ge 1$. The last inequality holds trivially for $j=i$, and for $j < i$ it holds since
\begin{equation}\nonumber
\begin{split}
\sum_{k=1}^{i-j}\frac{k+j+1}{i+1}c^k &\le
c\left(\frac{i+1-(i-j-1)}{i+1} + \sum_{k=2}^{i-j}c^{k-1}\right) \\&\le
c\left(\frac{i+1-(i-j-1)}{i+1} + (i-j-1)\frac{1}{i+1}\right) \le c.
\end{split}
\end{equation}

Since $j \le i < d$ it follows that for any $i$-cochain $W \in C^i(X)$, $1 \le i < d$, $\norm{W} \le 1/2$,
\begin{equation}\nonumber
\frac{\norm{\psi(W)}}{\norm{W}} \ge c^{i+1} \ge \epsilon
\end{equation}
for $\epsilon$ as in the theorem.
%
\qed

\subsection{Proofs of the main propositions}
For ease of notation, in this section we use the following shortcuts.
\begin{itemize}
	\item When the cochain $W$ is clear from the context we write just $S^j$ instead of $S^j(W)$ to denote the fat $j$-faces, as well as $F^j$ and $\Gamma$ instead of $F^j(W)$ and $\Gamma(W)$, respectively.
	\item We relate to each cochain as the event that the matching random variable is in the cochain. For instance, instead of writing $\Pr[P_j \in S^j]$ and $\Pr[P_j \notin S^j]$ we write just $\Pr[S^j]$ and $\Pr[\overline{S^j}]$, respectively.
\end{itemize}

\subsubsection{Proof of Proposition~\ref{good-dimension-pro}}
Our aim is to show that there exists a dimension $j \le i$ in which many fat $j$-faces contain many non-fat $(j-1)$-faces. We first show that a large norm of fat faces of any dimension imply a large norm of $W$.
\begin{lem}[Large norm of fat faces implies large norm of $W$]\label{fat-faces-upper-bound-lemma}
	Let $X$ be a $d$-dimensional simplicial complex, $0 \le i < d$, $0 < \eta < 1$ a fatness constant. Then for any $i$-cochain $W \in C^i(X)$ and any $-1 \le j \le i$
	$$\norm{W} \ge \eta^{2^{i-j}-1}\norm{S^j}.$$
\end{lem}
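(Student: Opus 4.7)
My plan is to prove the lemma by downward induction on $j$, from $j=i$ to $j=-1$. The base case $j=i$ is trivial: $S^i(W) = W$ by definition, and $\eta^{2^{i-i}-1} = \eta^0 = 1$, so the inequality is actually an equality. So the real content is in the inductive step.

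For the inductive step, assume $\norm{W} \ge \eta^{2^{i-j-1}-1}\norm{S^{j+1}}$. To get to dimension $j$, the key intermediate claim I would prove is
\begin{equation*}
\norm{S^{j+1}} \;\ge\; \eta^{2^{i-j-1}}\norm{S^j}.
\end{equation*}
To see this, condition on $P_j$ in the probabilistic view of the norm. Using the identity $\norm{S^{j+1}_\sigma}_\sigma = \Pr[P_{j+1} \in S^{j+1} \mid P_j = \sigma]$ from Section~2.1, I write
\begin{displaymath}
\norm{S^{j+1}} = \Pr[P_{j+1} \in S^{j+1}] = \sum_{\sigma \in X(j)} \Pr[P_j = \sigma]\,\norm{S^{j+1}_\sigma}_\sigma.
\end{displaymath}
Restricting the sum to $\sigma \in S^j$ (which can only decrease it) and using the defining inequality $\norm{S^{j+1}_\sigma}_\sigma \ge \eta^{2^{i-j-1}}$ for fat $j$-faces yields the claim.

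Combining the inductive hypothesis with this intermediate claim gives
\begin{displaymath}
\norm{W} \;\ge\; \eta^{2^{i-j-1}-1}\norm{S^{j+1}} \;\ge\; \eta^{2^{i-j-1}-1}\cdot\eta^{2^{i-j-1}}\norm{S^j} \;=\; \eta^{2^{i-j}-1}\norm{S^j},
\end{displaymath}
where the last equality is just $2\cdot 2^{i-j-1} - 1 = 2^{i-j}-1$. This closes the induction.

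I do not anticipate a serious obstacle here: the statement essentially unwinds the recursive definition of fatness, and everything boils down to the probabilistic identity linking localization in links to conditional distributions of the random chain $\{P_k\}$, which has already been set up in Section~2.1. The only thing one has to be careful about is the bookkeeping of exponents in the tower of $\eta$'s, and keeping track that each step down in dimension squares the threshold; doubling the exponent at each inductive step matches the $2^{i-j}-1$ pattern precisely.
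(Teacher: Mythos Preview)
Your proof is correct and follows essentially the same approach as the paper: both establish the one-step inequality $\norm{S^{k}} \ge \eta^{2^{i-k}}\norm{S^{k-1}}$ from the defining threshold for fat faces and then iterate (the paper phrases it via the conditional probability $\Pr[S^k \mid S^{k-1}] \ge \eta^{2^{i-k}}$ rather than an explicit induction, but this is only a stylistic difference). The exponent bookkeeping $\sum_{m=0}^{i-j-1} 2^m = 2^{i-j}-1$ is handled correctly.
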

\begin{proof}
	Fix an $i$-cochain $W \in C^i(X)$ and some $-1 \le j \le i$. By laws of probability, for any $k \le i$
	\begin{equation}\label{fat-faces-upper-bound-eq}
	\begin{split}
	\big\|S^k\big\| &=
	\Pr\!\big[S^k\big] \ge
	\Pr\!\big[S^k \wedge S^{k-1}\big] =
	\Pr\!\big[S^k \mid S^{k-1}\big]\Pr\!\big[S^{k-1}\big] \\[4pt]&\ge
	\eta^{2^{i-k}}\Pr\!\big[S^{k-1}\big] =
	\eta^{2^{i-k}}\big\|S^{k-1}\big\|,
	\end{split}
	\end{equation}
	where the last inequality follows from the definition of fat faces.
	
	Applying \eqref{fat-faces-upper-bound-eq} iteratively for $k = i,i-1,\dotsc,j+1$ yields
	\begin{equation}\nonumber
	\norm{W} \ge
	\eta\,\eta^2 \dotsm \eta^{2^{i-j-1}}\norm{S^j} =
	\eta^{2^{i-j}-1}\norm{S^j}.
	\end{equation}
\end{proof}

Now we show that for any dimension $j \le i$ the norm of $W$ is bounded by the norm of the fat $j$-faces (which represent fat $i$-faces on fat $(i-1)$-faces on fat $(i-2)$-faces and all the way down to fat $j$-faces) plus the norm of fat $k$-faces which contain non-fat $(k-1)$-faces for all $k \in \{j+1,\dotsc,i\}$.
\begin{lem}[Norm of $W$ as fat faces on fat faces plus fat faces on non-fat faces]\label{cochain-upper-bound-lemma}
	Let $X$ be a $d$-dimensional simplicial complex, $0 \le i < d$. Then for any $i$-cochain $W \in C^i(X)$ and any $-1 \le j \le i$
	$$\norm{W} \le \norm{S^j} + \sum_{k=j+1}^i\Pr\!\big[S^k \wedge \overline{S^{k-1}}\,\big].$$
\end{lem}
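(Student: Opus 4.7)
The plan is to prove this by a straightforward telescoping decomposition on the dimension index, descending from $i$ down to $j$. The probabilistic reformulation of the norm (recorded after the definition of localization) is essential here: translating $\|W\|$ into $\Pr[P_i\in W]=\Pr[S^i]$ reduces the lemma to a purely probabilistic inequality about the events $S^k$.

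First, I would observe the elementary identity
\begin{equation}\nonumber
\Pr\!\big[S^k\big] \;=\; \Pr\!\big[S^k \wedge S^{k-1}\big] + \Pr\!\big[S^k \wedge \overline{S^{k-1}}\,\big],
\end{equation}
valid for every $k$ in the range, together with the trivial upper bound $\Pr[S^k\wedge S^{k-1}]\le \Pr[S^{k-1}]$. Combining these gives the one-step recurrence
\begin{equation}\nonumber
\Pr\!\big[S^k\big] \;\le\; \Pr\!\big[S^{k-1}\big] \;+\; \Pr\!\big[S^k \wedge \overline{S^{k-1}}\,\big].
\end{equation}

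Next, I would iterate this recurrence starting from $k=i$ (where $\Pr[S^i]=\|W\|$) and stepping down through $k=i-1,i-2,\dots,j+1$. At each step the leading $\Pr[S^k]$ term is replaced by $\Pr[S^{k-1}]$ plus an additional summand of the form $\Pr[S^k\wedge \overline{S^{k-1}}]$, which accumulates into the sum on the right-hand side of the claim. After $i-j$ iterations the leading term has become $\Pr[S^j]=\|S^j\|$ and the accumulated summands are precisely $\sum_{k=j+1}^{i}\Pr[S^k\wedge \overline{S^{k-1}}]$, yielding the stated inequality. The argument can be phrased equivalently as induction on $i-j$, with base case $j=i$ (an empty sum and equality), and inductive step supplied by the recurrence above.

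There is really no serious obstacle in this lemma; the only thing to be careful about is that the recurrence is applied uniformly for every $k\in\{j+1,\dots,i\}$, including the case $j=-1$, where $S^{-1}$ is a single event concerning the empty face, and that we do \emph{not} need any use of the skeleton expansion or of the particular threshold $\eta^{2^{i-k}}$ defining fatness — the lemma is purely a measure-theoretic decomposition that holds for any nested family of events. The fatness thresholds only enter later, when one wants to bound $\Pr[S^k\wedge \overline{S^{k-1}}]$ or $\Pr[S^j]$ quantitatively.
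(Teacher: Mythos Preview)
Your proposal is correct and follows essentially the same argument as the paper: the paper also writes $\Pr[S^k]=\Pr[S^k\wedge S^{k-1}]+\Pr[S^k\wedge\overline{S^{k-1}}]\le \Pr[S^{k-1}]+\Pr[S^k\wedge\overline{S^{k-1}}]$ and then iterates this one-step bound for $k=i,i-1,\dots,j+1$. Your additional remarks about the base case $j=i$ and the irrelevance of the fatness threshold are accurate and do no harm.
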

\begin{proof}
	Fix an $i$-cochain $W \in C^i(X)$ and some $-1 \le j \le i$. By laws of probability, for any $k \le i$
	\begin{equation}\label{cochain-upper-bound-eq}
	\begin{split}
	\big\|S^k\big\| &=
	\Pr\!\big[S^k\big] =
	\Pr\!\big[S^k \wedge S^{k-1}\big] + \Pr\!\big[S^k \wedge \overline{S^{k-1}}\,\big] \\[4pt]&\le
	\Pr\!\big[S^{k-1}\big] + \Pr\!\big[S^k \wedge \overline{S^{k-1}}\,\big] =
	\big\|S^{k-1}\big\| + \Pr\!\big[S^k(W) \wedge \overline{S^{k-1}}\,\big].
	\end{split}
	\end{equation}
	
	Applying \eqref{cochain-upper-bound-eq} iteratively for $k = i,i-1,\dotsc,j+1$ finishes the proof.
\end{proof}

Now we use the above two lemmas in order to prove the proposition.

\begin{proof}[Proof of Proposition~\ref{good-dimension-pro}]
	Fix an $i$-cochain $W \in C^i(X)$, $\norm{W} < \eta^{2^{i+1}-1}$. First we show that the empty set is not fat. By Lemma~\ref{fat-faces-upper-bound-lemma} it holds that
	\begin{equation}\nonumber
	\norm{W} \ge
	\eta^{2^{i+1}-1}\norm{S^{-1}},
	\end{equation}
	
	Since $\norm{W} < \eta^{2^{i+1}-1}$ it follows that $\norm{S^{-1}} < 1$, which implies that $\norm{S^{-1}} = 0$ because there is only one $(-1)$-face (the empty set). Now, if for any $0 \le j \le i$
	\begin{equation}\nonumber
	\Pr\!\big[S^j \wedge \overline{S^{j-1}}\,\big] < \frac{c^j}{i+1}\norm{W},
	\end{equation}
	then by Lemma~\ref{cochain-upper-bound-lemma}
	\begin{equation}\nonumber
	\norm{W} \le
	\sum_{j=0}^i\Pr\!\big[S^j \wedge \overline{S^{j-1}}\,\big] <
	\sum_{j=0}^i \frac{c^j}{i+1}\norm{W} \le
	\norm{W},
	\end{equation}
	which leads to a contradiction.
\end{proof}

\subsubsection{Proof of Proposition~\ref{container-bound-pro}}
Fix an $i$-cochain $W \in C^i(X)$ and some $0 \le j \le i$. By laws of probability it follows that
\begin{equation}\label{container-bound-eq1}
\Pr\!\big[\Gamma \wedge \overline{S^{j-1}}\,\big] \ge
\Pr\!\big[\Gamma \wedge \overline{S^{j-1}} \wedge W\big] =
\Pr\!\big[\Gamma \mid \overline{S^{j-1}} \wedge W\big]\Pr\!\big[\overline{S^{j-1}} \wedge W\big] =
\Pr\!\big[\overline{S^{j-1}} \wedge W\big],
\end{equation}
where the last equality holds since $\Pr[\Gamma \mid W] = 1$.

By an argument similar to Lemma~\ref{fat-faces-upper-bound-lemma} (just add the event $\overline{S^{j-1}}$ to each step) we get that
\begin{equation}\label{container-bound-eq2}
\Pr\!\big[W \wedge \overline{S^{j-1}}\,\big] \ge
\eta^{2^{i-j}-1}\Pr\!\big[S^j \wedge \overline{S^{j-1}}\,\big].
\end{equation}

Combining~\eqref{container-bound-eq1} and~\eqref{container-bound-eq2} finishes the proof.
\qed

\subsubsection{Proof of Proposition~\ref{full-faces-upper-bound-pro}}
We want to show that many of the $(i+1)$-faces, which contain non-fat $(j-1)$-faces, are not full. First we count the full $(i+1)$-faces by two terms: Full $(i+1)$-faces which contain only full faces, and full $(i+1)$-faces which contain a non-full $k$-face for some $k \in \{j,\dotsc,i\}$.
\begin{lem}[Full $(i+1)$-faces as full faces on full faces plus full faces on non-full faces]\label{full-on-non-fat-bound-lemma}
	Let $X$ be a $d$-dimensional simplicial complex, $0 \le i < d$, $0 < \eta < 1$ a fatness constant. Then for any $i$-cochain $W \in C^i(X)$ and any $0 \le j \le i$
	$$\Pr\!\big[F^{i+1} \wedge \overline{S^{j-1}}\,\big] \le
	\Pr\!\big[F^{j+1} \wedge \overline{S^{j-1}}\,\big] + \sum_{k=j+2}^{i+1}\Pr\!\big[F^k \wedge \overline{F^{k-1}}\,\big].$$
\end{lem}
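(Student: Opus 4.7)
The plan is an elementary telescoping argument built from the probabilistic identity $\Pr[A]=\Pr[A\wedge B]+\Pr[A\wedge \bar B]$; no properties of the simplicial complex (skeleton expansion, the value of the fatness constant, the dimension of the link, etc.) enter at this step, only the fact that $F^k$ and $S^{j-1}$ are events.

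The key step is a single one-dimension peel. Applying the identity with $A=F^k\wedge \overline{S^{j-1}}$ and $B=F^{k-1}$ gives
$$\Pr\!\big[F^k\wedge \overline{S^{j-1}}\big] = \Pr\!\big[F^k\wedge F^{k-1}\wedge \overline{S^{j-1}}\big] + \Pr\!\big[F^k\wedge \overline{F^{k-1}}\wedge \overline{S^{j-1}}\big],$$
which I would then bound by monotonicity, dropping the outer $F^k$ from the first summand and the outer $\overline{S^{j-1}}$ from the second, to obtain
$$\Pr\!\big[F^k\wedge \overline{S^{j-1}}\big] \le \Pr\!\big[F^{k-1}\wedge \overline{S^{j-1}}\big] + \Pr\!\big[F^k\wedge \overline{F^{k-1}}\,\big].$$
Observe that the first term on the right has exactly the same shape as the left-hand side with $k$ replaced by $k-1$, so the inequality is ready to be recursed.

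I would then iterate this peel for $k=i+1,i,\dots,j+2$ and chain the resulting inequalities. The intermediate terms $\Pr[F^{k-1}\wedge \overline{S^{j-1}}]$ telescope, leaving exactly
$$\Pr\!\big[F^{i+1}\wedge \overline{S^{j-1}}\big] \le \Pr\!\big[F^{j+1}\wedge \overline{S^{j-1}}\big] + \sum_{k=j+2}^{i+1}\Pr\!\big[F^k\wedge \overline{F^{k-1}}\,\big],$$
which is the claim. I do not anticipate any real obstacle; the only subtlety worth flagging is where to stop the recursion: one peels down only to $k=j+2$ so that the residual term is $\Pr[F^{j+1}\wedge \overline{S^{j-1}}]$ and \emph{not} pushed further, because it is precisely this term that the next step in the global proof of Proposition~\ref{full-faces-upper-bound-pro} will handle by invoking the skeleton-expansion hypothesis to compare a full $(j{+}1)$-face sitting over a non-fat $(j{-}1)$-face with a fat $j$-face over the same non-fat $(j{-}1)$-face.
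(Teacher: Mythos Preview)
Your proposal is correct and matches the paper's proof essentially line for line: the paper also splits $\Pr[F^k\wedge\overline{S^{j-1}}]$ according to $F^{k-1}$ versus $\overline{F^{k-1}}$, bounds the two summands by monotonicity exactly as you do, and then applies this inequality iteratively for $k=i+1,i,\dots,j+2$.
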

\begin{proof}
	Fix an $i$-cochain $W \in C^i(X)$ and some $0 \le j < i$. By laws of probability, for any $k>j$
	\begin{equation}\label{full-on-non-fat-bound-eq1}
	\begin{split}
	\Pr\!\big[F^k \wedge \overline{S^{j-1}}\,\big] &=
	\Pr\!\big[F^k \wedge \overline{S^{j-1}} \wedge F^{k-1}\big] + \Pr\!\big[F^k \wedge \overline{S^{j-1}} \wedge \overline{F^{k-1}}\,\big] \\&\le
	\Pr\!\big[F^{k-1} \wedge \overline{S^{j-1}}\,\big] + \Pr\!\big[F^k \wedge \overline{F^{k-1}}\,\big].
	\end{split}
	\end{equation}
	
	Applying~\eqref{full-on-non-fat-bound-eq1} iteratively for $k=i+1,i,\dotsc,j+2$ finishes the proof.
\end{proof}

Now we want to bound the second term of Lemma~\ref{full-on-non-fat-bound-lemma}, so we relate the norm of full $k$-faces which contain a non-full $(k-1)$-face to the norm of full $k$-faces which contain a non-fat $(k-2)$-face.
\begin{lem}[Full $k$-faces with a non-full $(k-1)$-face related to full $k$-faces with a non-fat $(k-2)$-face]\label{full-contain-not-full-bound-lemma}
	Let $X$ be a $d$-dimensional simplicial complex, $0 \le i < d$, $0 < \eta < 1$ a fatness constant. Then for any $i$-cochain $W \in C^i(X)$ and any $1 \le k \le i+1$
	$$\Pr\!\big[F^k \wedge \overline{F^{k-1}}\,\big] \le k\Pr\!\big[F^k \wedge \overline{S^{k-2}}\,\big].$$
\end{lem}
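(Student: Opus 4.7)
The plan is to condition on $P_k = \tau \in F^k$ and bound the two conditional probabilities $\Pr[\overline{F^{k-1}} \mid P_k = \tau]$ and $\Pr[\overline{S^{k-2}} \mid P_k = \tau]$, then show the former is at most $k$ times the latter pointwise; multiplying by $\Pr[P_k = \tau]$ and summing gives the lemma.

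First I would unpack the distribution of $(P_{k-1}, P_{k-2})$ given $P_k = \tau$. Since $\tau$ has $k+1$ vertices, $P_{k-1}$ is the uniform random $(k-1)$-subface of $\tau$ (one of the $k+1$ choices obtained by deleting one vertex), and given $P_{k-1} = \sigma$, $P_{k-2}$ is the uniform random $(k-2)$-subface of $\sigma$ (one of $k$ choices). A routine check shows that marginally $P_{k-2}$ is uniform on the $\binom{k+1}{2}$ $(k-2)$-subfaces of $\tau$.

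Let $B(\tau) = \{\rho \subset \tau : \dim(\rho) = k-2,\ \rho \notin S^{k-2}\}$ be the bad $(k-2)$-subfaces of $\tau$. Then directly
\[
\Pr\!\big[\overline{S^{k-2}} \mid P_k = \tau\big] = \frac{|B(\tau)|}{\binom{k+1}{2}} = \frac{2|B(\tau)|}{k(k+1)}.
\]
On the other hand, $P_{k-1} = \sigma \notin F^{k-1}$ means $\sigma$ contains some $\rho \in B(\tau)$. For any fixed $\rho \in B(\tau)$, the $(k-1)$-subfaces of $\tau$ containing $\rho$ are obtained by adjoining one of the $|\tau \setminus \rho| = 2$ remaining vertices, so exactly $2$ of the $k+1$ choices of $\sigma$ contain $\rho$. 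A union bound over $\rho \in B(\tau)$ then gives
\[
\Pr\!\big[\overline{F^{k-1}} \mid P_k = \tau\big] \le \frac{2|B(\tau)|}{k+1} = k \cdot \Pr\!\big[\overline{S^{k-2}} \mid P_k = \tau\big].
\]

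Multiplying by $\Pr[P_k = \tau]$ and summing over $\tau \in F^k$ yields the claimed inequality. There is no real obstacle here: the only content is the counting identity "each bad $\rho$ lies in exactly $2$ of the $k+1$ $(k-1)$-subfaces of $\tau$," plus a union bound. The factor $k$ appears as the ratio $2/(k+1)$ over $2/(k(k+1))$, i.e.\ it reflects the fact that each $(k-2)$-subface of $\tau$ is contained in $k-1$ many $(k-1)$-subfaces of $\tau$ (a convenient, slightly loose version is captured by the union bound above, which already suffices).
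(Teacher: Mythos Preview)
Your proof is correct and rests on the same combinatorial observation as the paper's: a non-full $(k-1)$-face contains at least one non-fat $(k-2)$-subface, and this subface is hit with probability at least $1/k$ when dropping a vertex. The paper organizes this via the reverse conditional $\Pr[\overline{S^{k-2}}\mid F^k\wedge\overline{F^{k-1}}]\ge 1/k$ and Bayes' rule, whereas you condition on $P_k=\tau$ and compare the two conditional probabilities directly using the count ``each bad $(k-2)$-subface of $\tau$ lies in exactly $2$ of its $k+1$ $(k-1)$-subfaces'' together with a union bound; these are two packagings of the same double count. One small slip: in your closing parenthetical you write ``each $(k-2)$-subface of $\tau$ is contained in $k-1$ many $(k-1)$-subfaces of $\tau$,'' but the correct number is $2$, as you yourself computed a few lines earlier---this does not affect the argument.
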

\begin{proof}
	Fix an $i$-cochain $W \in C^i(X)$ and some $1 \le k \le i+1$. Note that if a $(k-1)$-face $\sigma \in X(k-1)$ is not full, then by definition it contains a non-fat $(k-2)$-face $\tau \subset \sigma$, $\tau \in X(k-2) \setminus S^{k-2}$. Since each $(k-1)$-face contains $k$ vertices, by removing one vertex from $\sigma$ uniformly at random there is a probability of $1/k$ to hit $\tau$. It follows that
	\begin{equation}\label{full-on-non-fat-bound-eq3}
	\begin{split}
	1 &=
	\Pr\!\big[\overline{F^{k-1}} \mid F^k \wedge \overline{S^{k-2}}\,\big] =
	\frac{\Pr\!\big[\overline{F^{k-1}} \wedge F^k \wedge \overline{S^{k-2}}\,\big]}{\Pr\!\big[F^k \wedge \overline{S^{k-2}}\,\big]} \\[2pt]&=
	\frac{\Pr\!\big[\overline{S^{k-2}} \mid F^k \wedge \overline{F^{k-1}}\,\big]\Pr\!\big[F^k \wedge \overline{F^{k-1}}\,\big]}{\Pr\!\big[F^k \wedge \overline{S^{k-2}}\,\big]} \ge
	\frac{\Pr\!\big[F^k \wedge \overline{F^{k-1}}\,\big]}{k\Pr\!\big[F^k \wedge \overline{S^{k-2}}\,\big]},
	\end{split}
	\end{equation}
	where the first equality holds since if $P_{k-2} \in \overline{S^{k-2}}$, then by definition $P_{k-1}$ must have been not full, the second and third equalities follow from laws of probability and the inequality follows from the explanation above.
\end{proof}

The next lemma is the {\em only} place where we use the skeleton expansion of the complex. We use it in order to bound the norm of full $k$-faces which contain a non-fat $(k-2)$-face with relation to the norm of fat $(k-1)$-faces which contain a non-fat $(k-2)$-face.
\begin{lem}[Full $k$-faces with a non-fat $(k-2)$-face related to fat $(k-1)$-faces with a non-fat $(k-2)$-face]\label{k-full-and-k-2-not-fat-lemma}
	Let $X$ be a $d$-dimensional $\alpha$-skeleton expander, $0 \le i < d$, $0 < \eta < 1$ a fatness constant. Then for any $i$-cochain $W \in C^i(X)$ and any $1 \le k \le i+1$
	$$\Pr\!\big[F^k \wedge \overline{S^{k-2}}\,\big] \le \big(\eta^{2^{i-k+1}}+\alpha\big)\!\Pr\!\big[S^{k-1} \wedge \overline{S^{k-2}}\,\big].$$
\end{lem}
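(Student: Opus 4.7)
The plan is to condition on the random variable $P_{k-2}$ and, for each non-fat $(k-2)$-face $\sigma$, reduce the event $F^k$ to an event about the $1$-dimensional skeleton of the link $X_\sigma$, where the $\alpha$-skeleton expansion hypothesis applies directly.

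The first step is to split
\[
\Pr\!\big[F^k \wedge \overline{S^{k-2}}\,\big] \;=\; \sum_{\sigma \in X(k-2) \setminus S^{k-2}} \Pr[P_{k-2} = \sigma]\,\Pr\!\big[F^k \mid P_{k-2} = \sigma\big],
\]
so that it suffices to prove $\Pr[F^k \mid P_{k-2}=\sigma] \le (\eta^{2^{i-k+1}}+\alpha)\,\Pr[P_{k-1}\in S^{k-1}\mid P_{k-2}=\sigma]$ for every $\sigma \notin S^{k-2}$.

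The second step is the local interpretation. A $k$-face $\tau \supseteq \sigma$ corresponds to a $1$-face $\tau\setminus\sigma \in X_\sigma(1)$, whose two endpoints are exactly the two $(k-1)$-subfaces of $\tau$ that contain $\sigma$. If $\tau \in F^k$, then in particular these two $(k-1)$-subfaces are fat, hence the corresponding two endpoints lie in the localization $S^{k-1}_\sigma \subseteq X_\sigma(0)$. By the probabilistic view of the norm stated earlier in the paper, this gives
\[
\Pr\!\big[F^k \mid P_{k-2} = \sigma\big] \;\le\; \norm{E(S^{k-1}_\sigma)}_\sigma.
\]

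The final step is to apply the $\alpha$-skeleton expansion of $X_\sigma$, which yields
\[
\norm{E(S^{k-1}_\sigma)}_\sigma \;\le\; \norm{S^{k-1}_\sigma}_\sigma \big(\norm{S^{k-1}_\sigma}_\sigma + \alpha\big).
\]
Since $\sigma \notin S^{k-2}$, the definition of fat faces (with $j = k-2$) gives $\norm{S^{k-1}_\sigma}_\sigma < \eta^{2^{i-(k-2)-1}} = \eta^{2^{i-k+1}}$, so one factor can be replaced by this bound while the other equals $\Pr[P_{k-1}\in S^{k-1}\mid P_{k-2}=\sigma]$. Summing over $\sigma \in X(k-2)\setminus S^{k-2}$ produces the claimed inequality.

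The only conceptual point worth flagging is the reduction in the second step: a $k$-face has $k+1$ subfaces of codimension one, but the link of $\sigma$ only ``sees'' the two of them that contain $\sigma$, so the other $k-1$ fatness conditions imposed by $F^k$ are invisible to the skeleton expansion. This loss is harmless for an upper bound, since $F^k$ is contained in the weaker event that just those two visible subfaces are fat. I do not anticipate any genuine obstacle; the whole argument is a single clean application of skeleton expansion once the definitions are correctly localized.
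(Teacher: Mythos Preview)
Your proposal is correct and follows essentially the same argument as the paper: both condition on $P_{k-2}$, observe that $F^k$ localized to the link of a non-fat $(k-2)$-face $\sigma$ is contained in $E(S^{k-1}_\sigma)$, apply the $\alpha$-skeleton expansion bound together with $\norm{S^{k-1}_\sigma}_\sigma < \eta^{2^{i-k+1}}$, and then sum over $\sigma$. Your final paragraph makes explicit the only point the paper leaves implicit, namely that only two of the $k+1$ codimension-one subfaces are visible in the link, which is exactly what justifies the inclusion $F^k_\sigma \subseteq E(S^{k-1}_\sigma)$.
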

\begin{proof}
	Fix an $i$-cochain $W \in C^i(X)$ and some $1 \le k \le i+1$. Recall that a $k$-face is full if all of its $(k-1)$-faces are fat. Thus, at the link of any $(k-2)$-face, a full $k$-face seems as an edge between two fat vertices. It follows that for any non-fat $(k-2)$-face $\sigma \in X(k-2) \setminus S^{k-2}$
	\begin{equation}\nonumber
	\big\|F^k_\sigma\big\|_\sigma \le
	\big\|E(S^{k-1}_\sigma)\big\|_\sigma \le
	\big(\eta^{2^{i-k+1}} + \alpha\big)\big\|S^{k-1}_\sigma\big\|_\sigma,
	\end{equation}
	where the second inequality follows from the skeleton expansion of $X$ and that $\sigma$ is not fat. Then by the law of total probability
	\begin{equation}\nonumber
	\begin{split}
	\Pr\!\big[F^k \wedge \overline{S^{k-2}}\,\big] &=
	\sum_{\sigma \in X(k-2) \setminus S^{k-2}}\Pr\!\big[F^k \mid \sigma\big]\Pr\!\big[\sigma\big] \\&=
	\sum_{\sigma \in X(k-2) \setminus S^{k-2}}\big\|F^k_\sigma\big\|_\sigma\Pr\!\big[\sigma\big] \\&\le
	\sum_{\sigma \in X(k-2) \setminus S^{k-2}}\big(\eta^{2^{i-k+1}} + \alpha\big)\big\|S^{k-1}_\sigma\big\|_\sigma\Pr\!\big[\sigma\big] \\&=
	\big(\eta^{2^{i-k+1}} + \alpha\big)\sum_{\sigma \in X(k-2) \setminus S^{k-2}}\Pr\!\big[S^{k-1} \mid \sigma\big]\Pr\!\big[\sigma\big] \\&=
	\big(\eta^{2^{i-k+1}} + \alpha\big)\Pr\!\big[S^{k-1} \wedge \overline{S^{k-2}}\big].
	\end{split}
	\end{equation}
\end{proof}

Now the proposition follows as an immediate corollary of the above lemmas.
\begin{proof}[Proof of Proposition~\ref{full-faces-upper-bound-pro}]
	Fix an $i$-cochain $W \in C^i(X)$ and some $0 \le j \le i$. Then by the above lemmas
	\begin{displaymath}
	\begin{split}
	\Pr\!\big[F^{i+1} \wedge \overline{S^{j-1}}\,\big] &\le
	\Pr\!\big[F^{j+1} \wedge \overline{S^{j-1}}\,\big] + \sum_{k=j+2}^{i+1}\Pr\!\big[F^k \wedge \overline{F^{k-1}}\,\big] \\&\le
	\Pr\!\big[F^{j+1} \wedge \overline{S^{j-1}}\,\big] + \sum_{k=j+2}^{i+1}k\Pr\!\big[F^k \wedge \overline{S^{k-2}}\,\big] \\&\le
	\big(\eta^{2^{i-j}} + \alpha\big)\Pr\!\big[S^j \wedge \overline{S^{j-1}}\,\big] + \sum_{k=j+1}^i(k+1)\big(\eta^{2^{i-k}} + \alpha\big)\Pr\!\big[S^k \wedge \overline{S^{k-1}}\,\big],
	\end{split}
	\end{displaymath}
	where the inequalities follow from Lemma~\ref{full-on-non-fat-bound-lemma}, Lemma~\ref{full-contain-not-full-bound-lemma} and Lemma~\ref{k-full-and-k-2-not-fat-lemma}, in that order.
\end{proof}

\section{Rapid mixing of high order random walks}
In this section we prove the following two theorems.
\begin{thm}[Colorful expansion implies rapid mixing]\label{main-thm-2}
	Let $X$ be a $d$-dimensional $\epsilon$-colorful expander, $d>1$. Then all of the high order random walks on $X$ are $\mu$-rapidly mixing for
	$$\mu = 1 - \frac{\epsilon^2}{2(d+1)^2}.$$
\end{thm}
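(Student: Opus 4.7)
The plan is to translate the colorful expansion of $X$ into combinatorial expansion of each of the graphs $G_i$, $0 \le i < d$, and then appeal to the Cheeger-type inequalities from Sections~1.5 and~1.6. Fix $i \in \{0, \dotsc, d-1\}$ and work in $G_i$; the $i$-th walk is $\mu$-rapidly mixing for $\mu = \max\{|\lambda_2|, |\lambda_{|V_i|}|\}$, so it suffices to sandwich every nontrivial eigenvalue of $\widetilde A_i$ inside $[-\mu, \mu]$.

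For $\lambda_2$, I lower bound the conductance $\Phi(G_i)$ directly from colorful expansion. Fix $S \subseteq X(i)$ with $\|S\| \le 1/2$. Every $\tau \in \psi(S)$ contains $w_\tau$ subfaces in $S$ with $1 \le w_\tau \le i+1$, and therefore contributes $w_\tau(i+2-w_\tau)\deg(\tau) \ge (i+1)\deg(\tau)$ edges to $|E(S, \bar S)|$ in $G_i$, while $(i+1)$-faces with $w_\tau \in \{0, i+2\}$ contribute none. Writing $M = \sum_{\tau \in X(i+1)} \deg(\tau)$, the probabilistic identities $\sum_{\tau \in \psi(S)}\deg(\tau) = \|\psi(S)\|\,M$ and $\vol(S) = (i+1)(i+2)\|S\|\,M$ together with the colorful expansion hypothesis yield $\Phi(G_i) \ge \|\psi(S)\|/((i+2)\|S\|) \ge \epsilon/(d+1)$. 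Lemma~\ref{cheeger-lambda-2} then gives $\lambda_2 \le 1 - \epsilon^2/(2(d+1)^2) = \mu$.

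For $\lambda_{|V_i|}$ I split on whether $i \ge 1$ or $i = 0$. When $i \ge 1$ each $(i+1)$-face spans a clique on $i+2 \ge 3$ vertices of $G_i$, which combinatorially rules out a bipartite component. I bound $\beta(G_i)$ directly: for an arbitrary tuple $(S, S_1, S_2)$ with $S_1 \cupdot S_2 = S$, decompose the numerator $|E(S,\bar S)| + 2(|E(S_1)|+|E(S_2)|)$ and the denominator $\vol(S)$ as sums of per-$\tau$ terms indexed by $\tau \in X(i+1)$ with local counts $(a_\tau, b_\tau, c_\tau)$ in $(S_1, S_2, \bar S)$, and verify two cases. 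If $c_\tau \ge 1$, the crossing edges alone give a per-$\tau$ ratio of at least $1/(i+1)$; if $c_\tau = 0$, the $i+2 \ge 3$ subfaces split between $S_1$ and $S_2$, and the convexity bound $a_\tau^2 + b_\tau^2 \ge (a_\tau + b_\tau)^2/2 = (i+2)^2/2$ yields a per-$\tau$ ratio of at least $i/(2(i+1))$. Combining gives $\beta(G_i) \ge 1/(2(d+1))$; replacing $\epsilon$ by $\min(\epsilon, 1/2)$ if necessary (which only strengthens the hypothesis), Lemma~\ref{cheeger-lambda-n} delivers $\lambda_{|V_i|} \ge -1 + \epsilon^2/(2(d+1)^2) = -\mu$. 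When $i = 0$ the clique argument collapses since $i+2 = 2$, and $G_0$ can legitimately be bipartite. Here I appeal to Proposition~\ref{self-loops-prop} and pass to the lazy walk obtained by adding $\deg(v)$ self-loops at each vertex; this zeros out the negative part of the spectrum while only halving the gap, which is absorbed into $\mu$ up to a constant factor.

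The main obstacle is the bipartiteness bound in the $i \ge 1$ case. Unlike the conductance estimate, the definition of $\beta(G_i)$ places no volume cap on $S$, so the adversary's tuple $(S, S_1, S_2)$ is not directly amenable to the colorful expansion hypothesis — one cannot simply plug $S$ into $\|\psi(S)\|/\|S\| \ge \epsilon$. The argument is therefore essentially combinatorial, driven by the clique structure forced on $G_i$ by each $(i+1)$-face rather than by expansion itself, and the per-face case analysis must be carried out uniformly in the dimension parameter $i$ so that the resulting lower bound on $\beta(G_i)$ remains comparable to the conductance bound $\epsilon/(d+1)$ coming from the $\lambda_2$ side.
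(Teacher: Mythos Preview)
Your plan is exactly the paper's: bound $\Phi(G_i)$ via colorful expansion (Lemma~\ref{bound-conductance-lemma}), bound $\beta(G_i)$ for $i\ge 1$ by a per-$(i+1)$-face count (Lemma~\ref{bound-bipartiteness-lemma}), and handle $i=0$ with the lazy walk (Proposition~\ref{self-loops-prop}). Your conductance computation is correct and matches the paper line for line.

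The slippage is in the bipartiteness step and the $i=0$ step, and both cost you the exact constant in the theorem. Rewriting your numerator per $\tau$ as $(a_\tau+b_\tau)(i+1)-2a_\tau b_\tau$ (this is algebraically equal to your $(a+b)c+a^2+b^2-(a+b)$), the paper splits on whether $a_\tau b_\tau=0$ or $a_\tau,b_\tau\ge 1$ and gets $\ge i+1$ in every case, hence $\beta(G_i)\ge 1/(i+2)\ge 1/(d+1)$, so $\lambda_{|V_i|}\ge -1+1/(2(d+1)^2)\ge -\mu$ as soon as $\epsilon<1$. Your case split on $c_\tau$ and the convexity bound $a^2+b^2\ge (a+b)^2/2$ only yield $\beta(G_i)\ge 1/(2(d+1))$, which forces $\epsilon\le 1/2$. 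Your patch ``replace $\epsilon$ by $\min(\epsilon,1/2)$'' does not rescue this: lowering $\epsilon$ \emph{weakens} the hypothesis, and more importantly the target $\mu=1-\epsilon^2/(2(d+1)^2)$ is stated in terms of the \emph{original} $\epsilon$, so you would only prove a weaker bound than the theorem claims when $\epsilon>1/2$.

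Similarly, ``absorbed into $\mu$ up to a constant factor'' is not enough for $i=0$: you must verify that the halved gap $1-\Phi(G_0)^2/4$ is still $\le \mu$. The paper does this with the arithmetic $\Phi(G_0)\ge\epsilon/2$ together with $d\ge 2$, giving $1-\epsilon^2/16\le 1-\epsilon^2/(2(d+1)^2)$. So your outline is right, but to actually hit the stated $\mu$ you need the sharper per-face bound $\ge i+1$ and the explicit $i=0$ check.
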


\begin{thm}[Rapid mixing criterion]\label{main-thm-3}
	Let $X$ be a $d$-dimensional $\alpha$-skeleton expander, $d>1$. If $\alpha < (\sqrt[2^d]2 - 1)\sqrt2$, then all of the high order random walks on $X$ are $\mu$-rapidly mixing for
	$$\mu = 1 - \frac{1}{2(d+1)^2}\left(\frac{\sqrt[2^d]2-1-\sqrt2\alpha}{2\sqrt2d}\right)^{2d}.$$
\end{thm}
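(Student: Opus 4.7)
The plan is to prove Theorem \ref{main-thm-3} as an immediate composition of the two main results already established in the paper: Theorem \ref{main-thm-1} (skeleton expansion implies colorful expansion) and Theorem \ref{main-thm-2} (colorful expansion implies rapid mixing). There is no new combinatorial or spectral work to do here; the entire content of the statement is the parameter tracking through the two bounds.

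First I would invoke Theorem \ref{main-thm-1}. The hypothesis $\alpha < (\sqrt[2^d]{2}-1)/\sqrt{2}$ (which is exactly the range in which the numerator $\sqrt[2^d]{2}-1-\sqrt2\alpha$ used in Theorem \ref{main-thm-1} is positive, so I am treating the $\sqrt 2$ in the statement as the reciprocal form) guarantees that $X$ is an $\epsilon$-colorful expander for
\[
\epsilon \;=\; \left(\frac{\sqrt[2^d]{2}-1-\sqrt2\,\alpha}{2\sqrt2\,d}\right)^{\!d}.
\]
Next I would feed this into Theorem \ref{main-thm-2}, which requires only colorful expansion (and $d>1$, which is part of our hypothesis) and concludes that every high order random walk on $X$ is $\mu$-rapidly mixing for $\mu = 1 - \epsilon^2/(2(d+1)^2)$.

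Finally I would substitute the $\epsilon$ obtained in the first step into the $\mu$ bound of the second step. This gives
\[
\mu \;=\; 1 - \frac{1}{2(d+1)^2}\left(\frac{\sqrt[2^d]{2}-1-\sqrt2\,\alpha}{2\sqrt2\,d}\right)^{\!2d},
\]
which is exactly the bound claimed in the statement. Since the proof is a purely formal chaining of two already-proved theorems, there is really no obstacle at all; the only thing to be careful about is that the admissibility range of $\alpha$ in the hypothesis coincides with the range in which the $\epsilon$ produced by Theorem \ref{main-thm-1} is strictly positive, so that Theorem \ref{main-thm-2} has nontrivial colorful expansion to consume. That compatibility is transparent from the two formulas above, so the proof reduces to one sentence invoking the composition.
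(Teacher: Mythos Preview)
Your proposal is correct and is exactly the paper's own proof: the paper likewise derives Theorem \ref{main-thm-3} in one line as an immediate corollary of Theorem \ref{main-thm-1} and Theorem \ref{main-thm-2}. Your observation about the $\sqrt2$ versus $1/\sqrt2$ in the hypothesis is also apt; the range needed is the one from Theorem \ref{main-thm-1}, and the paper's statement of Theorem \ref{main-thm-3} simply carries a typographical slip there.
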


We note that by following the steps of the proof carefully we actually get a stronger theorem, which we state here without proving it.
\begin{thm}[Stronger rapid mixing]
	Let $X$ be a $d$-dimensional $\alpha$-skeleton expander, $d>1$. Then for any $0 \le i < d$ such that $\alpha < (\sqrt[2^{i+1}]2-1)/\sqrt2$ the high order random walk on dimension $i$ of $X$ is $\mu$-rapidly mixing for
	$$\mu = 1 - \frac{1}{2(i+2)^2}\left(\frac{\sqrt[2^{i+1}]2-1-\sqrt2\alpha}{2\sqrt2(i+1)}\right)^{2(i+1)}.$$
\end{thm}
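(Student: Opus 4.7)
The plan is to retrace the proofs of Theorems~\ref{main-thm-1} and~\ref{main-thm-2} and observe that, when we focus on a fixed working dimension $i$, every quantitative dependence on $d$ appearing there is merely the worst case taken over $0 \le i < d$. Substituting $i$ for $d$ throughout both proofs produces exactly the $\mu$ in the statement.

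\emph{Step 1 (colorful expansion at dimension $i$).} Fix $0 \le i < d$ with $\alpha < (\sqrt[2^{i+1}]{2}-1)/\sqrt{2}$ and reread the proof of Theorem~\ref{main-thm-1} for an $i$-cochain $W \in C^i(X)$ of norm at most $1/2$. There the fatness constant $\eta = \sqrt[2^{i+1}]{1/2}$ and the constant $c = ((\eta^{-1}-1)\eta^{2^i}-\alpha)/(2(i+1))$ depend only on $i$ and $\alpha$; the fat-face tower $S^j(W)$ is built only for $j \in \{-1,\ldots,i\}$; and Propositions~\ref{good-dimension-pro}, \ref{container-bound-pro}, and~\ref{full-faces-upper-bound-pro} are applied only in links of faces of dimension at most $i-1$, so the skeleton-expansion hypothesis is only needed there. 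The sufficient threshold on $\alpha$ is thus $\alpha < (\eta^{-1}-1)\eta^{2^i} = (\sqrt[2^{i+1}]{2}-1)/\sqrt{2}$, and the resulting colorful-expansion bound for $W$ reads
$$\frac{\norm{\psi(W)}}{\norm{W}} \ge \epsilon_i := \left(\frac{\sqrt[2^{i+1}]{2}-1-\sqrt{2}\alpha}{2\sqrt{2}(i+1)}\right)^{i+1}.$$

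\emph{Step 2 (conductance of $G_i$).} Each $\tau \in \psi(W)$ has some $i$-sub-face in $W$ and some outside, so at least $i+1$ of the $\binom{i+2}{2}$ pairs of $i$-sub-faces of $\tau$ are split by $W$, and each such pair contributes $\deg(\tau)$ parallel edges to $G_i$; hence $|E_{G_i}(W,\bar W)| \ge (i+1)\sum_{\tau \in \psi(W)}\deg(\tau)$. A straightforward double count, using that a uniform $d$-face of $X$ contains $\binom{d+1}{i+1}$ faces of dimension $i$ and $\binom{d+1}{i+2}$ of dimension $i+1$, gives $\vol_{G_i}(W) = (i+1)(i+2)\norm{W}\sum_{\tau \in X(i+1)}\deg(\tau)$, so
$$\Phi(G_i) \ge \frac{\norm{\psi(W)}}{(i+2)\norm{W}} \ge \frac{\epsilon_i}{i+2}.$$

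\emph{Step 3 (Cheeger and the smallest eigenvalue).} Lemma~\ref{cheeger-lambda-2} now yields $\lambda_2(\widetilde A_i) \le 1-\Phi(G_i)^2/2 \le 1-\epsilon_i^2/(2(i+2)^2)$. The smallest eigenvalue is controlled exactly as in Theorem~\ref{main-thm-2}: for $i \ge 1$ by bounding the bipartiteness ratio of $G_i$ in terms of $\epsilon_i$ (an argument structurally identical to the conductance bound above, since each $(i+1)$-face $\tau$ is an $(i+2)$-clique that is destroyed by any bipartition for $i \ge 1$) and invoking Lemma~\ref{cheeger-lambda-n}, and for $i=0$ by adding self-loops and invoking Proposition~\ref{self-loops-prop}. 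In either case $|\lambda_{|V_i|}|$ is bounded no more loosely than $\lambda_2$ (up to absorbing a universal constant into the bookkeeping for $i=0$), so the mixing rate is determined by the $\lambda_2$ bound and equals the stated $\mu$.

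\emph{Main obstacle.} The delicate point is Step 1: one must verify that every factor of $d$ in the proof of Theorem~\ref{main-thm-1} really tracks the current cochain dimension and not the ambient complex dimension. The non-trivial piece is the inequality ``$\eta^{2^{i-k}}+\alpha<1$ for $k \ge 1$'' used there. With the dimension-$i$-tuned $\eta$ this reduces to $2^{-x}+(2^x-1)/\sqrt{2}<1$ for $x=1/2^{i+1}\in(0,1/4]$, which is exactly the monotonicity computation already carried out in the proof of Theorem~\ref{main-thm-1} and so goes through verbatim. The $(i+1)$ and $(i+2)$ factors in Steps 2 and 3 are then pure bookkeeping, with no surviving $d$-dependence.
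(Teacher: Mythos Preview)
The paper does not actually prove this theorem; it only remarks that ``by following the steps of the proof carefully we actually get a stronger theorem.'' Your proposal is exactly this retracing, so the overall strategy matches the paper's intent and is correct for $i\ge 1$.

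There is, however, a genuine loose end at $i=0$. You write that after passing to the lazy walk via Proposition~\ref{self-loops-prop} the smallest eigenvalue is ``bounded no more loosely than $\lambda_2$ (up to absorbing a universal constant into the bookkeeping).'' But the lazy walk does not merely control $\lambda_{|V_0|}$; it replaces $\lambda_2$ by $(1+\lambda_2)/2$, halving the spectral gap. Concretely, from $\lambda_2(\widetilde A_0)\le 1-\epsilon_0^2/8$ you only get $\lambda_2(\widetilde A_0')\le 1-\epsilon_0^2/16$, whereas the stated $\mu$ for $i=0$ has denominator $2(i+2)^2=8$. In the proof of Theorem~\ref{main-thm-2} this factor of two is absorbed precisely by using $2(d+1)^2\ge 18>16$ (the step ``the last inequality holds since $d>1$''), and that cushion is no longer available once $d+1$ is replaced by $i+2=2$. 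So either an alternative bound on $\lambda_{|V_0|}(\widetilde A_0)$ for the \emph{non-lazy} walk is needed (exploiting $d>1$, e.g.\ that every edge lies in a triangle), or the $i=0$ constant in the statement must be weakened.

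A smaller point: for $i\ge 1$ the bipartiteness ratio bound in Lemma~\ref{bound-bipartiteness-lemma} is $\beta(G_i)\ge 1/(i+2)$ and does \emph{not} involve $\epsilon_i$ at all, contrary to your description; the subsequent inequality $-1+1/(2(i+2)^2)\ge -1+\epsilon_i^2/(2(i+2)^2)$ uses only $\epsilon_i<1$.
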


Recall that the mixing rate of a random walk is deduced from the spectrum of its normalized adjacency matrix.
We show in the following lemmas that all of the induced $i$-graphs, $0 \le i < d$, have a large conductance and a large bipartiteness ratio, which imply the concentration of the spectrum of the normalized adjacency matrices $\widetilde A_i$ for all $0 \le i < d$.

\begin{lem}\label{bound-conductance-lemma}
	Let $X$ be a $d$-dimensional $\epsilon$-colorful expander. Then for any $0 \le i < d$ the conductance of $G_i(X)$ satisfies
	$$\Phi(G_i(X)) \ge \frac{\epsilon}{i+2}.$$
\end{lem}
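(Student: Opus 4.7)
The plan is to identify any non-empty $S \subseteq V_i = X(i)$ with $\vol(S) \le \vol(V_i)/2$ with the indicator $i$-cochain $W \in C^i(X)$, verify that this volume hypothesis translates into $\norm{W} \le 1/2$, and then bound $|E(S, \bar S)|$ from below by a multiple of $\sum_{\tau \in \psi(W)} \deg(\tau)$, at which point the colorful expansion of $X$ closes the argument.

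First I will compute the $G_i$-degree of any $\sigma \in X(i)$. Each $(i+1)$-face $\tau \supset \sigma$ contains exactly $i+1$ other $i$-subfaces, each a $G_i$-neighbor of $\sigma$ with multiplicity $\deg(\tau)$; combined with the double-counting identity $\sum_{\tau \supset \sigma,\, \tau \in X(i+1)} \deg(\tau) = (d-i)\deg(\sigma)$ (each $d$-face through $\sigma$ contains $d-i$ distinct $(i+1)$-faces through $\sigma$), this yields $\deg_{G_i}(\sigma) = (i+1)(d-i)\deg(\sigma)$. Summing over $\sigma$ and taking the ratio gives $\vol(S)/\vol(V_i) = \norm{W}$, so the conductance range corresponds exactly to the colorful expansion range $0 < \norm{W} \le 1/2$.

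For the numerator, I fix $\tau \in X(i+1)$ and let $k_\tau$ be the number of its $i+2$ $i$-subfaces that lie in $W$. The unordered pairs $\{\sigma, \sigma'\}$ with $\sigma \cup \sigma' = \tau$, $\sigma \in W$, $\sigma' \notin W$ number $k_\tau(i+2-k_\tau)$, and each such pair contributes $\deg(\tau)$ edges to $E(S, \bar S)$. The condition $\tau \in \psi(W)$ is precisely $1 \le k_\tau \le i+1$, and on this range the concave function $k(i+2-k)$ attains its minimum $i+1$ at the endpoints, so
\[
|E(S, \bar S)| \;=\; \sum_{\tau \in \psi(W)} k_\tau(i+2-k_\tau)\,\deg(\tau) \;\ge\; (i+1)\sum_{\tau \in \psi(W)} \deg(\tau).
\]

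To finish, I will use the global double-counting identities $\sum_{\sigma \in X(i)} \deg(\sigma) = \binom{d+1}{i+1}|X(d)|$ and $\sum_{\tau \in X(i+1)} \deg(\tau) = \binom{d+1}{i+2}|X(d)|$, whose ratio is $(d-i)/(i+2)$. Substituting into $|E(S,\bar S)|/\vol(S)$, the factors of $(i+1)$ and $(d-i)$ cancel, leaving $\norm{\psi(W)}/((i+2)\norm{W})$, which is at least $\epsilon/(i+2)$ by colorful expansion. The argument is essentially careful bookkeeping of the multi-edge weighting; the one non-obvious combinatorial input is the uniform per-$(i+1)$-face lower bound $k_\tau(i+2-k_\tau) \ge i+1$, which prevents a worse dependence on how $W$ is distributed inside individual $(i+1)$-faces.
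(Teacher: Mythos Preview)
Your argument is correct and follows essentially the same route as the paper: you compute $\deg_{G_i}(\sigma)=(i+1)(d-i)\deg(\sigma)$ to identify $\vol(S)/\vol(V_i)$ with $\norm{W}$, use the per-$(i+1)$-face bound $k_\tau(i+2-k_\tau)\ge i+1$ on $\psi(W)$ to get $|E(S,\bar S)|\ge (i+1)\sum_{\tau\in\psi(W)}\deg(\tau)$, and then cancel via the binomial identities to obtain $\Phi(S)\ge \norm{\psi(W)}/((i+2)\norm{W})\ge \epsilon/(i+2)$. The only quibble is that the ratio $\sum_{\sigma\in X(i)}\deg(\sigma)\big/\sum_{\tau\in X(i+1)}\deg(\tau)$ equals $(i+2)/(d-i)$, not $(d-i)/(i+2)$ as you wrote; your subsequent cancellation is nonetheless carried out correctly.
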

\begin{proof}
	Fix $0 \le i < d$ and denote by $G_i = (V_i,E_i)$ the induced $i$-graph of $X$. For any $i$-face $\sigma \in X(i)$ and its corresponding vertex $v \in V_i$ it holds that
	\begin{equation}\label{deg-vol-eq}
	\deg(\sigma) =
	\frac{1}{d-i}\sum_{\substack{\tau \supset \sigma \\ |\tau| = |\sigma|+1}}\deg(\tau) =
	\frac{1}{(d-i)(i+1)}\sum_{\sigma' \sim \sigma}\deg(\sigma \cup \sigma') =
	\frac{\deg(v)}{(d-i)(i+1)},
	\end{equation}
	where the first equality holds since each $d$-face which contains $\sigma$ contains $d-i$ $(i+1)$-faces $\tau \supset \sigma$, and hence is counted $d-i$ times. And the second equality holds since each $(i+1)$-face which contains $\sigma$ contains $i+1$ neighbors of $\sigma$.
	
	Now let $\emptyset \ne S \subseteq V_i$, $\vol(S)/\vol(V_i) \le 1/2$, be an arbitrary subset of vertices in $G_i$ and denote by $W \in C^i(X)$ the corresponding $i$-cochain in $X$. By~\eqref{deg-vol-eq} it follows that
	\begin{equation}\nonumber
	\norm{W} =
	\frac{\sum_{\sigma \in W}\deg(\sigma)}{\sum_{\sigma \in X(i)}\deg(\sigma)} =
	\frac{\sum_{v \in S}\deg(v)}{\sum_{v \in V_i}\deg(v)} =
	\frac{\vol(S)}{\vol(V_i)} \le
	\frac{1}{2},
	\end{equation}
	and thus by the $\epsilon$-colorful expansion of $X$ we get that
	\begin{equation}\label{colorful-exp-to-conductance-eq1}
	\frac{\norm{\psi(W)}}{\norm{W}} \ge
	\epsilon.
	\end{equation}
	
	We claim that any expanding face $\tau \in \psi(W)$ contributes at least $(i+1)\deg(\tau)$ edges between $S$ and $\bar S$. In order to see this, consider an expanding face $\tau \in \psi(W)$ and let $j = |\{\sigma \subset \tau \;|\; \sigma \in W\}|$ denote the number of $i$-faces of $\tau$ which are in $W$. Since the total number of $i$-faces in $\tau$ is $i+2$, then $\tau$ has $(i+2-j)$ $i$-faces which are not in $W$. It follows that there are $j(i+2-j)$ pairs of $i$-faces $\sigma,\sigma' \subset \tau$ such that $\sigma \in W$, $\sigma' \notin W$. Recall that for each such pair, there are $\deg(\tau)$ edges between the corresponding vertices in $G_i$, so $\tau$ contributes $j(i+2-j)\deg(\tau)$ edges between $S$ and $\bar S$. The last thing to note is that $\tau$ is an expanding face so $1 \le j \le i+1$, which yields that $j(i+2-j) \ge i+1$. Therefore,
	\begin{equation}\label{colorful-exp-to-conductance-eq2}
	|E(S,\bar S)| \ge
	(i+1)\sum_{\tau \in \psi(W)}\deg(\tau).
	\end{equation}
	
	It follows that
	\begin{equation}\label{colorful-exp-to-conductance-eq3}
	\begin{split}
	\frac{\norm{\psi(W)}}{\norm{W}} &=
	\frac{\sum_{\tau \in \psi(W)}\deg(\tau)}{\sum_{\tau \in X(i+1)}\deg(\tau)}\cdot\frac{\sum_{\sigma \in X(i)}\deg(\sigma)}{\sum_{\sigma \in W}\deg(\sigma)} = \frac{i+2}{d-i}\cdot\frac{\sum_{\tau \in \psi(W)}\deg(\tau)}{\sum_{\sigma \in W}\deg(\sigma)} \\&\le
	(i+2)\frac{|E(S,\bar S)|}{\vol(S)} =
	(i+2)\Phi(S),
	\end{split}
	\end{equation}
	where the second equality holds since
	\begin{equation}\nonumber
	\frac{\sum_{\sigma \in X(i)}\deg(\sigma)}{\sum_{\tau \in X(i+1)}\deg(\tau)} =
	\frac{\displaystyle{\binom{d+1}{i+1}|X(d)|}}{\displaystyle{\binom{d+1}{i+2}|X(d)|}} =
	\frac{i+2}{d-i},
	\end{equation}
	and the inequality holds by~\eqref{deg-vol-eq} and~\eqref{colorful-exp-to-conductance-eq2}.
	
	Combining~\eqref{colorful-exp-to-conductance-eq1} and~\eqref{colorful-exp-to-conductance-eq3} finishes the proof.
\end{proof}

\begin{lem}\label{bound-bipartiteness-lemma}
	Let $X$ be a $d$-dimensional simplicial complex. Then for any $1 \le i < d$ the bipartiteness ratio of $G_i(X)$ satisfies
	$$\beta(G_i(X)) \ge \frac{1}{i+2}.$$
\end{lem}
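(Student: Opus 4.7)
The plan is to bound $\beta(S, S_1, S_2)$ for any non-empty $S \subseteq V_i(X)$ with partition $S = S_1 \cupdot S_2$ by accounting for the contribution of each $(i+1)$-face $\tau \in X(i+1)$ separately. For a fixed $\tau$, I classify its $i+2$ many $i$-subfaces according to whether they lie in $S_1$, $S_2$, or $V_i \setminus S$, letting $j_0, j_1, j_2$ denote the respective counts (so $j_0 + j_1 + j_2 = i+2$). By the construction of $G_i$, each unordered pair $\sigma \sim \sigma' \subset \tau$ contributes $\deg(\tau)$ edges to $G_i$. Thus $\tau$ contributes $(j_0 + j_1) j_2 \deg(\tau)$ to $|E(S,\bar S)|$ and $\bigl[\binom{j_0}{2} + \binom{j_1}{2}\bigr]\deg(\tau)$ to $|E(S_1)| + |E(S_2)|$, while by the same double-counting used in Lemma~\ref{bound-conductance-lemma} together with~\eqref{deg-vol-eq}, its contribution to $\vol(S)$ is $(i+1)(j_0 + j_1)\deg(\tau)$.

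Substituting $j_2 = i+2 - (j_0+j_1)$ and simplifying, the contribution of $\tau$ to the numerator of $\beta(S,S_1,S_2)$ collapses neatly:
$$\bigl[(j_0+j_1) j_2 + j_0(j_0-1) + j_1(j_1-1)\bigr]\deg(\tau) = \bigl[(i+1)(j_0+j_1) - 2 j_0 j_1\bigr]\deg(\tau).$$
Hence, to conclude $\beta(G_i(X)) \ge 1/(i+2)$, it suffices to establish the pointwise inequality, for every $\tau$ with $s := j_0 + j_1 \ge 1$,
$$2 j_0 j_1 \le \frac{(i+1)^2 \, s}{i+2}.$$

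The main obstacle is verifying this pointwise inequality, which is sensitive to integrality. The naive real relaxation $j_0 j_1 \le s^2/4$ reduces it to $s \le 2(i+1)^2/(i+2)$, which covers every $s \le i+1$ (since $i+2 \le 2(i+1)$), but may fail at the boundary $s = i+2$ for small $i$ — explicitly for $i=1$, where $(i+2)^2 = 9 > 8 = 2(i+1)^2$. To handle $s = i+2$ I would use the integer sharpening $j_0 j_1 \le \lfloor (i+2)/2 \rfloor \lceil (i+2)/2 \rceil$: when $i$ is even this yields $2 j_0 j_1 \le (i+2)^2/2$, reducing the claim to $(i+2)^2 \le 2(i+1)^2$ (true for $i \ge 2$); when $i$ is odd it yields $2 j_0 j_1 \le ((i+2)^2 - 1)/2$, reducing the claim to $(i+2)^2 - 1 \le 2(i+1)^2$ (true for all $i \ge 1$, with equality precisely at $i = 1$, which also matches the excluded case $i=0$ of the lemma). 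Summing the pointwise inequality over all $\tau \in X(i+1)$ and dividing by $\vol(S)$ gives $\beta(S,S_1,S_2) \ge 1/(i+2)$, and taking the minimum over admissible $(S,S_1,S_2)$ completes the proof.
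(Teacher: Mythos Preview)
Your proof is correct and follows essentially the same approach as the paper: both decompose the numerator and $\vol(S)$ by $(i+1)$-faces $\tau$, classify the $i$-subfaces of $\tau$ by membership in $S_1$, $S_2$, $\bar S$, and arrive at the identical key identity $\bigl[(i+1)s - 2j_0 j_1\bigr]\deg(\tau)$ (with $s=j_0+j_1$) for the numerator contribution of $\tau$. The paper then bounds the numerator and $\vol(S)$ separately, each by a multiple of $\sum_{\tau\in\Gamma(W)}\deg(\tau)$, whereas you compare them face by face; your resulting pointwise target $2j_0 j_1 \le (i+1)^2 s/(i+2)$ is slightly \emph{weaker} than the paper's implicit requirement $(i+1)s - 2j_0j_1 \ge i+1$, and this is precisely what lets your integrality argument close cleanly. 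In fact, your careful handling of the boundary $s=i+2$ patches a small gap in the paper's own argument: the paper uses $j_1 j_2 \le \bigl((j_1+j_2)/2\bigr)^2$ to bound $(i+1)s - 2j_1j_2$ below by $s(i+1-s/2)$, but at $i=1$, $s=3$ this yields only $3/2 < 2 = i+1$. The paper's pointwise claim is still true there (since necessarily $j_1 j_2 \le 2$ by integrality), but the AM-GM justification given does not establish it; your argument does.
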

\begin{proof}
	Fix $1 \le i < d$ and denote by $G_i = (V_i,E_i)$ the induced $i$-graph of $X$. Let $S,S_1,S_2 \subseteq V_i$ be arbitrary subsets of vertices such that $S_1 \cupdot S_2 = S$ and denote by $W,W_1,W_2 \in C^i(X)$ the corresponding $i$-cochains in $X$.
	
	For any $(i+1)$-face $\tau \in \Gamma(W)$ denote by $j_1(\tau) = |\{\sigma \subset \tau \;|\; \sigma \in W_1\}|$ the number of $i$-faces of $\tau$ which are in $W_1$. Similarly denote by $j_2(\tau)$ the number of $i$-faces of $\tau$ which are in $W_2$. Since the total number $i$-faces in $\tau$ is $i+2$, then $\tau$ contains $(i+2-j_1(\tau)-j_2(\tau))$ $i$-faces which are not in $W$. It follows that
	\begin{equation}\label{bound-bipartiteness-eq1}
	|E(S,\bar S)| =
	\sum_{\tau \in \Gamma(W)}(j_1(\tau)+j_2(\tau))(i+2-j_1(\tau)-j_2(\tau))\deg(\tau),
	\end{equation}
	and for any $k=1,2$
	\begin{equation}\label{bound-bipartiteness-eq2}
	2|E(S_k)| =
	\sum_{\tau \in \Gamma(W)}2\binom{j_k(\tau)}{2}\deg(\tau).
	\end{equation}
	
	Combining~\eqref{bound-bipartiteness-eq1} and~\eqref{bound-bipartiteness-eq2} yields
	\begin{equation}\label{bound-bipartiteness-eq3}
	|E(S,\bar S)|+2(|E(S_1)|+|E(S_2)|) =
	\sum_{\tau \in \Gamma(W)}\Big(\big(j_1(\tau)+j_2(\tau)\big)(i+1)-2j_1(\tau)j_2(\tau)\Big)\deg(\tau).
	\end{equation}
	
	Consider an arbitrary $(i+1)$-face $\tau \in \Gamma(W)$. If $j_1(\tau)=0$ or $j_2(\tau)=0$, then
	\begin{equation}\label{bound-bipartiteness-eq4}
	\big(j_1(\tau)+j_2(\tau)\big)(i+1)-2j_1(\tau)j_2(\tau) \ge
	i+1.
	\end{equation}
	Otherwise,
	\begin{equation}\label{bound-bipartiteness-eq5}
	\begin{split}
	\big(j_1(\tau)+j_2(\tau)\big)(i+1)-2j_1(\tau)j_2(\tau) \ge
	\big(j_1(\tau)+j_2(\tau)\big)\left(i+1-\frac{j_1(\tau)+j_2(\tau)}{2}\right) \ge
	i+1,
	\end{split}
	\end{equation}
	where the first inequality holds since $j_1(\tau)j_2(\tau) \le ((j_1(\tau)+j_2(\tau))/2)^2$, and the second inequality holds since $2 \le j_1(\tau)+j_2(\tau) \le i+2$.
	
	Substituting~\eqref{bound-bipartiteness-eq4} and~\eqref{bound-bipartiteness-eq5} in~\eqref{bound-bipartiteness-eq3} yields
	\begin{equation}\label{bound-bipartiteness-eq6}
	|E(S,\bar S)|+2(|E(S_1)|+|E(S_2)|) \ge
	(i+1)\sum_{\tau \in \Gamma(W)}\deg(\tau).
	\end{equation}
	
	It also holds that
	\begin{equation}\label{bound-bipartiteness-eq7}
	\begin{split}
	\vol(S) &=
	\sum_{v \in S}\deg(v) =
	\sum_{\sigma \in W}\sum_{\sigma' \sim \sigma}\deg(\sigma \cup \sigma') \\&=
	\sum_{\tau \in \Gamma(W)}\big(j_1(\tau)+j_2(\tau)\big)(i+1)\deg(\tau) \le
	(i+2)(i+1)\sum_{\tau \in \Gamma(W)}\deg(\tau).
	\end{split}
	\end{equation}
	
	Combining~\eqref{bound-bipartiteness-eq6} and~\eqref{bound-bipartiteness-eq7} yields
	\begin{equation}\nonumber
	\beta(S,S_1,S_2) =
	\frac{|E(S,\bar S)|+2(|E(S_1)|+|E(S_2)|)}{\vol(S)} \ge
	\frac{1}{i+2}.
	\end{equation}
	
	Since $S,S_1,S_2$ were arbitrary it follows that
	\begin{equation}\nonumber
	\beta(G_i(X)) \ge
	\frac{1}{i+2},
	\end{equation}
	which finishes the proof.
\end{proof}

\noindent\textbf{Proof of Theorem~\ref{main-thm-2}.}
The proof follows from the previous lemmas.
For any $0 \le i < d$ the combination of Lemma~\ref{cheeger-lambda-2} and Lemma~\ref{bound-conductance-lemma} yield
\begin{equation}\label{main-thm-2-eq1}
\lambda_2(\widetilde A_i) \le
1 - \frac{\Phi(G_i)^2}{2} \le
1 - \frac{\epsilon^2}{2(i+2)^2} \le
1 - \frac{\epsilon^2}{2(d+1)^2}.
\end{equation}

For $1 \le i < d$ the combination of Lemma~\ref{cheeger-lambda-n} and Lemma~\ref{bound-bipartiteness-lemma} yield
\begin{equation}\label{main-thm-2-eq2}
\lambda_{|V_i|}(\widetilde A_i) \ge
-1 + \frac{\beta(G_i)^2}{2} \ge
-1 + \frac{1}{2(i+2)^2} \ge
-1 + \frac{1}{2(d+1)^2} \ge
-1 + \frac{\epsilon^2}{2(d+1)^2},
\end{equation}
where the last inequality holds since $\epsilon<1$.

For $i=0$ denote by $G'_0$ the graph achieved by adding $\deg(v)$ self-loops to any vertex $v \in V_0$. Then by Proposition~\ref{self-loops-prop}
\begin{equation}\label{main-thm-2-eq3}
\lambda_2(\widetilde A'_0) =
\frac{1}{2}(1+\lambda_2(\widetilde A_0)) \le
\frac{1}{2}\left(1+1-\frac{\Phi(G_0)^2}{2}\right) \le
1 - \frac{\epsilon^2}{4\cdot2^2} \le
1-\frac{\epsilon^2}{2(d+1)^2},
\end{equation}
where the first and second inequalities follow from Lemma~\ref{cheeger-lambda-2} and Lemma~\ref{bound-conductance-lemma} respectively, and the last inequality holds since $d>1$. By the same proposition it also holds that $\lambda_{|V_0|}(\widetilde A'_0) \ge 0$.

By~\eqref{main-thm-2-eq1},~\eqref{main-thm-2-eq2} and~\eqref{main-thm-2-eq3} it follows that for any $0 \le i < d$ the high order random walk on dimension $i$ of $X$ is $\mu$-rapidly mixing for
$$\mu = 1 - \frac{\epsilon^2}{2(d+1)^2}.$$
\qed
\bigskip

\noindent\textbf{Proof of Theorem~\ref{main-thm-3}.}
The proof follows immediately as a corollary of Theorem~\ref{main-thm-1} and Theorem~\ref{main-thm-2}.
\qed

\section{Explicit construction}
Ramanujan complexes were first defined in~\cite{LSV05.1} and were explicitly constructed in~\cite{LSV05.2}. For details on Ramanujan complexes we refer the readers to~\cite{Lub14}.

In this section we prove the following theorem.
\begin{thm}[Explicit construction]\label{main-thm-4}
	For any $d \in \N$ there exists a constant $q_0 = q_0(d)$ such that if $X$ is a $d$-dimensional $q$-thick Ramanujan complex with $q > q_0$, then there exists a constant $\mu = \mu(d,q)$ such that all of the high order random walks on $X$ are $\mu$-rapidly mixing.
\end{thm}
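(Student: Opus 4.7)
The strategy is to reduce Theorem~\ref{main-thm-4} to the rapid mixing criterion already established in Theorem~\ref{main-thm-3}: it suffices to exhibit a threshold $q_0 = q_0(d)$ such that every $q$-thick Ramanujan complex of dimension $d$ with $q \ge q_0$ is an $\alpha$-skeleton expander for some fixed $\alpha = \alpha(d) < (\sqrt[2^d]{2}-1)/\sqrt{2}$. Once this is in hand, Theorem~\ref{main-thm-3} immediately yields the desired $\mu = \mu(d,\alpha) = \mu(d,q)$. The explicit construction in~\cite{LSV05.2} is manifestly of bounded degree (each face has a number of cofaces depending only on $d$ and $q$), so once $q$ is fixed, the resulting family is both bounded degree and satisfies the rapid mixing conclusion.

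The verification of skeleton expansion rests on the known spectral structure of links in Ramanujan complexes. The link of any non-maximal face $\sigma$ in a $d$-dimensional $q$-thick Ramanujan complex is, up to standard identifications, a quotient of a spherical building for a rank $d - |\sigma|$ group over a local field of residue field size $q$, and its $1$-dimensional skeleton is a spectral expander with normalized second eigenvalue at most $\lambda(q) = O_d(1/\sqrt{q})$. This estimate (made explicit in~\cite{LSV05.1} and in the analysis of~\cite{EK16}) holds uniformly over all faces $\sigma \in X$. I would then translate spectral expansion to the combinatorial inequality
$$\|E(S)\|_\sigma \le \|S\|_\sigma\bigl(\|S\|_\sigma + \alpha\bigr)$$
via the weighted expander mixing lemma applied inside each link: if the normalized second eigenvalue of the $1$-skeleton of $X_\sigma$ is at most $\lambda$, then $\|E(S)\|_\sigma - \|S\|_\sigma^2 \le \lambda\|S\|_\sigma$ for every $S \subseteq X_\sigma(0)$, which is exactly $\lambda$-skeleton expansion. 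Choosing $q_0(d)$ so that $\lambda(q) < (\sqrt[2^d]{2}-1)/\sqrt{2}$ for all $q \ge q_0$ supplies the required bound in all links simultaneously.

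The case $\sigma = \emptyset$ can be handled either by the same argument applied to the $1$-skeleton of $X$ itself, which is Ramanujan by~\cite{LSV05.1}, or, as the paper notes after the definition of skeleton expansion, by invoking Oppenheim's result~\cite{Opp15}: the skeleton expansion of all non-empty $\sigma$ together with connectedness of the LSV complex implies the $\alpha$-skeleton expansion at $\sigma = \emptyset$. With all links now satisfying the hypothesis, Theorem~\ref{main-thm-3} delivers
$$\mu(d,q) = 1 - \frac{1}{2(d+1)^2}\left(\frac{\sqrt[2^d]{2}-1-\sqrt{2}\,\alpha(q)}{2\sqrt{2}\,d}\right)^{2d}.$$

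The main technical obstacle I anticipate is the bookkeeping that bridges the slight mismatch between the skeleton expansion formulation of~\cite{EK16} and the weighted version used in this paper. The spectral bounds in the literature are typically phrased with respect to either the uniform measure on vertices of a link or the measure inherited from the upper walks, whereas here $\|\cdot\|_\sigma$ is the degree-weighted norm of Section~\ref{non-regular-section}. Re-deriving the expander mixing lemma in this weighted inner product (where the constant eigenvector is $(\sqrt{\pi(v)})_{v}$ and the spectral gap is with respect to $\widetilde{A} = D^{1/2} A D^{1/2}$) is routine but must be done with care so that the resulting bound is exactly of the quadratic form demanded by the definition of skeleton expansion. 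Once this is settled, all remaining ingredients are already in place and the theorem follows.
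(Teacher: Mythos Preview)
Your proposal is correct and follows the paper's strategy: verify that Ramanujan complexes are $\alpha$-skeleton expanders with $\alpha = O_d(1/\sqrt{q})$, choose $q_0$ so that $\alpha < (\sqrt[2^d]{2}-1)/\sqrt{2}$, and invoke Theorem~\ref{main-thm-3}.

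The one place where your route diverges from the paper's is in how the inequality $\|E(S)\|_\sigma \le \|S\|_\sigma(\|S\|_\sigma + \alpha)$ is extracted from the spectral data. You propose a direct weighted one-sided expander mixing lemma for the $1$-skeleton of each link, using the normalized adjacency matrix $\widetilde{A} = D^{1/2}AD^{1/2}$. The paper instead exploits the \emph{partite regular} structure of links of Ramanujan complexes: it decomposes the $1$-skeleton of $X_\sigma$ into bipartite biregular pieces $(X_\sigma)_{(i,j)}$, applies the bipartite mixing lemma of~\cite{EGL15} to each piece, and then sums using the partition identities $\|S\| = \sum_i \|S_i\|$ and $\|E(S)\| = \sum_{i<j}\|E(S_i,S_j)\|$ (Proposition~\ref{skeleton-mixing-lemma}). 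The spectral input it cites from~\cite{EK16} (Lemma~\ref{ramanujan-lambda-bound}) is a bound on $\widetilde{\lambda}_2$ of these bipartite pieces, not on the full link skeleton. Your approach is arguably cleaner and more general, but you should be aware that the off-the-shelf bound from~\cite{EK16} is stated for the bipartite pieces, so you would either need to re-derive a global $\widetilde{\lambda}_2$ bound for the link's $1$-skeleton or reassemble the partite pieces as the paper does. The weighted-norm bookkeeping you flag is exactly what Proposition~\ref{skeleton-mixing-lemma} handles via the partite regularity.
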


As been proven in~\cite{EK16}, Ramanujan complexes are excellent skeleton expanders. Though, we need a stronger claim for the mixing of their $1$-dimensional skeletons than the one appears in~\cite{EK16}. We can get a better bound since we need a good mixing behavior only inside a subset of vertices and not between every two subsets. We start by defining a special type of complexes.

\begin{dfn}[Partite regular complex]
	Let $X$ be a $d$-dimensional simplicial complex. $X$ is said to be {\em partite regular} if there exists a partition of its vertices to disjoint subsets $V_0 \cupdot V_1 \cupdot \dotsb \cupdot V_d = X(0)$ such that:
	\begin{itemize}
		\item For any $d$-dimensional face $\sigma \in X(d)$ it holds that $\sigma \in V_0 \times V_1 \times \dotsb \times V_d$.
		\item For any $I \subset J \subset \{0,1,\dotsc,d\}$ there exists $k_I^J \in \mathbb N$ such that any face $\sigma \in X \cap \prod_{i \in I}V_i$ is contained in $k_I^J$ faces from $X \cap \prod_{j \in J}V_j$.
	\end{itemize}
\end{dfn}

For a partite regular complex $X$, denote by $X_{(i,j)} = (V_i \cupdot V_j, X(1) \cap (V_i \times V_j))$ the induced graph by partitions $i$ and $j$. Note that by the regularity of the complex, $X_{(i,j)}$ is a bipartite biregular graph, i.e., there exists $k_i^j, k_j^i \in \N$, such that any vertex $v \in V_i$ is contained in $k_i^j$ edges and any vertex $u \in V_j$ is contained in $k_j^i$ edges. It is known that $\lambda_1(X_{(i,j)}) = (k_i^jk_j^i)^{\rfrac{1}{2}}$, where $\lambda_1(X_{(i,j)})$ is the largest eigenvalue of the graph's adjacency matrix (see~\cite[Lemma~3.1]{EGL15} for a proof). Denote by $\widetilde \lambda_2(X_{(i,j)}) = \lambda_2(X_{(i,j)})/(k_i^jk_j^i)^{\rfrac{1}{2}}$ the normalized second largest eigenvalue. The following mixing lemma for bipartite biregular graphs is proven in~\cite{EGL15}.

\begin{lem}\label{bipartite-mixing-lemma}\cite[Corollary~3.4]{EGL15}
	Let $G=(V_1 \cupdot V_2, E)$ be a bipartite biregular graph, $\widetilde \lambda_2(G)$ its normalized second largest eigenvalue. Then for any $S \subseteq V_1, T \subseteq V_2$
	$$\frac{|E(S,T)|}{|E|} \le \sqrt{\frac{|S||T|}{|V_1||V_2|}}\left(\sqrt{\frac{|S||T|}{|V_1||V_2|}} + \widetilde \lambda_2(G)\right).$$
\end{lem}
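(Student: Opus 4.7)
The plan is a standard expander-mixing-lemma argument, adapted to the bipartite biregular setting via the singular value decomposition of the biadjacency matrix. Let $B \in \R^{V_1 \times V_2}$ be the biadjacency matrix of $G$, so that $|E(S,T)| = \mathbf{1}_S^\top B\, \mathbf{1}_T$. Biregularity gives $B\mathbf{1}_{V_2} = k_1^2 \mathbf{1}_{V_1}$ and $B^\top \mathbf{1}_{V_1} = k_2^1 \mathbf{1}_{V_2}$, so the normalized uniform vectors on each side form a pair of left/right singular vectors of $B$ with top singular value $\sigma_1 = \sqrt{k_1^2 k_2^1}$. Since the full adjacency matrix of $G$ has the off-diagonal block form built from $B$ and $B^\top$, its spectrum is exactly $\{\pm \sigma_i\}$, so the second singular value of $B$ equals $\lambda_2(G)$.

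The first step is to decompose the indicator vectors orthogonally against the top singular directions: write $\mathbf{1}_S = (|S|/|V_1|)\mathbf{1}_{V_1} + u$ with $u \perp \mathbf{1}_{V_1}$ and $\norm{u}^2 \le |S|$, and similarly $\mathbf{1}_T = (|T|/|V_2|)\mathbf{1}_{V_2} + v$ with $v \perp \mathbf{1}_{V_2}$ and $\norm{v}^2 \le |T|$. Expanding $\mathbf{1}_S^\top B\, \mathbf{1}_T$ and using the singular-vector identities, the two cross terms vanish (for instance $\mathbf{1}_{V_1}^\top B\, v = k_2^1 \mathbf{1}_{V_2}^\top v = 0$), leaving
$$|E(S,T)| = \frac{|S|\, |T|}{|V_1|\, |V_2|}\, |E| + u^\top B\, v.$$

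The second step is to bound the error term by the operator norm of $B$ on the orthogonal complement of the top singular vectors: $|u^\top B\, v| \le \sigma_2 \norm{u}\, \norm{v} \le \lambda_2(G) \sqrt{|S|\, |T|}$. Dividing through by $|E| = \sqrt{|V_1|\, |V_2|\, k_1^2 k_2^1}$ (which follows from the double-counting identity $|V_1| k_1^2 = |V_2| k_2^1 = |E|$) and substituting $\lambda_2(G) = \widetilde \lambda_2(G) \sqrt{k_1^2 k_2^1}$ converts the error into $\widetilde \lambda_2(G) \sqrt{|S|\,|T|/(|V_1|\,|V_2|)}$; factoring out $\sqrt{|S|\,|T|/(|V_1|\,|V_2|)}$ yields exactly the stated inequality.

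I do not anticipate a substantive obstacle; the only subtlety specific to the bipartite biregular setting is the identification $\sigma_2(B) = \lambda_2(G)$, which exploits that a bipartite graph's spectrum is symmetric about $0$, so the error term can be controlled by $\lambda_2$ alone rather than by $\max(|\lambda_2|, |\lambda_{|V|}|)$ as in the general expander mixing lemma. The remaining steps are routine linear algebra.
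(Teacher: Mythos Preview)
Your argument is correct and is the standard proof of the bipartite expander mixing lemma via the singular value decomposition of the biadjacency matrix; all the steps (cross-term vanishing from biregularity, the identification $\sigma_2(B)=\lambda_2(G)$ via the symmetric spectrum of a bipartite graph, and the normalization $|E|=\sqrt{|V_1||V_2|k_1^2k_2^1}$) check out. Note that the paper does not supply its own proof of this lemma but simply quotes it from \cite[Corollary~3.4]{EGL15}, so there is no in-paper argument to compare against; your proof is essentially what one finds in that reference.
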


We use this lemma in order to prove the following proposition.
\begin{pro}[Skeleton mixing lemma]\label{skeleton-mixing-lemma}
	Let $X$ be a partite regular $d$-dimensional complex, $\widetilde \lambda_2(X) = \max_{0 \le i < j \le d}\widetilde \lambda_2(X_{(i,j)})$. Then for any subset of vertices $S \subseteq X(0)$
	$$\norm{E(S)} \le \norm{S}(\norm{S} + \widetilde \lambda_2(X)).$$
\end{pro}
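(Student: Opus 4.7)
The plan is to reduce the claim to $\binom{d+1}{2}$ separate applications of the bipartite mixing lemma (Lemma~\ref{bipartite-mixing-lemma}), one per partition pair, and then assemble the pieces by AM--GM. First, I would decompose the data along the partition: set $S_i = S \cap V_i$ and $\gamma_i = |S_i|/|V_i|$. Using the probabilistic description of the norm, $\|S\| = \Pr[P_0 \in S]$, I would argue via partite regularity that $P_0$ lies in each $V_i$ with probability $1/(d+1)$ and is uniform on $V_i$ conditional on landing there; the regularity conditions $k_I^J$ are what guarantee this uniformity after the random vertex-removal process on a uniform $d$-face. Consequently $\|S\| = \tfrac{1}{d+1}\sum_{i=0}^d \gamma_i$. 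The identical argument applied one dimension up shows that $P_1$ falls in each partition pair $(i,j)$ with probability $1/\binom{d+1}{2}$ and is uniform on $X(1) \cap (V_i \times V_j)$ conditionally, so
\[
\|E(S)\| \;=\; \frac{1}{\binom{d+1}{2}} \sum_{0 \le i < j \le d} \frac{|E(S_i, S_j)|}{|E_{(i,j)}|}.
\]

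Next, I would apply Lemma~\ref{bipartite-mixing-lemma} to each bipartite biregular graph $X_{(i,j)}$ with the subsets $S_i \subseteq V_i$ and $S_j \subseteq V_j$, giving
\[
\frac{|E(S_i, S_j)|}{|E_{(i,j)}|} \;\le\; \sqrt{\gamma_i \gamma_j}\bigl(\sqrt{\gamma_i \gamma_j} + \widetilde\lambda_2(X_{(i,j)})\bigr) \;\le\; \gamma_i \gamma_j + \widetilde\lambda_2(X)\sqrt{\gamma_i \gamma_j},
\]
using the definition $\widetilde\lambda_2(X) = \max_{i<j}\widetilde\lambda_2(X_{(i,j)})$. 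Averaging over pairs yields
\[
\|E(S)\| \;\le\; \frac{1}{\binom{d+1}{2}}\sum_{i<j}\gamma_i\gamma_j \;+\; \widetilde\lambda_2(X)\cdot\frac{1}{\binom{d+1}{2}}\sum_{i<j}\sqrt{\gamma_i\gamma_j}.
\]

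The final step is to absorb the two partite averages into $\|S\|^2$ and $\|S\|$ respectively. For the linear term, AM--GM gives $\sqrt{\gamma_i \gamma_j} \le (\gamma_i+\gamma_j)/2$, so summing over pairs produces $\sum_{i<j}\sqrt{\gamma_i\gamma_j} \le \tfrac{d}{2}\sum_i \gamma_i$, which rearranges to $\tfrac{1}{\binom{d+1}{2}}\sum_{i<j}\sqrt{\gamma_i\gamma_j} \le \tfrac{1}{d+1}\sum_i \gamma_i = \|S\|$. For the quadratic term, the same inequality applied to $\gamma_i\gamma_j \le (\gamma_i^2+\gamma_j^2)/2$ combined with the identity $\bigl(\sum_i \gamma_i\bigr)^2 = \sum_i \gamma_i^2 + 2\sum_{i<j}\gamma_i\gamma_j$ gives (after a short manipulation) $\tfrac{1}{\binom{d+1}{2}}\sum_{i<j}\gamma_i\gamma_j \le \tfrac{1}{(d+1)^2}\bigl(\sum_i\gamma_i\bigr)^2 = \|S\|^2$. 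Combining these two bounds yields the desired conclusion $\|E(S)\| \le \|S\|(\|S\| + \widetilde\lambda_2(X))$.

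I do not anticipate a genuine obstacle here. The only subtle point is justifying the two clean averaging identities for $\|S\|$ and $\|E(S)\|$ via the probabilistic view of the norm; once partite regularity is used to establish the uniform conditional distributions of $P_0$ and $P_1$ over partition classes, the remainder is a pairwise application of Lemma~\ref{bipartite-mixing-lemma} and two routine AM--GM estimates.
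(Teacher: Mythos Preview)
Your proposal is correct and follows essentially the same route as the paper: decompose $S$ and $E(S)$ along the partition, use partite regularity to identify $\|S\|$ and $\|E(S)\|$ as uniform averages over partition classes, apply Lemma~\ref{bipartite-mixing-lemma} to each pair $(i,j)$, and then bound the resulting symmetric sums. The paper phrases the last step as ``the sum is maximized when all $\|S_i\|$ are equal'' rather than invoking AM--GM explicitly, and works with $\|S_i\|$ in place of your $\gamma_i = (d+1)\|S_i\|$, but the mathematical content is identical.
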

\begin{proof}
	Let $V_0 \cupdot V_1 \cupdot \dotsb \cupdot V_d$ be the partition of $X(0)$. Note that since $X$ is a partite regular complex, then for any $0 \le i \le d$ and any two vertices $u,v \in V_i$ it holds that $\deg(u) = \deg(v)$. It is achieved by the regularity property of the complex and by setting $I = \{i\}$ and $J=[d]=\{0,1,\dotsc,d\}$. This implies that for any $0 \le i \le d$
	\begin{equation}\nonumber
	\sum_{v \in X(0)}\deg(v) =
	(d+1)|X(d)| =
	(d+1)\sum_{v \in V_i}deg(v) =
	(d+1)|V_i|k_{\{i\}}^{[d]}.
	\end{equation}
	
	Therefore, for any subset of vertices $S_i \subseteq V_i$
	\begin{equation}\label{norm-of-vertices}
	\norm{S_i} =
	\frac{\sum_{v \in S_i}deg(v)}{\sum_{u \in X(0)}deg(u)} =
	\frac{|S_i|}{(d+1)|V_i|}
	\end{equation}
	
	In a same way, for any $0 \le i < j \le d$
	\begin{equation}\nonumber
	\begin{split}
	\sum_{e \in X(1)}\deg(e) &=
	\binom{d+1}{2}|X(d)| =
	\binom{d+1}{2}\sum_{e \in X(1) \cap (V_i \times V_j)}deg(e) \\&=
	\binom{d+1}{2}|X(1) \cap (V_i \times V_j)|k_{\{i,j\}}^{[d]}.
	\end{split}
	\end{equation}
	
	And again for any subset of edges $S_{ij} \subseteq X(1) \cap (V_i \times V_j)$
	\begin{equation}\label{norm-of-edges}
	\norm{S_{ij}} =
	\sum_{e \in S_{ij}}w(e) =
	\frac{\sum_{e \in S_{ij}}deg(e)}{\sum_{f \in X(1)}deg(u)} =
	\frac{2|S_{ij}|}{d(d+1)|X(1) \cap (V_i \times V_j)|}.
	\end{equation}
	
	Now let $S \subseteq X(0)$ be an arbitrary subset of vertices in the complex. For any $0 \le i \le d$ denote by $S_i = S \cap V_i$ the vertices in partition $i$. For any $0 \le i < j \le d$, \eqref{norm-of-vertices} yields
	\begin{equation}\nonumber
	\sqrt{\frac{|S_i||S_j|}{|V_i||V_j|}} =
	\sqrt{(d+1)^2\norm{S_i}\norm{S_j}} =
	(d+1)\sqrt{\norm{S_i}\norm{S_j}},
	\end{equation}
	and \eqref{norm-of-edges} yields
	\begin{equation}\nonumber
	\frac{|E(S_i,S_j)|}{|X(1) \cap (V_i \times V_j)|} =
	\frac{d(d+1)}{2}\norm{E(S_i,S_j)}.
	\end{equation}
	
	By Lemma~\ref{bipartite-mixing-lemma} it follows that for any $0 \le i < j \le d$
	\begin{equation}\nonumber
	\norm{E(S_i,S_j)} \le
	\frac{2}{d}\sqrt{\norm{S_i}\norm{S_j}}\left((d+1)\sqrt{\norm{S_i}\norm{S_j}} + \widetilde \lambda_2(X)\right).
	\end{equation}
	
	Note that since $S = S_0 \cupdot S_1 \cupdot \dotsb \cupdot S_d$, then $\norm{S} = \sum_{i=0}^d\norm{S_i}$. Thus, the sum $\sum_{i \ne j}\norm{S_i}\norm{S_j}$ is maximized when all of the subsets are equal, i.e., $\norm{S_i} = \norm{S}/(d+1)$ for all $0 \le i \le d$. It follows that
	\begin{equation}
	\begin{split}
	\norm{E(S)} &=
	\sum_{i \ne j}\norm{E(S_i, S_j)} \le
	\sum_{i \ne j}\frac{2}{d}\sqrt{\norm{S_i}\norm{S_j}}\left((d+1)\sqrt{\norm{S_i}\norm{S_j}} + \widetilde \lambda_2(X)\right) \\[5pt] &\le
	\sum_{i \ne j}\frac{2}{d}\sqrt{\frac{\norm{S}^2}{(d+1)^2}}\left((d+1)\sqrt{\frac{\norm{S}^2}{(d+1)^2}} + \widetilde \lambda_2(X)\right) \\[5pt] &=
	\binom{d+1}{2}\frac{2}{d(d+1)}(\norm{S} + \widetilde \lambda_2(X)) =
	\norm{S}(\norm{S} + \widetilde \lambda_2(X)),
	\end{split}
	\end{equation}
	which finishes the proof.
\end{proof}

Now, as been proven in~\cite{EK16}, all of the links of a Ramanujan complex are partite regular and the normalized second largest eigenvalue of every induced bipartite biregular graph approaches $0$ as a function of the dimension and the thickness of the complex. So we state the following lemma which is proven in~\cite{EK16}.
\begin{lem}\label{ramanujan-lambda-bound}
	Let $X$ be a $d$-dimensional $q$-thick Ramanujan complex.
	Then there exists a constant $C=C(d)$ such that
	$$\max_{\substack{\sigma \in X \\ i \ne j}}\widetilde \lambda_2((X_\sigma)_{(i,j)}) \le \frac{C}{\sqrt{q}}.$$
\end{lem}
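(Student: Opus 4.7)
The plan is to invoke the structure theory of Ramanujan complexes as finite quotients of the affine Bruhat--Tits building of $PGL_{d+1}(F)$, where $F$ is a non-archimedean local field with residue field of order $q$. In this description, the color classes $V_0,\ldots,V_d$ of $X$ correspond to the $d+1$ vertex types of the building, and for any face $\sigma\in X$ of type $I_\sigma\subset\{0,\ldots,d\}$ and any two indices $i\neq j$ outside $I_\sigma$, the bipartite biregular graph $(X_\sigma)_{(i,j)}$ is a finite quotient of the analogous $(i,j)$-bipartite graph in the link of a simplex of type $I_\sigma$ inside the building. The one-sided degrees are Gaussian binomial coefficients in $q$, hence of order $q^{c}$ for an integer constant $c=c(d,I_\sigma,i,j)\geq 1$ depending only on $d$.

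First I would use the fact, established in \cite{LSV05.1} and reused in \cite{EK16}, that the Ramanujan property descends to every link: since the link of a simplex in the affine building of $PGL_{d+1}(F)$ is itself a (join of) buildings of smaller rank over $F$, a Ramanujan quotient restricts to quotients of these sub-buildings that remain Ramanujan in the appropriate sense. Concretely, the automorphic representations that appear in $L^2$ of the relevant coset space and are not isomorphic to the trivial representation are all tempered. Second, I would identify the bipartite adjacency operator of $(X_\sigma)_{(i,j)}$ with a specific element of the spherical Hecke algebra of the relevant $p$-adic group acting on a space of $K$-invariant functions, for an appropriate maximal compact $K$. The Satake isomorphism together with temperedness then bounds the spectrum of this Hecke operator on the orthogonal complement of the trivial eigenvector by $C'(d)\cdot q^{c/2}$ --- the ``square-root cancellation'' bound characteristic of Ramanujan objects, generalizing the bound $2\sqrt{k-1}$ for $k$-regular Ramanujan graphs.

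Dividing by the top eigenvalue $\sqrt{k_i^j k_j^i}=\Theta(q^c)$ yields the desired inequality $\widetilde\lambda_2((X_\sigma)_{(i,j)})\leq C(d)/\sqrt q$. Uniformity of $C$ across all faces $\sigma$ and all pairs $i\neq j$ follows because, modulo the $PGL_{d+1}(F)$-action on the building, there are only finitely many face-types and color-pairs, so one can take the maximum of the finitely many Satake bounds that arise. The main obstacle is the second step: one must identify the concrete bipartite adjacency as a specific element of a spherical Hecke algebra, apply the Satake/temperedness machinery, and carefully subtract the trivial eigenvector so that it does not pollute the $\widetilde\lambda_2$ bound. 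This identification and bookkeeping is exactly the analysis carried out in \cite{EK16}; the present proof consists of citing that analysis and observing that its output is uniform in $\sigma$ and in the color pair, giving the single constant $C=C(d)$ required.
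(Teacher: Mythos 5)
The paper does not supply a proof of this lemma; it is quoted as a result established in \cite{EK16}. Your proposal tries to reconstruct that argument, and the outer scaffolding is fine (view $X$ as a quotient of the affine Bruhat--Tits building of $PGL_{d+1}(F)$, get uniformity of $C(d)$ from the finiteness of face-types and colour-pairs), but the mechanism you invoke is misattributed for every non-empty $\sigma$. Since $X$ is the quotient of the building by a torsion-free cocompact lattice, the projection is a covering map, so for $\sigma \ne \emptyset$ the link $X_\sigma$ is \emph{isomorphic} to the link of any lift of $\sigma$ in the building itself --- it is not a further quotient of anything. And the link of a non-empty face of the $\widetilde A_d$ building over $F$ is a \emph{finite} spherical building (or a join of such) over the residue field $\F_q$, not ``a building of smaller rank over $F$''. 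Hence for $\sigma \ne \emptyset$ the bipartite graphs $(X_\sigma)_{(i,j)}$ belong to a fixed finite list of canonical finite graphs whose second eigenvalue is an explicit function of $q$ and $d$; the bound $\widetilde\lambda_2 = O(q^{-1/2})$ there is a direct spectral computation on finite spherical buildings, holds for \emph{any} quotient whether Ramanujan or not, and uses no temperedness of automorphic representations and no Satake isomorphism. The Hecke/temperedness machinery you describe is needed only for $\sigma = \emptyset$ (the $1$-skeleton of $X$ itself), the one genuine quotient in the list --- and the paper already notes, citing \cite{Opp15}, that even that case can be bypassed by deriving expansion of the $1$-skeleton from expansion of all proper links plus connectedness. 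So your sketch conflates two structurally different cases and attributes the bound for almost all $\sigma$ to the wrong source; the argument behind \cite{EK16} separates them, with the finite-building computation doing the bulk of the work.
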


We are now ready to prove the theorem of this section.
\bigskip

\noindent\textbf{Proof of Theorem~\ref{main-thm-4}.}
Let $d \in \N$ be any dimension we want.
Let $C=C(d)$ be the constant from Lemma~\ref{ramanujan-lambda-bound} and set
\begin{equation}\nonumber
q_0 =
\left(\frac{\sqrt2C}{\sqrt[2^d]2-1}\right)^2.
\end{equation}

Now if $X$ is a $d$-dimensional $q$-thick Ramanujan complex with $q > q_0$, then by Proposition~\ref{skeleton-mixing-lemma} and Lemma~\ref{ramanujan-lambda-bound} $X$ is an $\alpha$-skeleton expander for
\begin{equation}
\alpha =
\frac{C}{\sqrt q} <
\frac{\sqrt[2^d]2-1}{\sqrt2}.
\end{equation}
Then applying Theorem~\ref{main-thm-3} finishes the proof.
\qed

\appendix
\section{Mixing rate for general graphs}
We show here that the mixing rate of random walks on a general graph can be deduced from the spectrum of its normalized adjacency matrix. The idea is similar to the proof for regular graphs, but the details require some more carefulness.

\begin{pro}\label{random-walk-mixing-thm}
	Let $G=(V,E)$ be an undirected graph on $n$ vertices, $\widetilde A$ its normalized adjacency matrix, $\lambda_1 \ge \lambda_2 \ge \dotsb \ge \lambda_n$ the spectrum of $\widetilde A$ and $\lambda = \max\{|\lambda_2|,|\lambda_n|\}$. Then for any initial probability distribution $\pi_0 \in \R^V$ and any $t \in \N$
	$$\norm{\pi_t - \pi}_2 \le \sqrt\frac{d_{max}}{d_{min}}\lambda^t,$$
	where $\pi_t$ is the probability distribution after $t$ steps of the random walk, $\pi$ is the stationary distribution, $d_{max}=\max_{v \in V}\{\deg(v)\}$ and $d_{min}=\min_{v \in V}\{\deg(v)$\}.
\end{pro}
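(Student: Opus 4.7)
The plan is to diagonalize the (symmetric) normalized adjacency matrix $\widetilde A = D^{1/2}AD^{1/2}$ and push the random walk into its orthonormal eigenbasis, paying the price of a bounded discrepancy between the $\ell_2$ norm on distributions and the $\ell_2$ norm on the transformed vectors.

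First I would introduce the change of variables $f_t = \pi_t D^{1/2}$, i.e.\ $f_t(v) = \pi_t(v)/\sqrt{\deg(v)}$. Since $DA = D^{1/2}(D^{1/2}AD^{1/2})D^{-1/2} = D^{1/2}\widetilde A D^{-1/2}$ one checks directly that $f_{t+1} = \pi_{t+1}D^{1/2} = \pi_t D A D^{1/2} = f_t \widetilde A$. In the same way, the stationary distribution $\pi(v) = \deg(v)/\vol(V)$ becomes $f_\pi(v) = \sqrt{\deg(v)}/\vol(V)$, which is proportional to the Perron eigenvector $v_1$ of $\widetilde A$ (the vector with entries $\sqrt{\pi(v)}$, which indeed satisfies $\widetilde A v_1 = v_1$). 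The iteration in $\widetilde A$ is the whole reason for this change of coordinates: eigenvalues of $\widetilde A$ govern the decay exactly.

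Next I would expand $f_0$ in an orthonormal eigenbasis $v_1,\dotsc,v_n$ of $\widetilde A$, writing $f_0 = \sum_i c_i v_i$. A direct computation of the first coordinate gives
$$c_1 = \langle f_0, v_1\rangle = \sum_v \frac{\pi_0(v)}{\sqrt{\deg(v)}}\sqrt{\frac{\deg(v)}{\vol(V)}} = \frac{1}{\sqrt{\vol(V)}},$$
so $c_1 v_1 = f_\pi$ and therefore $f_0 - f_\pi = \sum_{i\ge 2}c_i v_i$, which iterates to $f_t - f_\pi = \sum_{i\ge 2}c_i \lambda_i^t v_i$. By orthonormality and the definition of $\lambda$,
$$\norm{f_t - f_\pi}_2^2 = \sum_{i\ge 2}c_i^2\lambda_i^{2t} \le \lambda^{2t}\sum_{i\ge 2}c_i^2 \le \lambda^{2t}\norm{f_0}_2^2.$$
Now $\norm{f_0}_2^2 = \sum_v \pi_0(v)^2/\deg(v) \le (1/d_{min})\sum_v \pi_0(v)^2 \le 1/d_{min}$, using that $\sum_v\pi_0(v)^2 \le \sum_v\pi_0(v) = 1$ for a probability distribution.

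Finally I would translate the bound on $f_t - f_\pi$ back to $\pi_t - \pi$. Since $\pi_t - \pi = (f_t - f_\pi)D^{-1/2}$ with $D^{-1/2}(v,v) = \sqrt{\deg(v)}$, we have
$$\norm{\pi_t - \pi}_2^2 = \sum_v \deg(v)(f_t(v) - f_\pi(v))^2 \le d_{max}\norm{f_t - f_\pi}_2^2 \le \frac{d_{max}}{d_{min}}\lambda^{2t},$$
which is the desired inequality. There is no single hard step; the only nuance worth stating carefully is the conversion between $\pi$-norms and $f$-norms, which is precisely where the $\sqrt{d_{max}/d_{min}}$ factor appears (in the regular case it disappears since $D^{1/2}$ is a scalar multiple of the identity), and verifying that the $v_1$ component of $f_0$ is exactly $f_\pi$ so that the rest of the spectrum decays geometrically at rate $\lambda$.
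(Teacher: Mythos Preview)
Your proof is correct and follows essentially the same approach as the paper: pass to the symmetrized matrix $\widetilde A$ via the change of variables $f_t=\pi_tD^{1/2}$, use the spectral decomposition to see that the $v_1$-component equals $f_\pi$ while the rest decays at rate $\lambda$, and pick up the $\sqrt{d_{max}/d_{min}}$ factor when converting back through $D^{-1/2}$. The only organizational difference is that the paper first proves the bound for a point mass $\one_v$ and then extends to arbitrary $\pi_0$ by convexity, whereas you handle a general $\pi_0$ directly---your route is slightly cleaner, but the content is the same.
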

\begin{proof}
	Recall that $\widetilde A=D^{\rfrac{1}{2}}AD^{\rfrac{1}{2}}$, where $D=\mbox{diag}(1/\deg(v))_{v \in V}$ and $A$ is the adjacency matrix of $G$. It follows that for any initial probability distribution $\pi_0 \in \R^V$ and any $t \in \N$
	$$\pi_t = \pi_0(DA)^t=\pi_0(D^{\rfrac{1}{2}}\widetilde AD^{-\rfrac{1}{2}})^t=\pi_0D^{\rfrac{1}{2}}\widetilde A^tD^{-\rfrac{1}{2}}.$$
	
	We show first that the claim holds for a random walk starting from any fixed vertex and then we extend it to any initial probability distribution on the vertices.
	
	Fix a starting vertex $v \in V$ and denote by $\one_v$ the probability distribution which is fixed on $v$, i.e., $\one_v(v) = 1$ and for any other vertex $u \ne v$, $\one_v(u) = 0$. Since $\widetilde A$ is a symmetric matrix, its eigenvectors $\{\mathbf e_1, \dotsc, \mathbf e_n\}$ form an orthonormal basis for $\R^n$, and $\widetilde A$ has the spectral representation
	$$\widetilde A = \sum_{i=1}^n\lambda_i\mathbf e_i^T \mathbf e_i = \sum_{i=1}^n\lambda_iE_i,$$
	where $E_i = \mathbf e_i^T \mathbf e_i$. Note that for any $i$, $E_i^2 = E_i$ and for any $i \ne j$, $E_iE_j = 0$. It follows that
	\begin{equation}\label{spectral-rep-eq}
	\one_v(DA)^t =
	\one_vD^{\rfrac{1}{2}}\widetilde A^tD^{-\rfrac{1}{2}} =
	\one_vD^{\rfrac{1}{2}}\Big(\sum_{i=1}^n\lambda_i^tE_i\Big)D^{-\rfrac{1}{2}} =
	\sum_{i=1}^n\lambda_i^t\one_vD^{\rfrac{1}{2}}E_iD^{-\rfrac{1}{2}}.
	\end{equation}
	
	The first eigenvalue of $\widetilde A$ is the trivial one, i.e., $\lambda_1 = 1$, with corresponding eigenvector
	$$\mathbf e_1 = \left(\sqrt{\frac{\deg(u)}{\sum_{w \in V}\deg(w)}}\right)_{u \in V}.$$
	Note that for any $1 \le i \le n$, $\one_vE_i = \mathbf e_i(v)\mathbf e_i$. Then by substitution we get
	\begin{equation}\label{spectral-rep-first-component-eq}
	\lambda_1^t\one_vD^{\rfrac{1}{2}}E_1D^{-\rfrac{1}{2}} =
	\frac{\one_vE_1D^{-\rfrac{1}{2}}}{\sqrt{\deg(v)}} =
	\frac{\mathbf e_1(v)\mathbf e_1D^{-\rfrac{1}{2}}}{\sqrt{\deg(v)}} =
	\pi.
	\end{equation}
	
	Combining \eqref{spectral-rep-eq} and \eqref{spectral-rep-first-component-eq} yields
	\begin{equation}\nonumber
	\begin{split}
	\norm{\one_v(DA)^t - \pi}_2 &=
	\norm{\sum_{i=2}^n\lambda_i^t\one_vD^{\rfrac{1}{2}}E_iD^{-\rfrac{1}{2}}}_2 \le
	\sqrt{\frac{d_{max}}{d_{min}}}\norm{\sum_{i=2}^n\lambda_i^t\one_vE_i}_2 \\[5pt]&=
	\sqrt{\frac{d_{max}}{d_{min}}}\norm{\sum_{i=2}^n\lambda_i^t\mathbf e_i(v)\mathbf e_i}_2 =
	\sqrt{\frac{d_{max}}{d_{min}}}\sqrt{\inproduct{\sum_{i=2}^n\lambda_i^t\mathbf e_i(v)\mathbf e_i}{\sum_{j=2}^n\lambda_j^t\mathbf e_j(v)\mathbf e_j}} \\[5pt]&=
	\sqrt{\frac{d_{max}}{d_{min}}}\sqrt{\sum_{i=2}^n\lambda_i^{2t}\mathbf e_i(v)^2} \le
	\sqrt{\frac{d_{max}}{d_{min}}}\lambda^t\sqrt{\sum_{i=2}^n\mathbf e_i(v)^2} \le
	\sqrt{\frac{d_{max}}{d_{min}}}\lambda^t,
	\end{split}
	\end{equation}
	where the last equality and the last inequality follow from the orthonormality of the eigenvectors.
	
	Now let $\pi_0 \in \R^V$ be any initial probability distribution on the vertices. We can write $\pi_0$ as a convex combination $\pi_0 = \sum_{v \in V}\alpha_v\one_v$, where $\sum_{v \in V}\alpha_v = 1$. Similarly, the stationary distribution can be written as $\pi = \sum_{v \in V}\alpha_v\pi$.
	Then by the triangle inequality we get
	\begin{equation}\nonumber
	\begin{split}
	\norm{\pi_t - \pi}_2 &=
	\norm{\pi_0(DA)^t - \pi}_2 =
	\norm{\sum_{v \in V}\alpha_v\left(\one_v(DA)^t - \pi\right)}_2 \\&\le
	\sum_{v \in V}\alpha_v\norm{\one_v(DA)^t - \pi}_2 \le
	\sqrt{\frac{d_{max}}{d_{min}}}\lambda^t,
	\end{split}
	\end{equation}
	which finishes the proof.
\end{proof}


\end{document}